\documentclass[12pt,a4paper]{article}
\usepackage[english]{babel}
\usepackage[whole]{bxcjkjatype}
\usepackage{parskip}

\usepackage[T1]{fontenc}
\usepackage[utf8]{inputenc}
\usepackage{lmodern}
\usepackage{parskip}

\usepackage{geometry}
\usepackage{verbatim}
\usepackage{float}
\usepackage{bm}
\usepackage{amsmath}
\usepackage{amssymb}
\usepackage{amsfonts}
\usepackage{amsthm}
\usepackage{mathtools}
\usepackage{graphicx}
\usepackage{dsfont}
\usepackage{array}
\usepackage{mathrsfs}
\usepackage{color}
\usepackage{enumitem}
\usepackage{authblk}
\usepackage{booktabs}

\numberwithin{equation}{section}
\mathtoolsset{showonlyrefs=true}

\usepackage{multirow}
\usepackage{makecell}
\usepackage{arydshln}
\usepackage[font=small]{caption}
\usepackage{subcaption}
\usepackage{tikz}
\usetikzlibrary{arrows.meta}

\usepackage{algorithm}
\usepackage{algpseudocode}

\RequirePackage[authoryear]{natbib}

\RequirePackage[colorlinks,citecolor=blue,urlcolor=blue]{hyperref}
\usepackage[dvipsnames]{xcolor}
\usepackage{cleveref}

\usepackage{pgfplots}
\pgfplotsset{compat=1.18}
\usepackage{appendix}

\allowdisplaybreaks

\theoremstyle{plain}
\newtheorem{theorem}{Theorem}[section]
\newtheorem{lemma}{Lemma}[section]
\newtheorem{proposition}{Proposition}[section]
\newtheorem{corollary}{Corollary}[section]
\newtheorem{assumption}{Assumption}[section]

\newtheorem{example}{Example}[section]

\theoremstyle{definition}
\newtheorem{remark}{Remark}[section]

\crefname{appendix}{appendix}{appendices}

\newcommand{\EE}{\mathbb{E}}

\newcommand{\PP}{\mathbb{P}}

\newcommand{\RR}{\mathbb{R}}

\newcommand{\cC}{\mathcal{C}}
\newcommand{\cD}{\mathcal{D}}

\newcommand{\cI}{\mathcal{I}}

\newcommand{\cK}{\mathcal{K}}

\newcommand{\cP}{\mathcal{P}}

\newcommand{\cR}{\mathcal{R}}
\newcommand{\cS}{\mathcal{S}}
\newcommand{\cT}{\mathcal{T}}

\newcommand{\dd}{\mathop{}\!\mathrm{d}}
\DeclareMathOperator{\Var}{\mathrm{Var}}

\DeclareMathOperator{\KL}{\mathrm{KL}}

\DeclareMathOperator{\Bern}{\mathrm{Bern}}

\DeclareMathOperator{\cov}{cov}

\newcommand{\1}{\mathbf{1}}

\newcommand{\expit}{\operatorname{expit}}
\newcommand{\logit}{\operatorname{logit}}
\newcommand{\NR}{\mathrm{nr}}

\newcommand{\obs}{\mathrm{obs}}
\newcommand{\Ext}{\mathrm{E}}

\newcommand{\distH}{d_{\mathrm{H}}}
\newcommand{\op}{o_{\PP}}
\newcommand{\Op}{O_{\PP}}

\title{
First-Crossing Reduction and Inference\\
for No-Rescue Effects under Deterministic Rescue
}
\author[1,2,$\ast$]{Shu Tamano}
\affil[1]{\small
Department of Multidisciplinary Sciences, Graduate School of Arts and Sciences, The University of Tokyo, 3-8-1 Komaba, Meguro-ku, Tokyo 153-8902, Japan
}
\affil[2]{
\small
Department of Epidemiology, National Institute of Infectious Diseases, Japan Institute for Health Security, 1-23-1 Toyama, Shinjuku-ku, Tokyo 162-0052, Japan
}
\affil[$\ast$]{\small
Email:
\href{mailto:tamano-shu212@g.ecc.u-tokyo.ac.jp}
{\texttt{tamano-shu212@g.ecc.u-tokyo.ac.jp}}
}
\date{}

\begin{document}
\maketitle

\begin{abstract}
    Clinical protocols may require rescue medication when a patient's condition crosses a prespecified threshold.
    The outcome under continued non-rescue is then unobserved after first crossing, preventing point identification without extrapolation assumptions.
    In this paper, we develop sharp partial identification and inference for this hypothetical estimand while retaining the original no-rescue intervention.
    We show that bounding this mean reduces exactly to restricting conditional mean effects of withholding rescue at first-crossing histories.
    Building on this representation, restrictions on prespecified crossing-stratum averages yield three results.
    (i) A finite convex program gives sharp bounds without modeling post-crossing trajectories, attained by full-data laws that preserve the observed distribution.
    (ii) Confidence intervals incorporate sampling uncertainty and numerical error to cover the entire sharp set, including strata with no observed crossings.
    (iii) Under local feasibility conditions, sampling and numerical errors control endpoint accuracy even with zero-probability strata, whereas valid coverage requires no such conditions.
    Our experiments show that the stratum-average restrictions can yield narrower confidence intervals in larger samples, and that accounting for uncertainty in empty strata maintains valid coverage when rescue is rare.
\end{abstract}
\noindent
\textit{
Keywords:
Convex optimization;
Intercurrent events;
Longitudinal causal inference;
Rescue medication.
}

\noindent
\textit{
2010 Mathematics Subject Classification.
Primary 62G15;
Secondary 62P10, 90C25.
}

\section{Introduction}
\label{sec:introduction}

When a clinical protocol mandates rescue medication at a safety or severity threshold, continued non-rescue has conditional probability zero at the triggering history.
The mean outcome under a strategy that withholds rescue throughout follow-up can be a scientific target.
In the ICH E9(R1) framework, this target corresponds to a hypothetical strategy \citep{ich2019addendum,olarteparra2023hypothetical}.
It differs from a treatment-policy estimand that allows rescue.
However, no matter how flexible the outcome regression, it cannot restore the missing comparison, even in a randomized trial.
Therefore, an analysis retaining the no-rescue target must state what it assumes about the unobserved continuation \citep{robins1986a,petersen2012diagnosing}.

One approach to inadequate support is to change the intervention.
Stochastic interventions, longitudinal modified treatment policies, incremental propensity score interventions, and longitudinal flip interventions provide alternatives with different support requirements \citep{diazmunoz2012population,kennedy2019incremental,diaz2023nonparametric,mcclean2026longitudinal}.
For rescue medication specifically, \citet{michiels2021novel} propose an estimand that links rescue use under active treatment to rescue use under control.
These methods identify useful supported quantities, but do not identify the unsupported no-rescue intervention without further assumptions.

A second approach retains the target and specifies the unobserved outcomes through a sensitivity model.
Pattern-mixture models and controlled multiple imputation, including delta adjustments and reference-based assumptions, provide established frameworks for doing so \citep{little1993pattern,carpenter2013analysis,cro2020sensitivity}.
In the rescue setting, the sensitivity parameter is the average difference between the outcome under continued non-rescue and the observed outcome after rescue, given the history at which rescue was triggered.
A fixed value of this parameter plays the role of a delta adjustment and selects one extrapolation.
A class of values yields an identification region, and tipping-point analysis examines which values would change the scientific conclusion.

Partial identification retains uncertainty about the missing outcomes rather than selecting one extrapolation.
The outcome range provides basic bounds, which additional scientific restrictions can narrow \citep{manski1990nonparametric,horowitz2000nonparametric}.
The distinction between an identification region and a confidence region covering that entire region is also established in the literature on ignorance and uncertainty regions \citep{vansteelandt2006ignorance} and set inference \citep{chernozhukov2007estimation,romano2010inference}.
Related work derives sharp off-policy bounds under smoothness \citep{khan2024offpolicy}, longitudinal welfare orderings and bounds under instrumental-variable restrictions \citep{chen2023estimating,han2024optimal}, and sensitivity analyses for sequential unmeasured confounding and marginal structural models \citep{bonvini2022sensitivity,tan2025sensitivity}.
Sharp dynamic-regime bounds have also been studied under contemporaneous confounding and restrictions on state transitions \citep{alvarez2026sharp}.
These assumptions differ from restrictions on the effect of withholding a deterministic rescue course.
In particular, a sequential density-ratio sensitivity model cannot supply a comparison conditional on the complete trigger when the non-rescue probability is zero.
Coarsening the history can restore overlap, but changes which selection restrictions are required.

Longitudinal bounds must also preserve compatibility across visits.
Conditional means of the same potential outcome satisfy iterated expectation, and sharp extrema must be attainable under a single joint law of observed and potential variables.
Restrictions linking visits or patient groups preclude independent worst-case choices.
General discrete causal-bounding methods formulate polynomial programs and incorporate simultaneous confidence regions for observed probabilities \citep{duarte2024automated}.
However, explicit enumeration of longitudinal completions can be demanding, and estimated inputs introduce a further difficulty when important histories are rare or absent and optimized endpoints are sensitive to perturbations.

In this paper, we develop sharp partial identification and inference for no-rescue means and treatment contrasts under a deterministic rescue rule that, once triggered, remains in force.
We establish that, along histories up to first rescue, the conditional means of the final no-rescue outcome that are consistent with the observed data are characterized by the average effects of withholding rescue at the histories where rescue is first triggered.
These are precisely the sensitivity parameters introduced above.
Observed conditional expectations propagate them to earlier visits, so no model for the post-crossing course is required.
To obtain a finite and clinically interpretable sensitivity model, we group the histories at first crossing into prespecified strata, for example by crossing visit and baseline risk.
The model bounds the stratum-average effects and the probability-weighted sum of their squared deviations from reference effects, without requiring effects to be constant within a stratum.

Our contributions are threefold.
First, we construct a joint law realizing every specification of these parameters that respects the outcome range, in both treatment arms, while preserving the entire observed distribution.
This proves causal sharpness rather than only feasibility of conditional-mean equations.
Second, the stratum-average model reduces exactly to a finite convex program whose inputs are observed outcome means and stratum-level crossing probabilities and outcome contributions.
The program retains the joint restrictions and admits provable inner and outer numerical bounds without dividing by estimated stratum probabilities.
Third, projecting a simultaneous confidence region for these observed quantities through the program yields confidence intervals for the entire sharp set, including strata with no observed crossings.
Additional local feasibility conditions give endpoint-error bounds even with zero-probability strata.
These conditions are not needed for coverage.

The rest of the paper is organized as follows.
Section~\ref{sec:setting} defines the causal problem.
Sections~\ref{sec:geometry} and~\ref{sec:algorithm} establish the structural and computational reductions, and Section~\ref{sec:statistics} develops inference.
Sections~\ref{sec:numerical} and~\ref{sec:discussion} present the numerical study and discuss scope and limitations.
Appendix~\ref{app:proofs} contains proofs, and Appendix~\ref{app:detailed-numerical} gives detailed experimental methods and additional results.

\section{Problem setup}
\label{sec:setting}

This section formalizes the observed data and potential trajectories, and isolates the unobserved component of the no-rescue mean that requires sensitivity restrictions.

\subsection{Observed data and baseline identification}

We first specify the data structure and baseline identification conditions.
The sample contains $n$ independent and identically distributed patient records with law $P$ and generic record
\begin{equation*}
    O
    =
    (W,A,L_1,D_1,\ldots,L_K,D_K,Y)
    .
\end{equation*}
Here $W$ contains baseline covariates, $A\in\{0,1\}$ is baseline treatment, and $L_k$ is the state measured before the rescue decision at visit $k\in\{1,\ldots,K\}$.
The indicator $D_k\in\{0,1\}$ records whether rescue has been initiated by visit $k$, not an individual dose.
The final outcome $Y$ is measured after visit $K$ and satisfies $y_-\le Y\le y_+$ almost surely for known finite constants $y_-<y_+$.
The horizon $K\ge1$ is fixed and finite, all state spaces are nonempty standard Borel spaces, and within-patient dependence is unrestricted.

The rescue rule is formulated using the histories available before each decision.
Write $\bar{L}_k=(L_1,\ldots,L_k)$ and $\bar{D}_k=(D_1,\ldots,D_k)$, with empty vectors at $k=0$, and define the decision history as $X_k=(W,A,\bar{L}_{k-1},\bar{D}_{k-1},L_k)$.
Let $z_k$ be a known measurable function taking value $0$ at histories that trigger rescue and $1$ otherwise.
The protocol satisfies
\begin{equation}
    \label{eq:rule}
    T
    =
    \min\{k\in\{1,\ldots,K\}:z_k(X_k)=0\}
    ,
    \quad
    D_k
    =
    \1(T\le k)
    ,
\end{equation}
with $T=K+1$ if no trigger occurs and $\1$ denoting an indicator.
Therefore, rescue is absorbing, and trigger values after its initiation do not affect $T$.

Potential trajectories are introduced next to state the baseline identification conditions.
For $a\in\{0,1\}$, let $O^a=(W,a,L_1^a,D_1^a,\ldots,L_K^a,D_K^a,Y^a)$ denote the trajectory under baseline treatment $a$ and protocol rescue.
Let $\bar{L}_K^{a,\NR}=(L_1^{a,\NR},\ldots,L_K^{a,\NR})$ and $Y^{a,\NR}$ denote the state trajectory and final outcome under the same treatment with rescue withheld throughout follow-up.
The no-rescue outcome has the same known range.
Write $g_a(w)=P(A=a\mid W=w)$ for the baseline propensity.
\begin{assumption}[Baseline identification]
    \label{ass:baseline}
    Baseline consistency holds, $O=O^A$ almost surely.
    The joint collection of potential processes $(O^0,O^1,\bar{L}_K^{0,\NR},\bar{L}_K^{1,\NR},Y^{0,\NR},Y^{1,\NR})$ is independent of $A$ given $W$.
    For each $a$, $g_a(W)>0$ almost surely under the marginal law of $W$.
\end{assumption}
These conditions concern baseline treatment assignment.
Randomization supplies the baseline independence and known assignment probabilities, but does not supply non-rescue observations at triggering histories.

Baseline standardization identifies the treatment-specific protocol law.
Let $P_a$ be the law of $O^a$, retaining the marginal distribution of $W$ under $P$, and let $\EE_a$ denote expectation under $P_a$.
Writing $\EE$ for expectation under $P$, Assumption~\ref{ass:baseline} gives
\begin{equation}
    \label{eq:arm}
    \EE_a[f]
    =
    \EE\bigl[
        \omega_a f(O)
    \bigr]
    ,
    \quad
    \omega_a
    =
    \frac{\1(A=a)}{g_a(W)}
    ,
\end{equation}
for every measurable $f$ integrable under $P_a$.
In arm-specific expressions, unadorned trajectory symbols refer to $O^a$.
Expectations involving no-rescue variables refer to a candidate full-data extension of this standardized law.

\subsection{Temporal consistency}

We next state how the protocol and no-rescue trajectories agree before their rescue decisions diverge.
Applying the history and crossing-time definitions to $O^a$ gives $X_k^a$ and $T^a$.
\begin{assumption}[Temporal consistency]
    \label{ass:temporal}
    For each $a\in\{0,1\}$, almost surely,
    \begin{equation*}
        \bar{L}_k^{a,\NR}
        =
        \bar{L}_k^a
        \text{ on }\{T^a\ge k\}
        \quad
        (k=1,\ldots,K)
        ,
        \quad
        Y^{a,\NR}
        =
        Y^a
        \text{ on }\{T^a>K\}
        .
    \end{equation*}
\end{assumption}
The first condition requires the states to agree through the measurement that first triggers rescue, including that measurement itself;
the second requires the outcomes to agree when the protocol never initiates rescue.
Induction over visits shows that the no-rescue trajectory has the same first threshold crossing and the same crossing history as the protocol trajectory, although rescue is withheld there.
Their post-crossing states and outcomes need not agree.
Consequently, $P_a$ identifies the distribution of first-crossing histories without any model for the unobserved continuation.
Figure~\ref{fig:crossing} summarizes this distinction.

\begin{figure}[tb]
    \centering
    \begin{tikzpicture}[>=Latex, every node/.style={font=\small},
        box/.style={draw,rounded corners,align=center,minimum height=1.05cm}]
        \node[box,text width=1.8cm] (base) at (0,0) {Baseline\\$W,A=a$};
        \node[box,text width=2.5cm] (prefix) at (3.2,0) {Common prefix\\$L_1,\ldots,L_T$};
        \node[box,text width=2.1cm] (cross) at (6.7,0) {First crossing\\$z_T(X_T)=0$};
        \node[box,text width=3.3cm] (rescue) at (11.1,1.05) {Rescue course\\observed outcome $Y^a$};
        \node[box,text width=3.3cm] (nr) at (11.1,-1.05) {No-rescue course\\unobserved $Y^{a,\NR}$};
        \draw[->] (base)--(prefix);
        \draw[->] (prefix)--(cross);
        \draw[->] (cross.east)--(rescue.west);
        \draw[->,dashed] (cross.east)--(nr.west);
    \end{tikzpicture}
    \caption{
    The two strategies on a path with $T\le K$.
    The crossing time and measured prefix coincide by temporal consistency.
    The no-rescue continuation is missing after the decisions diverge, although the rescued-course outcome remains observed.
    When $T>K$, the final outcomes coincide as well.
    }
    \label{fig:crossing}
\end{figure}

The causal targets retain the intervention that withholds rescue throughout follow-up.
Define the standardized no-rescue mean and the treatment contrast by
\begin{equation}
    \label{eq:target}
    \theta_a
    =
    \EE[Y^{a,\NR}]
    ,
    \quad
    \Delta
    =
    \theta_1-\theta_0
    ,
\end{equation}
where the expectation is taken under a full-data law whose marginal distribution of $W$ is the observed one.
Let $\cP(P)$ denote the class of joint laws with observed marginal $P$ satisfying the setup and Assumptions~\ref{ass:baseline} and~\ref{ass:temporal}.
No sequential exchangeability assumption is imposed on rescue, and post-crossing no-rescue trajectories are otherwise unrestricted apart from the outcome range.

Without further restrictions, the outcome range gives the sharp interval for $\theta_a$
\begin{equation}
    \label{eq:range}
    \Bigl[
        \EE_a[Y\1(T>K)] + y_-P_a(T\le K)
        ,
        \quad
        \EE_a[Y\1(T>K)] + y_+P_a(T\le K)
    \Bigr]
    ,
\end{equation}
of width $(y_+-y_-)P_a(T\le K)$.
Consistency identifies the non-crossing contribution, and every value of the remainder is attainable, jointly across arms, by the construction in Lemma~\ref{lem:full-data-completion}.

\subsection{First-crossing effects}

We now parameterize the missing comparison by effects at first crossing.
For a fixed arm, define the observed rescued-course mean and the mean effect of withholding rescue at a first-crossing history as
\begin{equation}
    \label{eq:bridge}
    q_k(x)
    =
    \EE_a[Y\mid X_k=x,T=k]
    ,
    \quad
    \tau_k(x)
    =
    \EE_a[Y^{a,\NR}-Y\mid X_k=x,T=k]
    .
\end{equation}
We call $\tau_k$ a first-crossing effect, or bridge.
It compares continued non-rescue with the observed rescued course, rather than individual doses, and for an adverse outcome a positive value represents an average benefit from rescue at that history.
It is a delta-type conditional-mean sensitivity parameter whose reference is the observed rescued course \citep{little1993pattern,carpenter2013analysis,cro2020sensitivity};
the adjustment acts on the outcome scale, so for a binary outcome it is a risk difference rather than a log-odds shift.
Conditional means are defined almost everywhere under the corresponding crossing-history distribution, and a zero-probability crossing requires no specification.

These effects give an exact decomposition of the no-rescue mean.
Temporal consistency and total expectation imply
\begin{equation}
    \label{eq:causal-decomposition}
    \theta_a
    =
    \EE_a[Y]
    +
    \sum_{k=1}^K\EE_a\bigl[
        \tau_k(X_k)\1(T=k)
    \bigr]
    .
\end{equation}
The first term is identified from the observed law, whereas the sum isolates the unobserved no-rescue contribution.

\section{Structural reduction}
\label{sec:geometry}

This section shows that first-crossing effects determine the continuation means, are causally attainable, and yield sharp bounds that retain the declared restrictions.

\subsection{Compatible continuation means}

We first characterize how conditional means of the same final no-rescue outcome must agree across visits.
Fix an arm $a$ and define its continuation means and observed reference under a candidate full-data law as
\begin{equation*}
    m_k(x)
    =
    \EE_a[Y^{a,\NR}\mid X_k=x,T\ge k]
    ,
    \quad
    m_k^R(x)
    =
    \EE_a[Y\mid X_k=x,T\ge k]
    .
\end{equation*}
The event $\{T\ge k\}$ on which $m_k$ is defined is the visit-$k$ risk set:
patients not yet rescued before decision $k$.
Temporal consistency makes $X_k$ the history of either strategy before their decisions diverge, so $\EE_a[m_1(X_1)]=\theta_a$.

Observed transitions impose the non-crossing compatibility equations.
For $k<K$, define the observed averaging operator by
\begin{equation*}
    (P_kf)(x)
    =
    \EE_a[f(X_{k+1})\mid X_k=x,T>k]
    .
\end{equation*}
Iterated expectation and terminal outcome consistency give
\begin{equation}
    \label{eq:continuation}
    \begin{aligned}
        m_k(x)
        &=
        (P_km_{k+1})(x)
        &&\text{on }\{T>k\}
        ,
        \quad
        k<K
        ,\\
        m_K(x)
        &=
        \EE_a[Y\mid X_K=x,T>K]
        &&\text{on }\{T>K\}
        .
    \end{aligned}
\end{equation}
The observed reference $m^R$ satisfies the same equations and has crossing values $q_k$.
These equations average the same final outcome across visits.
They impose no comparison between rescue and non-rescue at triggering histories.

We represent deviations from the observed reference through their first-crossing values.
For crossing effects $\tau=(\tau_1,\ldots,\tau_K)$, define the backward extension $u=\Ext\tau$ on the risk sets by
\begin{equation}
    \label{eq:extension}
    \begin{split}
        u_K(x)
        &=
        \begin{cases}
            \tau_K(x),&z_K(x)=0,\\
            0,&z_K(x)=1,
        \end{cases}\\
        u_k(x)
        &=
        \begin{cases}
            \tau_k(x),&z_k(x)=0,\\
            (P_ku_{k+1})(x),&z_k(x)=1,
        \end{cases}
        \quad
        k<K
        .
    \end{split}
\end{equation}
A crossing effect changes the conditional no-rescue mean there, and observed transitions propagate that change backwards.
The zero terminal value leaves outcomes unchanged for patients who never receive rescue.

The known outcome range restricts the no-rescue means implied by a bridge.
A bridge is range compatible when
\begin{equation}
    \label{eq:compat}
    y_-
    \le
    q_k(x)+\tau_k(x)
    \le
    y_+
    \text{ almost everywhere on }\{T=k\}
    ,
    \quad
    k=1,\ldots,K
    .
\end{equation}
We call these range restrictions on the conditional no-rescue means the outcome caps.
The next result shows that they also suffice for a joint full-data law with the specified continuation means and unchanged observed distribution.

\begin{theorem}[First-crossing representation and causal attainability]
    \label{thm:trace}
    Under Assumptions~\ref{ass:baseline} and~\ref{ass:temporal}:
    \begin{enumerate}[label=(\roman*)]
        \item
        Every law in $\cP(P)$ has a range-compatible bridge and continuation $m=m^R+\Ext\tau$.
        For specified crossing values $q_k+\tau_k$, this is the unique sequence satisfying \eqref{eq:continuation}, up to equality almost everywhere on each risk set.

        \item
        Every range-compatible bridge is realized by a law in $\cP(P)$ whose continuation is $m^R+\Ext\tau$ and whose target is \eqref{eq:causal-decomposition}.
        Prescribed bridges in the two arms are jointly realizable while preserving the entire observed distribution and the common pre-crossing prefixes.
    \end{enumerate}
\end{theorem}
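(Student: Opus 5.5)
For part (i), fix a law in $\cP(P)$ and an arm $a$, and work throughout under the standardized law $P_a$ given by \eqref{eq:arm}. Define $\tau_k$ by \eqref{eq:bridge}. Because $y_-\le Y^{a,\NR}\le y_+$, the conditional mean $q_k+\tau_k=\EE_a[Y^{a,\NR}\mid X_k,T=k]$ lies in $[y_-,y_+]$ almost everywhere on $\{T=k\}$. This is exactly \eqref{eq:compat}, so the bridge is range compatible.

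Next I establish \eqref{eq:continuation} for $m$. By Assumption~\ref{ass:temporal}, induction over visits shows that the no-rescue crossing time equals $T^a$ and that $X_{k+1}$ is a function of the common prefix on $\{T>k\}$. The tower property on the risk set $\{T\ge k\}$, split into $\{T=k\}$ and $\{T>k\}$, then gives $m_k=q_k+\tau_k$ on $\{z_k=0\}$ and $m_k=P_km_{k+1}$ on $\{z_k=1\}$. The key fact here is that $\{T>k\}$ given $X_k$ is the event $z_k(X_k)=1$. At $k=K$, terminal consistency gives $m_K=\EE_a[Y\mid X_K,T>K]$ on $\{T>K\}$.

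The reference $m^R$ satisfies the same recursion with crossing values $q_k$. Since $P_k$ is linear, $d=m-m^R$ satisfies $d_k=\tau_k$ on $\{z_k=0\}$, $d_k=P_kd_{k+1}$ on $\{z_k=1\}$, and $d_K=0$ on $\{T>K\}$. Uniqueness follows by backward induction: two solutions with the same crossing values agree on $\{T\ge K\}$, and if they agree a.e. on $\{T\ge k+1\}$ then they agree on $\{T\ge k\}$. The a.e. qualification is handled by noting that $P_k$ integrates against the conditional law of $X_{k+1}$ given $\{X_k,T>k\}$, which charges only the risk set $\{T\ge k+1\}$. Hence $d=\Ext\tau$, and $m=m^R+\Ext\tau$.

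For part (ii), I build a full-data law. Keep the observed law $P$ of $O$ (hence each $P_a$) and define the no-rescue variables from $O^a$ and independent randomization. On $\{T^a>K\}$, set $\bar L^{a,\NR}_K=\bar L^a_K$ and $Y^{a,\NR}=Y^a$. On $\{T^a=k\le K\}$, copy $\bar L_k^a$, set $L^{a,\NR}_j=L_j^a$ for $j>k$ (any measurable choice works, since post-crossing states are unrestricted), and draw $Y^{a,\NR}$ given $X_k$ from the two-point law on $\{y_-,y_+\}$ with mean $c=q_k(X_k)+\tau_k(X_k)\in[y_-,y_+]$. Explicitly, $Y^{a,\NR}=y_+$ with probability $(c-y_-)/(y_+-y_-)$; the measurable versions of $q_k$ and $\tau_k$ are chosen so that this is a valid kernel. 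This gives conditional mean $q_k+\tau_k$, so the bridge is realized.

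Temporal consistency then holds by construction. The continuation is $m^R+\Ext\tau$ by part (i), and the target is \eqref{eq:causal-decomposition}.

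The main obstacle is doing this jointly across both arms while keeping Assumption~\ref{ass:baseline} intact. Only one $O^a$ is observed, so the full-data law must first couple $(W,O^0,O^1)$. I take the conditional laws of $O^a$ given $W$ to be $P(O\in\cdot\mid W,A=a)$, which are identified because $g_a>0$, and couple the two arms as conditionally independent given $W$. The randomization for the no-rescue draws is generated independently of everything given $W$. Then $A$ is drawn from $g_\cdot(W)$ independently of the whole potential collection, and $O=O^A$ is set. This preserves the observed law of $O$, the baseline independence, and the common prefixes, which yields joint realizability. The care needed is in measurability and in stating that the kernels exist on standard Borel spaces; both follow from the standard disintegration theorem.
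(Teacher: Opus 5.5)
Your proposal is correct and follows essentially the paper's proof. Part (i) uses iterated expectation on the risk sets and backward induction on the homogeneous recursion for $m-m^R$. Part (ii) uses the same joint completion: draw $W\sim P_W$, then the conditionally independent arm kernels $P(O\in\cdot\mid W,A=a)$, then an independent two-point no-rescue outcome with crossing mean $q_k+\tau_k$, and finally $A$ from $g_a(W)$. The only differences are immaterial: you copy the protocol's post-crossing states instead of fixing a point $l_k^\dagger$, and you leave implicit the one-line split of $\theta_a$ by $T$ that yields \eqref{eq:causal-decomposition}.
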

See Appendix~\ref{pf:thm-trace} for the proof.
Part (i) shows that first-crossing effects uniquely determine the compatible conditional-mean sequence through backward iterated expectation under the observed law.
Part (ii) establishes causal attainability while preserving the entire observed distribution, not merely the means appearing in the recursion.
Uniqueness concerns conditional means, not unobserved trajectories.
Different full-data laws can realize the same bridge.

\subsection{Sharp sensitivity bounds}

We next translate scientific restrictions on rescue effects into sharp no-rescue bounds.
Let $\cD$ be a nonempty declared class of range-compatible bridges, and restrict $\cP(P)$ to laws whose bridge belongs to $\cD$.
The class may restrict signs, magnitudes, variation, or specified effect averages.
These are scientific assumptions about unobserved no-rescue outcomes that the trial alone cannot validate.

\begin{proposition}[Sharp bounds]
    \label{prop:sharp}
    Under Assumptions~\ref{ass:baseline} and~\ref{ass:temporal}, the sharp identified set for the declared bridge class is
    \begin{equation}
        \label{eq:sharp-image}
        \Bigl\{
            \EE_a[Y]+\sum_{k=1}^K\EE_a\bigl[
                \tau_k(X_k)\1(T=k)
            \bigr]
            :\tau\in\cD
        \Bigr\}
        .
    \end{equation}
    If $\cD$ is convex, this set is an interval whose closure has endpoints given by the infimum and supremum of the displayed expression over $\cD$, and an endpoint belongs to the identified set exactly when its extremum is attained.
\end{proposition}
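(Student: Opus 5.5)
The plan is to obtain the identified set as the image of the linear functional $\Phi(\tau)=\EE_a[Y]+\sum_{k=1}^K\EE_a[\tau_k(X_k)\1(T=k)]$ over $\cD$, proving two inclusions with Theorem~\ref{thm:trace}.

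For the inclusion of the identified set in \eqref{eq:sharp-image}, take any law in $\cP(P)$ whose bridge lies in $\cD$. Part (i) of Theorem~\ref{thm:trace} shows that its bridge $\tau$, defined by \eqref{eq:bridge}, is range compatible and is determined almost everywhere on the crossing sets. The decomposition \eqref{eq:causal-decomposition} then gives $\theta_a=\Phi(\tau)$. Here $\tau_k$ is integrable because $q_k+\tau_k\in[y_-,y_+]$ and $Y$ is bounded. Since $\Phi$ depends on $\tau_k$ only through its values almost everywhere on $\{T=k\}$, the value does not depend on the version chosen.

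For the reverse inclusion, take $\tau\in\cD$. Part (ii) of Theorem~\ref{thm:trace} gives a law in $\cP(P)$ that has bridge $\tau$, preserves $P$, and has target \eqref{eq:causal-decomposition}, which equals $\Phi(\tau)$. This law lies in the restricted class, so $\Phi(\tau)$ belongs to the identified set. For the contrast $\Delta$, the joint realizability statement in part (ii) lets us pair bridges from the two arms. That detail is not needed for the single-arm statement here.

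For the convexity claim, note that $\Phi$ is affine in $\tau$: for $\tau,\tau'\in\cD$ and $\lambda\in[0,1]$, linearity of expectation gives $\Phi(\lambda\tau+(1-\lambda)\tau')=\lambda\Phi(\tau)+(1-\lambda)\Phi(\tau')$. The convex combination is again in $\cD$ by convexity. Therefore the image of the convex set $\cD$ is a convex subset of $\RR$, that is, an interval. It is bounded because range compatibility gives $\Phi(\tau)\in[y_-,y_+]$. A bounded interval has closure $[\inf\Phi,\sup\Phi]$ taken over $\cD$. An endpoint, say $\sup\Phi$, belongs to the image exactly when some $\tau\in\cD$ achieves it, which is the definition of the extremum being attained. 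The same argument applies to the infimum. The only subtle step is the well-definedness of $\Phi$ on almost-everywhere equivalence classes, together with the identification between "the bridge of a law lies in $\cD$" and membership of its version in $\cD$. I will handle this by reading $\cD$ modulo equality almost everywhere on each crossing set, consistent with the uniqueness statement in Theorem~\ref{thm:trace}(i).
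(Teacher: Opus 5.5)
Your proposal is correct and follows essentially the same route as the paper. The paper also gets the two inclusions from Theorem~\ref{thm:trace}, parts (i) and (ii), together with the decomposition \eqref{eq:causal-decomposition}. It then observes that the affine image of a convex class is an interval bounded by the outcome range, with an endpoint in the set exactly when the corresponding extremum is attained.
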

See Appendix~\ref{pf:prop-sharp} for the proof.
Because $\Ext\tau$ determines the entire continuation, restrictions on risk-set continuation means can be imposed on bridges by substituting $m^R+\Ext\tau$.
This substitution preserves the scientific class.
Replacing the induced restriction by a different norm on $\tau$ need not preserve it.
Outcome caps must likewise enter the optimization, because clipping an uncapped interval afterwards need not recover the capped sharp interval.

Sharp treatment-contrast bounds require joint attainability of the arm-specific choices.
For uncoupled arm classes with attained intervals $[L_a,U_a]$, the joint construction in Theorem~\ref{thm:trace} realizes every pair of arm means, so the sharp interval for $\Delta$ is $[L_1-U_0,\;U_1-L_0]$.
Shared parameters or cross-arm calibration must instead remain in a joint optimization, because independently optimized arm endpoints need not be jointly attainable.

\section{Computational reduction}
\label{sec:algorithm}

Building on the first-crossing representation in Section~\ref{sec:geometry}, this section derives an exact finite convex program for restrictions on prespecified crossing-stratum averages.

\subsection{Cell-average model}

We first place both outcomes on a common scale.
Rescale $Y$ and $Y^{a,\NR}$ to $[0,1]$, retaining their symbols.
An arm mean or endpoint $x$ returns to the original scale as $y_-+(y_+-y_-)x$, and effects and contrasts are multiplied by $y_+-y_-$.

Prespecified crossing cells summarize histories while retaining their contributions to the no-rescue mean.
Assign each first-crossing history $(k,x)$ to one of $M\ge1$ prespecified measurable strata, called cells, allowing empty cells.
Let $J\in\{1,\ldots,M\}$ identify the cell at first crossing, and set $J=0$ for no crossing.
For a fixed arm, define the observed moments and the unknown effect contributions as
\begin{equation}
    \label{eq:moments}
    \mu
    =
    \EE_a[Y]
    ,
    \quad
    p_j
    =
    P_a(J=j)
    ,
    \quad
    b_j
    =
    \EE_a[Y\1(J=j)]
    ,
    \quad
    t_j
    =
    \EE_a[(Y^{a,\NR}-Y)\1(J=j)]
    .
\end{equation}
Write $p,b,t\in\RR^M$ for the corresponding vectors.
For $p_j>0$, $t_j/p_j$ is the cell-average effect of withholding rescue.
When $p_j=0$, $t_j=0$ and no conditional mean is needed.
The target becomes
\begin{equation}
    \label{eq:mass-target}
    \theta_a
    =
    \mu+\sum_{j=1}^M t_j
    .
\end{equation}

Scientific restrictions bound the average effects and their joint departures from reference values.
Let $c\in[-1,1]^M$ contain reference effects and let $l,u\in[-1,1]^M$ give average-effect limits, with $l_j\le u_j$.
Let $H\in\RR^{r\times M}$ and $d\in\RR^r$ specify optional calibration equations, with $r=0$ indicating none.
Write $G=\Gamma^2$ for a sensitivity radius $\Gamma\ge0$.
Vector inequalities are componentwise.
\begin{assumption}[Prespecified cell-average model]
    \label{ass:cell-model}
    The partition and $c,l,u,H,G$ are fixed independently of the analysis data.
    The declared class consists of laws in $\cP(P)$ satisfying
    \begin{equation}
        \label{eq:cell-class}
        l_jp_j
        \le
        t_j
        \le
        u_jp_j
        ,
        \quad
        Ht
        =
        d
        ,
        \quad
        \sum_{j:p_j>0}\frac{(t_j-c_jp_j)^2}{p_j}
        \le
        G
        .
    \end{equation}
    The calibration target $d$ is fixed, or is a population quantity linked to an external source by a declared transport relation.
\end{assumption}
We refer to the three constraints in \eqref{eq:cell-class} as the effect limits, the calibration, and the budget.
The budget is shared across cells and bounds the probability-weighted sum of squared deviations of the cell-average effects $t_j/p_j$ from their reference effects $c_j$ by $G=\Gamma^2$.
These restrictions concern averages and do not require effects to be constant within cells.
For an adverse outcome, $l_j=0$ excludes an average harm from rescue in cell $j$.
It imposes neither individual-level monotonicity nor a nonnegative effect at every history within that cell.
External calibration must refer to the same continuation strategies and the same trial-population weights.

The budget has a root-mean-square (RMS) interpretation.
If $\pi_R=\sum_jp_j>0$, the final constraint in \eqref{eq:cell-class} is equivalent to
\begin{equation}
    \label{eq:rms}
    \Biggl\{
        \sum_{j:p_j>0}\frac{p_j}{\pi_R}\Biggl(
            \frac{t_j}{p_j}-c_j
        \Biggr)^2
    \Biggr\}^{1/2}
    \le
    \frac{\Gamma}{\sqrt{\pi_R}}
    ,
\end{equation}
whose left side is the conditional RMS deviation of cell-average effects among crossers.
Thus $\Gamma$ is neither a per-cell limit nor a bound on individual effects.
With $\pi_R=0.5$, a conditional radius of $0.20$, that is $20$ percentage points for a binary outcome, corresponds to $\Gamma=0.20\sqrt{0.5}$.
A conditional radius $\rho\ge0$ can instead be declared directly by replacing the final constraint with
\begin{equation}
    \label{eq:conditional-budget}
    \sum_{j:p_j>0}\frac{(t_j-c_jp_j)^2}{p_j}
    \le
    \rho^2\sum_jp_j
    .
\end{equation}
This constraint is jointly convex in $(p,t)$ and has both sides equal to zero when there are no crossers.
In confidence projection, the probability on its right side varies with the candidate population moments rather than being replaced by a fixed empirical estimate.
The results below are stated for fixed $G$ and extend to \eqref{eq:conditional-budget} as noted in Appendix~\ref{pf:prop-stability}.

A practical specification starts with the outcome scale, the rescue course being compared, and a small number of scientifically meaningful crossing groups, for example crossing visit and baseline risk.
Threshold exceedance can be added when it changes plausible rescue effects.
The limits $l_j,u_j$ express cell-average effect ranges, $c_j$ states the reference comparison, and $\Gamma$ or $\rho$ bounds their joint deviation.
These choices should be elicited before examining outcome-dependent sensitivity results, with cell counts reported afterwards.
A grid of radii and reference effects then supports tipping-point reporting while keeping scientific uncertainty distinct from confidence enlargement \citep{vansteelandt2006ignorance,cro2020sensitivity}.

\subsection{Finite-dimensional optimization}

We next combine the scientific restrictions with outcome caps in a finite convex program.
Let $\eta=(\mu,p,b,d)$ collect the primitive inputs.
The outcome range and the effect limits give the contribution bounds
\begin{equation}
    \label{eq:cell-box}
    L_j(p,b)
    =
    \max\{l_jp_j,-b_j\}
    ,
    \quad
    U_j(p,b)
    =
    \min\{u_jp_j,p_j-b_j\}
    .
\end{equation}
The cellwise outcome caps require the no-rescue outcome contribution $b_j+t_j$ to lie between zero and $p_j$.
The quadratic perspective represents the budget in contribution coordinates.
On $p\ge0$ and $|t|\le p$, define
\begin{equation}
    \label{eq:perspective}
    F_c(p,t)
    =
    \sum_{j=1}^M f_{c_j}(p_j,t_j)
    ,
    \quad
    f_c(p,t)
    =
    \begin{cases}
        (t-cp)^2/p,&p>0,\\
        0,&p=t=0.
    \end{cases}
\end{equation}
Other points are infeasible.
The contribution set is
\begin{equation}
    \label{eq:T}
    \cT(\eta,G)
    =
    \bigl\{
        t:
        L_j(p,b)\le t_j\le U_j(p,b),
        \ Ht=d,
        \ F_c(p,t)\le G
    \bigr\}
    ,
\end{equation}
and the target endpoints minimize and maximize \eqref{eq:mass-target} over it.

The perspective constraint has an equivalent second-order-cone representation.
With auxiliary variables $s_j\ge0$, it is equivalent to
\begin{equation}
    \label{eq:cone}
    (t_j-c_jp_j)^2
    \le
    p_js_j
    ,
    \quad
    \sum_j s_j
    \le
    G
    .
\end{equation}
This representation remains defined at zero probabilities (Lemma~\ref{lem:closed-perspective}).
Candidate observed moments must additionally respect probability and outcome coherence,
\begin{equation}
    \label{eq:coherence}
    p
    \ge
    0
    ,
    \quad
    \sum_j p_j
    \le
    1
    ,
    \quad
    0
    \le
    b_j
    \le
    p_j
    ,
    \quad
    0
    \le
    \mu-\sum_j b_j
    \le
    1-\sum_j p_j
    ,
\end{equation}
where the last inequality bounds the non-crossing outcome contribution.
Let $\cC_{\mathrm{coh}}$ denote these conditions together with known identities for redundant calibration coordinates.

Input regions let the same program give plug-in bounds or confidence intervals.
For a supplied region $\cC$, compute
\begin{equation}
    \label{eq:outer-program}
    \inf/\sup
    \Bigl\{
        \mu+\sum_j t_j:
        \eta\in\cC\cap\cC_{\mathrm{coh}},
        \ t\in\cT(\eta,G)
    \Bigr\}
    .
\end{equation}
A singleton gives plug-in endpoints.
A simultaneous confidence region gives an outer interval accounting for input uncertainty.
Box and second-order-cone input regions yield second-order-cone programs, and the likelihood regions of Section~\ref{sec:statistics} add exponential-cone constraints.
No conditional crossing-outcome regression and no inverse product of non-rescue probabilities is estimated.

\begin{example}[Two crossing cells]
    \label{ex:two-cells}
    Consider a binary adverse outcome with $\mu=0.40$, $p=(0.20,0.30)$, and $b=(0.04,0.27)$.
    The two observed crossing means are $0.20$ and $0.90$, and the non-crossing outcome contribution is $0.09$.
    Use $c=0$, $l=0$, $u=1$, no calibration, and $G=0.02$.
    Thus, the conditional root-mean-square cell effect is at most $0.20$.
    The range-only and sign-only intervals are $[0.09,0.59]$ and $[0.40,0.59]$.
    Under the budget, maximizing without cellwise caps would assign $t=(0.04,0.06)$ and give upper endpoint $0.50$.
    The second cap instead requires $t_2\le0.03$.
    The exact maximizing vector is
    \begin{equation*}
        t_2=0.03
        ,
        \quad
        t_1=\sqrt{0.20\{0.02-0.03^2/0.30\}}
        =
        \sqrt{0.0034}
        ,
    \end{equation*}
    giving sharp interval $[0.4000,0.4883]$ to four decimals.
    The first cell-average effect is about $0.292$, illustrating why a root-mean-square radius is not a per-cell cap.
    This is a numerical illustration of the restrictions, not a clinical calibration.
\end{example}

\subsection{Numerical approximation}

We construct inner and outer linear approximations to separate numerical error from sensitivity uncertainty.
The conic program can be solved directly or bracketed by affine tangents.
For $h>0$, let $\cR_h$ be an ordered grid spanning $[-1,1]$, containing both endpoints, with adjacent spacings at most $h$.
Define the tangent approximation and its error allowance as
\begin{equation}
    \label{eq:tangent}
    \begin{split}
        f_{c,h}(p,t)
        &=
        \max\Bigl[
            0,
            \ \max_{v\in\cR_h}
            \{2(v-c)t+(c^2-v^2)p\}
        \Bigr]
        ,\\
        F_{c,h}(p,t)
        &=
        \sum_j f_{c_j,h}(p_j,t_j)
        ,
        \quad
        e_h
        =
        h^2/4
        .
    \end{split}
\end{equation}
Replacing $F_c\le G$ by $F_{c,h}\le G$ defines the outer set $\cT_h^+(\eta,G)$, and using $F_{c,h}\le G-e_h$ defines the inner set $\cT_h^-(\eta,G)$.
Both retain all contribution bounds and calibration equations.
For $G>0$, a negative inner budget gives an empty inner set.
At $G=0$, use $t_j=c_jp_j$ exactly for both sets, retaining the other constraints.
Auxiliary variables represent the maxima by linear inequalities, so a polyhedral $\cC$ yields linear programs (LPs).

\begin{proposition}[Lossless reduction and numerical brackets]
    \label{prop:algorithm}
    Under Assumptions~\ref{ass:baseline}, \ref{ass:temporal}, and~\ref{ass:cell-model}, suppose the population class is nonempty.
    \begin{enumerate}[label=(\roman*)]
        \item
        At the true moments, $\cT(\eta,G)$ is exactly the attainable set of crossing contributions.
        Its target extrema are attained and are sharp causal endpoints.

        \item
        On $p\ge0$, $\sum_j p_j\le1$, and $|t_j|\le p_j$,
        \begin{equation}
            \label{eq:mesh-error}
            0
            \le
            F_c(p,t)-F_{c,h}(p,t)
            \le
            e_h\sum_j p_j
            \le
            e_h
            ,
        \end{equation}
        and
        \begin{equation}
            \label{eq:brackets}
            \cT_h^-(\eta,G)
            \subseteq
            \cT(\eta,G)
            \subseteq
            \cT_h^+(\eta,G)
            \subseteq
            \cT(\eta,G+e_h)
            .
        \end{equation}

        \item
        For a fixed closed $\cC$ with a nonempty exact feasible intersection, nested grids with $h\downarrow0$ give monotone convergence of the outer endpoints in \eqref{eq:outer-program} to the exact endpoints.
    \end{enumerate}
\end{proposition}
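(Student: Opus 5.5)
For part (i), the plan is to reduce to Theorem~\ref{thm:trace} and Proposition~\ref{prop:sharp}.

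\emph{Necessity.} Any law in the declared class has a range-compatible bridge $\tau$. Its contributions $t_j=\EE_a[\tau_T(X_T)\1(J=j)]$ coincide with \eqref{eq:moments} by total expectation, since $J$ is a function of $(T,X_T)$. The outcome caps \eqref{eq:compat}, integrated over cell $j$, give $-b_j\le t_j\le p_j-b_j$ on the rescaled scale. Together with the effect limits this yields $L_j\le t_j\le U_j$. The calibration and budget are imposed directly by Assumption~\ref{ass:cell-model}. Hence $t\in\cT(\eta,G)$.

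\emph{Sufficiency.} Given $t\in\cT(\eta,G)$, construct a bridge that is constant in relative terms within each cell. On cell $j$ with $p_j>0$, set $\tau(x)=\lambda_j(1-q(x))$ when $t_j\ge0$, with $\lambda_j=t_j/(p_j-b_j)\in[0,1]$; set $\tau(x)=-\lambda_j q(x)$ when $t_j<0$, with $\lambda_j=-t_j/b_j\in[0,1]$. Degenerate denominators force $t_j=0$ by the box constraints. This bridge is range compatible pointwise and integrates to $t_j$ over the cell. The declared class depends on $\tau$ only through $t$, so membership is automatic. Theorem~\ref{thm:trace}(ii) then realizes $\tau$ by a law in $\cP(P)$, and the target equals \eqref{eq:mass-target}.

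\emph{Attainment.} $\cT$ is compact: it is bounded by the box, the affine constraints are closed, and $F_c$ is lower semicontinuous, being a closed perspective as in Lemma~\ref{lem:closed-perspective}. It is nonempty by hypothesis. The linear objective therefore attains its extrema, and by Proposition~\ref{prop:sharp} these are the sharp endpoints.

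For part (ii), work cellwise. For $p>0$ write $t=sp$ with $s\in[-1,1]$, so that $f_c=p(s-c)^2$. Each tangent $2(v-c)t+(c^2-v^2)p$ equals $p[(s-c)^2-(s-v)^2]$, and the zero term is also a minorant. Hence $f_{c,h}=p\,[(s-c)^2-\min_{v}(s-v)^2]_{\ge}$, where the bracket denotes the maximum with $0$. Since $\max(0,\cdot)$ only increases the value, this gives $f_c-f_{c,h}\le p\min_{v\in\cR_h}(s-v)^2\le p\,h^2/4$, because the nearest grid point lies within $h/2$ of $s$. At $p=t=0$ both sides vanish. Summing over cells gives \eqref{eq:mesh-error}.

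The brackets \eqref{eq:brackets} follow directly:
\begin{itemize}
\item $F_{c,h}\le G-e_h$ implies $F_c\le G$;
\item $F_c\le G$ implies $F_{c,h}\le G$;
\item $F_{c,h}\le G$ implies $F_c\le G+e_h$.
\end{itemize}
The case $G=0$ is separate: $F_c=0$ forces $t_j=c_jp_j$, which matches the convention adopted for both sets.

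For part (iii), nested grids make $F_{c,h}$ nondecreasing as $h\downarrow0$. The outer feasible sets therefore shrink, and the outer infimum (supremum) is nondecreasing (nonincreasing). By \eqref{eq:brackets}, each outer endpoint is sandwiched between the exact endpoint at budgets $G$ and $G+e_h$. It remains to show that the value function $V(G')$ of \eqref{eq:outer-program} is continuous from the right at $G$.

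This continuity is the step I expect to be the main obstacle. The plan is a compactness argument. Take a minimizing sequence at budgets $G+e_{h_n}$ over the compact set $\cC\cap\cC_{\mathrm{coh}}$ times the box. Extract a convergent subsequence. The closed constraints, the $p$-dependent box bounds, and the lower semicontinuity of $F_c$ then give a limit feasible at budget $G$ with objective equal to the limiting value. This argument needs $\cC$ to be closed, and boundedness comes from coherence. The difficulty is that the constraint sets vary with $(p,b)$, including at $p_j=0$, so the closedness of the joint feasible graph must be checked carefully using the cone representation \eqref{eq:cone}.
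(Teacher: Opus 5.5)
Your proposal is correct and follows the paper's proof in all three parts. The only substantive difference is the witness in (i): the paper simply assigns the cell-constant no-rescue mean $(b_j+t_j)/p_j$, so the bridge is $(b_j+t_j)/p_j-q_k(x)$, and invokes Lemma~\ref{lem:full-data-completion}; your proportional shift toward $0$ or $1$ is equally valid and also gives every pointwise effect in cell $j$ the sign of $t_j$.

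The closedness you flag in (iii) is settled as in the paper. Coherence and the box force $|t_j|\le p_j$, where $F_c$ is continuous by Lemma~\ref{lem:closed-perspective}, and $L_j,U_j$ are continuous in $(p,b)$. One small correction: boundedness of the calibration coordinates comes from $d=Ht$ rather than from coherence, because $\cC$ is only assumed closed.
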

See Appendix~\ref{pf:prop-algorithm} for the proof.
Part (i) connects the finite program to causal attainability:
every feasible contribution vector lifts to a full-data law preserving the observed distribution.
Part (ii) controls the numerical relaxation uniformly over coherent inputs without a $1/p_j$ factor.
Part (iii) gives outer convergence without positive cell masses or strict feasibility.
The numerical allowance controls the budget approximation.
Endpoint-error rates and inner feasibility additionally use the local feasibility conditions in Section~\ref{sec:statistics}.
Refining $h$ changes numerical accuracy, whereas changing the partition or radius changes the scientific model.

Algorithm~\ref{alg:boundary} summarizes the computation.
\begin{algorithm}[tb]
    \caption{Sharp-set estimation and confidence projection}
    \label{alg:boundary}
    \begin{algorithmic}[1]
        \Require
        Prespecified trigger, outcome range, partition, $c,l,u,H,G$, and any external transport relation.

        \State
        Normalize outcomes;
        estimate the moment inputs and construct a simultaneous region $\cC_n$.

        \State
        Form coherent plug-in and confidence programs using \eqref{eq:cell-box}--\eqref{eq:coherence}.

        \State
        Compute conic endpoints or tangent-grid brackets;
        use $t_j=c_jp_j$ for every cell at $G=0$.

        \State
        Obtain an outward lower bound on the minimizing optimum and an outward upper bound on the maximizing optimum;
        check inner witnesses against the original constraints.

        \State
        Refine numerical accuracy as needed;
        distinguish infeasibility from numerical failure, and do not increase the scientific budget to force feasibility.

        \State
        Report sharp-set estimates, confidence enlargement, numerical gaps, cell counts, and sensitivity specifications.

        \State
        For an empty confidence program or an unavailable outward certificate, report $[0,1]$ and record the reason.
    \end{algorithmic}
\end{algorithm}

\section{Statistical inference}
\label{sec:statistics}

This section constructs confidence intervals that cover the entire sharp interval and gives separate conditions under which their endpoints are accurate.

\subsection{Confidence projection}

We first transfer uncertainty about the observed moments through the reduced program.
For coherent $\eta$ with nonempty $\cT(\eta,G)$, let $\cI(\eta,G)=[L(\eta,G),U(\eta,G)]$ denote the interval obtained by minimizing and maximizing \eqref{eq:mass-target} over $\cT(\eta,G)$.
At the population moments, Proposition~\ref{prop:algorithm} identifies it as the sharp causal interval.
Let $\cC_n$ be a closed random region for $\eta$ whose feasibility events and optimization values are measurable.
The box and likelihood regions below have these properties, and the coherence constraints, outcome caps, and $d=Ht$ make every feasible intersection compact even when $\cC_n$ is unbounded.

The reported interval lets the inputs vary over their joint region rather than treating their estimates as known.
Define $\cI_n^{\mathrm{out}}$ as \eqref{eq:outer-program} over $\cC_n$, or an outward numerical enclosure of it, intersected with $[0,1]$.
For an empty program or an unavailable outward certificate, report the physical-range fallback $[0,1]$ and record the reason.
Let $\PP$ denote joint sampling probability across the data sources and $\alpha\in(0,1)$ the nominal error probability.
\begin{corollary}[Whole-set confidence projection]
    \label{cor:coverage}
    Under Assumptions~\ref{ass:baseline}, \ref{ass:temporal}, and~\ref{ass:cell-model}, suppose the population class is nonempty and $\cC_n$ satisfies the preceding conditions.
    Then
    \begin{equation}
        \label{eq:coverage}
        \PP\{\cI(\eta,G)\subseteq\cI_n^{\mathrm{out}}\}
        \ge
        \PP(\eta\in\cC_n)
        .
    \end{equation}
\end{corollary}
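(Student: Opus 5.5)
The argument is an event inclusion: on the event $\{\eta\in\cC_n\}$ we show $\cI(\eta,G)\subseteq\cI_n^{\mathrm{out}}$ deterministically. Taking probabilities then gives \eqref{eq:coverage}, provided the coverage event is measurable. Measurability follows from the stated measurability of the feasibility events and optimization values of $\cC_n$.

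The steps are as follows. Fix an outcome with $\eta\in\cC_n$, where $\eta$ is the true population moment vector. First, $\eta\in\cC_{\mathrm{coh}}$, because the population moments of a genuine law satisfy \eqref{eq:coherence}. This holds since $0\le Y\le1$ after rescaling, so $0\le b_j\le p_j$ and $0\le\mu-\sum_jb_j=\EE_a[Y\1(J=0)]\le 1-\sum_jp_j$; the known calibration identities hold as well. Second, because the population class is nonempty, Proposition~\ref{prop:algorithm}(i) shows that $\cT(\eta,G)$ is nonempty and compact. It equals the attainable contribution set, and its target extrema $L(\eta,G)$ and $U(\eta,G)$ are attained. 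Hence the program \eqref{eq:outer-program} over $\cC_n$ is feasible, because the pair $(\eta,t)$ with $t\in\cT(\eta,G)$ is feasible.

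It follows that its infimum is at most $L(\eta,G)$ and its supremum is at least $U(\eta,G)$. Any outward numerical enclosure only enlarges this interval. For the tangent-grid version this uses $\cT(\eta,G)\subseteq\cT_h^+(\eta,G)$ from \eqref{eq:brackets}, so an outward bound on the relaxed optimum is still outward for the exact one. Intersecting with $[0,1]$ loses nothing, since $\cI(\eta,G)\subseteq[0,1]$: each feasible $t$ satisfies the caps in \eqref{eq:cell-box}, so $0\le\mu+\sum_jt_j\le1$. This caps bound is $\mu+\sum_j t_j\in[\mu-\sum_j b_j,\ \mu-\sum_j b_j+\sum_j p_j]\subseteq[0,1]$ by coherence. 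Finally, the fallback cases are harmless. The program cannot be empty on this event, and if no outward certificate is available the report is $[0,1]$, which contains $\cI(\eta,G)$.

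The main point requiring care is the second step: ensuring that the true $\eta$ actually lies in $\cC_n\cap\cC_{\mathrm{coh}}$ with $\cT(\eta,G)\ne\emptyset$. This needs population coherence together with nonemptiness via Proposition~\ref{prop:algorithm}(i). The same step requires that the outward enclosures bound the exact extrema over the whole region, not merely near the estimates. No local feasibility, positive cell mass, or continuity of $L$ and $U$ in $\eta$ is used. This is why coverage holds even with empty or zero-probability cells: such cells enter only through $p_j=0$, which forces $t_j=0$ in $\cT$. That in turn is handled by the closed perspective of Lemma~\ref{lem:closed-perspective}.
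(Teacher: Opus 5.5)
Your proposal is correct and follows essentially the same route as the paper's proof. On the event $\{\eta\in\cC_n\}$, every population-feasible contribution vector stays feasible for the confidence program, the tangent relaxation via \eqref{eq:brackets} and the outward certificates only enlarge the extrema, and neither the $[0,1]$ fallback nor the intersection with $[0,1]$ removes a population-feasible value, which yields the event inclusion and hence \eqref{eq:coverage}; your explicit checks that $\eta\in\cC_{\mathrm{coh}}$ and that $\cI(\eta,G)\subseteq[0,1]$ through the outcome caps spell out steps the paper leaves implicit.
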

See Appendix~\ref{pf:cor-coverage} for the proof.
On $\{\eta\in\cC_n\}$, every population-feasible contribution vector is feasible in the confidence program, so the program's extrema enclose the whole sharp interval.
Thus $1-\alpha$ coverage of the input region transfers to whole-set coverage at the same level.
This applies the projection principle for a confidence region covering an identification region to the reduced program \citep{horowitz2000nonparametric,vansteelandt2006ignorance,chernozhukov2007estimation,duarte2024automated}.
The coverage target is the entire sharp interval rather than one unidentified value \citep{imbens2004confidence}.
Finite-sample, asymptotic, and uniform input guarantees transfer with their respective qualifications.
The same event gives simultaneous coverage over any prespecified finite or countable grid of centers and radii with nonempty population classes.
No endpoint differentiability or local feasibility conditions of the type in Assumptions~\ref{ass:anchor} and~\ref{ass:anchor-general} are required, including at zero cell masses, zero budgets, or binding constraints.

\subsection{Simultaneous input regions}

Two families of regions supply $\cC_n$:
coordinatewise bounded-moment regions, which are valid in finite samples feature by feature, and joint multinomial likelihood regions, which avoid allocating error across features.

Coordinatewise regions use bounded trial features and external scores.
For $q=1,\ldots,D$, let $V_{q,1},\ldots,V_{q,n_q}$ be observations of a primitive, aggregate, or external feature whose mean $\zeta_q$ is a coordinate of $\eta$ or a specified linear function of it.
\begin{assumption}[Bounded moment sampling]
    \label{ass:bounded-moments}
    Within each feature, observations are independent and identically distributed in a known interval of finite length $B_q$, with $n_q\ge2$.
    Dependence across features is allowed.
    Pooled trial features use known baseline probabilities bounded away from zero, or independent samples from $P_a$ use unweighted features.
    External scores have the declared calibration mean under the transport relation.
\end{assumption}
For a trial with known assignment probabilities, the features are $\omega_aY$, $\omega_a\1(J=j)$, and $\omega_aY\1(J=j)$ for $j=1,\ldots,M$.
We also use the aggregates $\omega_a\1(T\le K)$ and $\omega_aY\1(T>K)$.
These bound $\sum_jp_j$ and $\mu-\sum_jb_j$ directly and retain information that summing cellwise limits loses.
With independent samples from $P_a$, the weights are omitted and all trial features lie in $[0,1]$.
They are Bernoulli when $Y$ is binary.
Weighted sample moments need not satisfy \eqref{eq:coherence}, and Appendix~\ref{pf:prop-stability} describes the projection onto the coherence polytope used in that case.
The baseline probability bound controls the range of weighted features.
It does not concern the probability of non-rescue after triggering.
For sample mean $\bar{V}_q$ and unbiased sample variance $\hat{v}_q$, the empirical Bernstein (EB) half-widths are
\begin{equation}
    \label{eq:eb}
    w_q^{\mathrm{EB}}
    =
    \sqrt{\frac{2\hat{v}_q\log(4D/\alpha)}{n_q}}
    +
    \frac{7B_q\log(4D/\alpha)}{3(n_q-1)}
    .
\end{equation}
Applying \citet[Theorem~4]{maurer2009empirical} to both signs and taking a union bound over the $D$ features gives simultaneous coverage at least $1-\alpha$ for $|\zeta_q-\bar{V}_q|\le w_q^{\mathrm{EB}}$.
Hoeffding limits and, for Bernoulli features, Clopper--Pearson (CP) limits with coordinate error $\alpha/D$ give finite-sample alternatives \citep{clopper1934use,hoeffding1963probability}.
Appendix~\ref{pf:cor-coverage} states them with their boundary conventions.
Empty sample cells retain uncertainty through their probability limits: a zero count does not imply zero population mass, and no conditional-effect estimate is formed inside the cell.
For a population-zero cell, the Bernstein probability width has order $\log(D/\alpha)/n_q$ rather than the Hoeffding order $\sqrt{\log(D/\alpha)/n_q}$, and binomial inversion can be sharper for binary features.
The coordinatewise construction allocates error across the $D$ features, which can be conservative when $D$ is large.

Joint regions avoid this allocation for binary independent-arm data.
Let $\pi_{jy}=P_a(J=j,Y=y)$ for $j=0,\ldots,M$ and $y\in\{0,1\}$.
The vector $\pi$ lies in the $d_\pi=2(M+1)$-atom probability simplex and determines $(\mu,p,b)$ linearly.
Split the sample independently of its values into two nonempty parts with counts $N_{v,jy}$ and sizes $n_v$, $v=1,2$, and define the smoothed training probabilities $\tilde{\pi}_{-v,jy}=(N_{3-v,jy}+1/2)/(n_{3-v}+d_\pi/2)$.
The cross-fit split likelihood-ratio region is
\begin{equation}
    \label{eq:joint-split}
    \cR_n^{\mathrm{split}}
    =
    \Biggl\{
        \pi:
        \frac{1}{2}\sum_{v=1}^2
        \exp\Biggl[
            \sum_{j,y}N_{v,jy}
            \log\Biggl(
                \frac{\tilde{\pi}_{-v,jy}}{\pi_{jy}}
            \Biggr)
        \Biggr]
        \le\frac{1}{\alpha}
    \Biggr\}
    ,
\end{equation}
where zero-count terms equal zero and a positive count at a zero candidate probability gives an infinite term.
This is the universal-inference construction of \citet{wasserman2020universal} specialized to the multinomial law of $(J,Y)$.
The smoothing enters only the training distribution and assigns no population mass to absent atoms.
Lemma~\ref{lem:joint-region} shows that the region is closed and convex with an exponential-cone representation.
It has finite-sample coverage at least $1-\alpha$ for every $\pi$, including vectors with zero coordinates, and projects through the capped program as in Corollary~\ref{cor:coverage}.
The usual multinomial likelihood-ratio (LR) region uses the Kullback--Leibler (KL) divergence,
\begin{equation}
    \label{eq:joint-lr}
    2n\KL(\hat{\pi}\|\pi)
    \le
    \chi^2_{d_\pi-1,1-\alpha}
    ,
    \quad
    \KL(v\|w)
    =
    \sum_iv_i\log(v_i/w_i)
    ,
\end{equation}
where $\hat{\pi}$ contains the empirical atom frequencies and $\chi^2_{\nu,1-\alpha}$ is the $(1-\alpha)$ quantile of the chi-square distribution with $\nu$ degrees of freedom.
We use $0\log(0/w_i)=0$, including when $w_i=0$, and a positive numerator with zero denominator gives an infinite term.
The LR region is a regular-law comparator:
its chi-square calibration is asymptotically valid at fixed laws with all declared atoms positive \citep[Chapter~16]{vandervaart1998asymptotic}.
Only known structural zeros may be removed in advance.
Projecting either joint region is equivalent to inverting a test obtained by profiling its defining criterion over $(\pi,t)$ (Appendix~\ref{pf:cor-coverage}).
This connects the computation to the set-inference literature \citep{chernozhukov2007estimation,romano2010inference}.
Section~\ref{sec:numerical} compares the resulting interval widths.

When baseline propensities are estimated from a pooled sample of size $N$, fitted-weight scores are generally dependent and need not have the target moments as their means.
Augmented inverse-probability scores that combine a propensity fit with a regression of each observed feature on $W$ give root-$N$ moment inference under cross-fitting and product-rate conditions.
A correctly specified logistic fit admits a sandwich variance.
Appendices~\ref{app:estimated-ps} and~\ref{app:augmented-moments} give the logistic sandwich calculation and augmented expansion, respectively \citep{bang2005doubly,chernozhukov2018double}.
The augmentation concerns observed moments only and does not extrapolate $Y^{a,\NR}$.

\subsection{Endpoint stability}

Coverage follows from inclusion alone.
This subsection gives local feasibility conditions under which improved moment precision also yields accurate endpoints, even with zero-probability cells.
These conditions provide positive budget slack and, with calibration, interior margins in selected contribution bounds.
Input distances use $\ell^1$, and matrix norms are the induced $1\to1$ norms.

Local stability requires a feasible effect configuration with room to accommodate input perturbations.
\begin{assumption}[Budget slack]
    \label{ass:anchor}
    Fix $M,c,l,u$ and coherent $\eta$, with $-1\le l_j\le0\le u_j\le1$ and no calibration.
    There exist $t^\circ\in\cT(\eta,G)$ and $\sigma>0$ with $F_c(p,t^\circ)\le G-\sigma$.
\end{assumption}
The anchor $t^\circ$ is a feasible effect configuration, not necessarily the true one, whose unused budget of at least $\sigma$ permits interpolation.
Zero lies in every coherent contribution interval under the stated effect limits, so with $c=0$ and $G>0$ the choice $t^\circ=0$ satisfies the assumption for every coherent $(p,b)$, with no lower bound on any cell mass.
With calibration equations $Ht=d$, corrections must also be allocated to cells with interior margins.
Appendix~\ref{pf:prop-stability} states this condition as Assumption~\ref{ass:anchor-general} and proves Proposition~\ref{prop:stability} in that generality.

Endpoint accuracy is measured by the larger of the two endpoint errors, the Hausdorff distance $\distH([L,U],[L',U'])=\max\{|L-L'|,|U-U'|\}$.
Let $\hat{\cI}_{n,h}$ be the exactly optimized outer-grid interval at a coherent plug-in $\tilde{\eta}_n$, and $\cI_{n,h}^{\mathrm{out}}$ its projection over $\cC_n$.
Both use $[0,1]$ for empty programs.

\begin{proposition}[Endpoint stability and error separation]
    \label{prop:stability}
    Under Assumptions~\ref{ass:baseline}, \ref{ass:temporal}, \ref{ass:cell-model}, and~\ref{ass:anchor}:
    \begin{enumerate}[label=(\roman*)]
        \item
        Every coherent $(\eta',G')$ sufficiently close to $(\eta,G)$ has a nonempty contribution set, and for a finite constant $C_{\ast}$,
        \begin{equation}
            \label{eq:lipschitz}
            \distH\{\cI(\eta',G'),\cI(\eta,G)\}
            \le
            C_{\ast}\{\|\eta'-\eta\|_1+|G'-G|\}
            .
        \end{equation}

        \item
        Suppose $\cC_n$ satisfies the conditions of Corollary~\ref{cor:coverage}, contains $\tilde{\eta}_n$ almost surely, and, for deterministic $r_n\to0$,
        \begin{equation}
            \label{eq:shrinking-region}
            \|\tilde{\eta}_n-\eta\|_1
            +
            \sup_{\eta'\in\cC_n\cap\cC_{\mathrm{coh}}}
            \|\eta'-\eta\|_1
            =
            \Op(r_n)
            .
        \end{equation}
        Then, for deterministic $h_n\to0$,
        \begin{equation}
            \label{eq:rate}
            \distH\{\hat{\cI}_{n,h_n},\cI(\eta,G)\}
            +
            \distH\{\cI_{n,h_n}^{\mathrm{out}},\cI(\eta,G)\}
            =
            \Op(r_n+h_n^2)
            .
        \end{equation}
    \end{enumerate}
\end{proposition}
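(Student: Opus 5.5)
The plan is to prove part (i) by a perturb-and-repair argument on contribution vectors. We move feasible vectors between $\eta$ and a nearby $\eta'$ by interpolating toward the anchor $t^\circ$ of Assumption~\ref{ass:anchor} and clipping into the new box. Part (ii) then follows by combining (i) with the brackets \eqref{eq:brackets} of Proposition~\ref{prop:algorithm} and the shrinkage condition \eqref{eq:shrinking-region}. Throughout, write $\delta=\|\eta'-\eta\|_1+|G'-G|$.

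\textbf{Three ingredients for part (i).}
\begin{enumerate}[label=(\alph*)]
\item The contribution bounds \eqref{eq:cell-box} are maxima and minima of linear functions of $(p,b)$. Hence $|L_j(p',b')-L_j(p,b)|$ and $|U_j(p',b')-U_j(p,b)|$ are at most $|p_j'-p_j|+|b_j'-b_j|$. Because $l_j\le0\le u_j$ and coherence gives $0\le b_j'\le p_j'$, we have $L_j(p',b')\le0\le U_j(p',b')$. So every box is nonempty, and clipping into it moves a vector by at most the box perturbation.
\item The perspective $f_c$ is Lipschitz on the convex cone $\{|t|\le p\}$, uniformly in $p$ including near $p=0$. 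Its gradient for $p>0$ is $\bigl(c^2-(t/p)^2,\;2(t/p-c)\bigr)$, which is bounded by $4$ when $|t/p|\le1$ and $|c|\le1$. Since the cone is convex, segments stay inside it, and continuity extends the bound to the apex. Thus $|F_c(p',t')-F_c(p,t)|\le 4(\|p'-p\|_1+\|t'-t\|_1)$, with no $1/p_j$ factor.
\item Convexity of $F_c$ gives $F_c(p,(1-\lambda)t+\lambda t^\circ)\le G-\lambda\sigma$ whenever $t\in\cT(\eta,G)$.
\end{enumerate}

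\textbf{Moving from $\eta$ to $\eta'$.} Take $t\in\cT(\eta,G)$ and set $s=(1-\lambda)t+\lambda t^\circ$. Let $s'$ be the coordinatewise clip of $s$ into the box at $(p',b')$. By (a), $\|s'-s\|_1\le\|\eta'-\eta\|_1$, and $s'$ satisfies $|s'_j|\le p'_j$. By (b) and (c), $F_c(p',s')\le G-\lambda\sigma+8\|\eta'-\eta\|_1$. Choose $\lambda=\min\{1,(8\|\eta'-\eta\|_1+|G'-G|)/\sigma\}$; for $\delta$ small this is below $1$, and then $s'\in\cT(\eta',G')$. The objective changes by at most $|\mu'-\mu|+\lambda\|t-t^\circ\|_1+\|s'-s\|_1\le C\delta$, using $\|t-t^\circ\|_1\le2$.

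\textbf{Moving from $\eta'$ to $\eta$.} Take $t'\in\cT(\eta',G')$ and clip it into the box at $\eta$. By (b), this costs at most $8\|\eta'-\eta\|_1+|G'-G|$ of budget. Interpolating toward $t^\circ$ with the same $\lambda$ restores $F_c\le G$ while keeping the box constraints, since the box at $\eta$ is convex and contains $t^\circ$. Together the two directions give nonemptiness and \eqref{eq:lipschitz}, with $C_\ast$ of order $1+2\cdot 8/\sigma$. The calibration case, Assumption~\ref{ass:anchor-general}, would add a correction step: a bounded linear correction restores $Ht=d'$ using cells with interior margins. That is where the general constant enters, but no calibration is assumed here.

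\textbf{Part (ii).} Proposition~\ref{prop:algorithm}(ii) gives $\cT(\tilde\eta_n,G)\subseteq\cT_h^+(\tilde\eta_n,G)\subseteq\cT(\tilde\eta_n,G+e_h)$. Hence each endpoint of $\hat\cI_{n,h_n}$ lies between the corresponding endpoints of $\cI(\tilde\eta_n,G)$ and $\cI(\tilde\eta_n,G+e_{h_n})$. Applying (i) twice bounds its distance to $\cI(\eta,G)$ by $C_\ast(\|\tilde\eta_n-\eta\|_1+h_n^2/4)$; this requires the event that $\tilde\eta_n$ lies in the neighborhood of (i), which has probability tending to one by \eqref{eq:shrinking-region}.

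For the projection, the outer-grid interval over $\cC_n$ is the hull of the intervals $\cI_{h_n}(\eta',G)$ over coherent $\eta'\in\cC_n$ with nonempty sets. It contains $\hat\cI_{n,h_n}$ because $\tilde\eta_n\in\cC_n$, so it is nonempty and the $[0,1]$ fallback is not triggered. Each of these intervals lies within $C_\ast(\sup_{\cC_n\cap\cC_{\mathrm{coh}}}\|\eta'-\eta\|_1+h_n^2/4)$ of $\cI(\eta,G)$ by the same sandwich. Hence both Hausdorff distances are $\Op(r_n+h_n^2)$. Intersecting with $[0,1]$ cannot increase the distance, since $\cI(\eta,G)\subseteq[0,1]$.

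\textbf{Main obstacle.} I expect the main obstacle to be step (b), namely Lipschitz control of the perspective budget uniformly near zero-mass cells. The plan relies on the constraint $|t_j|\le p_j$ being preserved by clipping, which is why the clip into the perturbed box precedes the budget estimate.
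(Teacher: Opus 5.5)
Your proposal is correct and follows the paper's proof. The common steps are the capped box and perspective Lipschitz bounds with no $1/p_j$ factor, interpolation toward the budget-slack anchor $t^\circ$ in both directions, and, for part (ii), the tangent-bracket sandwich between budgets $G$ and $G+e_{h_n}$ on the event that the plug-in and the region lie in the stability neighborhood. The one difference is that in the forward direction you mix with $t^\circ$ before clipping into the perturbed box, which removes the need for the paper's anchor-transfer step (Lemma~\ref{lem:anchor-transfer}). That shortcut is valid because Assumption~\ref{ass:anchor} has no calibration, whereas the paper proves the result under Assumption~\ref{ass:anchor-general}, where clipping would break $Ht=d'$.
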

See Appendix~\ref{pf:prop-stability} for the proof.
Part (i) controls how perturbations of the moment inputs and the budget move the endpoints.
Part (ii) separates the resulting statistical error from the numerical error of the tangent grid.
Under its conditions, the probability of an empty program tends to zero.
Implemented endpoints additionally incur outward-certificate gaps.
The same rate holds when those gaps are $\Op(r_n+h_n^2)$ and numerical-failure probabilities vanish.

The outcome caps explain why small cell probabilities need not destabilize the budget.
On $|t_j|\le p_j$ and $|t_j'|\le p_j'$,
\begin{equation}
    \label{eq:perspective-lip}
    |F_c(p,t)-F_c(p',t')|
    \le
    4\|t-t'\|_1+2\|p-p'\|_1
    .
\end{equation}
The caps force each contribution to vanish with its probability, so the bound contains no inverse-probability factor.
Budget slack permits feasible interpolation, and Assumption~\ref{ass:anchor-general} additionally permits calibration correction.
Without budget slack, endpoint changes can have square-root order even with positive cell masses (Remark~\ref{rem:feasibility-boundary}).
Corollary~\ref{cor:coverage} still applies there.

Bounded moment sampling supplies the rate in \eqref{eq:shrinking-region}.
Under Assumption~\ref{ass:bounded-moments}, fixed dimension, fixed $\alpha$, and diverging source sizes, Bernstein regions enlarged to contain the coherent plug-in satisfy \eqref{eq:shrinking-region}, with $r_n=O(n^{-1/2})$ for comparable source sizes.
Lemma~\ref{lem:primitive-rates} gives the source-specific rate, including zero-probability cells, and the fixed-law CP counterpart.
Endpoint convergence does not remove the scientific uncertainty represented by the limiting width $U-L$.
Persistent external uncertainty or a shrinking anchor margin can prevent contraction without invalidating coverage.

Treatment contrasts use the same projection.
A joint input region supports direct optimization of $\theta_1-\theta_0$ with any cross-arm restrictions retained.
For uncoupled classes, arm intervals with error allocations summing to $\alpha$ can be combined, with normalized fallback $[-1,1]$.
Their endpoint-error rates add when both arms satisfy Proposition~\ref{prop:stability}.

\section{Numerical experiments}
\label{sec:numerical}

This section evaluates the finite reduction, confidence-interval construction, and sensitivity to the declared effect restrictions.
We distinguish changes in the population sharp interval from differences between inferential procedures for the same class.
Rare-cell and active-cap experiments examine loss of empirical support and retention of cellwise restrictions, respectively.
Appendix~\ref{app:detailed-numerical} specifies the generating laws, comparators, and additional designs.

\subsection{Design}
\label{sec:num-design}

We first specify the reference law, scientific classes, and performance measures.
The reference law has $K=5$ visits, $s=4$ non-triggering states, no baseline covariates, and binary adverse outcomes.
Reaching state $s$ triggers absorbing rescue.
The transition intercept is chosen for an equally arm-averaged rescue probability of $0.5$, or $0.75$ where stated.
At crossing visit $k$, the rescued-course mean is $0.20+0.08k/K-0.04a$, and the no-rescue mean exceeds it by $\tau_k=0.08+0.12(K-k)/K$.
Non-crossers have the same outcome under both strategies.
Independent samples of $n$ patients are drawn from each standardized arm law.
Appendix~\ref{app:finite-protocol} gives the complete transition and outcome distributions.

The reference sensitivity class uses one cell per crossing visit, $c=0$, $l=0$, $u=1$, and $\Gamma=0.2$.
We call this the signed-budget class.
An oracle-calibrated version additionally fixes $t_1+t_3$ and $t_5$ at their population values.
Scientific comparisons include range-only and sign-only restrictions, a signed $\delta$-box with $\delta=0.2$, and a binary time-only density-ratio specialization of \citet{tan2025sensitivity}.
Each class is assessed against its own sharp interval.
The parameters $\delta$, $\Gamma$, and the ratio bound $\lambda$ constrain different quantities and are not treated as interchangeable sensitivity scales.

Inferential comparisons hold the signed-budget class fixed.
The default implementation, Cell CP, projects coordinatewise Clopper--Pearson limits for the $2M+3$ features defined in Appendix~\ref{app:inference-comparators}.
Other coordinatewise constructions use Bernoulli--KL, empirical Bernstein, Hoeffding, or Wald limits for the same features.
We also compare joint split-likelihood and likelihood-ratio regions, Aggregate CP, and an endpoint bootstrap.
Aggregate CP retains a population relaxation of the sharp interval.
The remaining procedures optimize the declared cellwise class.
The validity conditions and numerical implementations are specified in Appendix~\ref{app:inference-comparators}.

The primary measures are whole-set coverage, reported interval width, and plug-in Hausdorff error.
Whole-set coverage requires containment of both population endpoints.
Enlargement is reported width minus the sharp width of the declared class and can be negative when containment fails.
For sharp-model procedures it reflects sampling and numerical effects.
For Aggregate CP it also includes the population relaxation gap.
Target coverage records containment of the generating no-rescue mean and is reported separately from set coverage.
Except for the tail conventions of the one-probability diagnostic, the nominal whole-set level is $0.95$.
Contrasts allocate error $0.025$ to each arm.
Finite-state references use exact recursion and analytic or water-filling calculations, whereas continuous-history references use independent simulation.
Procedures compared within an experiment share datasets, and all physical-range fallbacks remain in the summaries.
We report Monte Carlo standard errors (MCSEs) for mean summaries and exact binomial Monte Carlo (MC) intervals for coverage and failure frequencies.
Replication counts and calculation details are specified in Appendix~\ref{app:reporting}.

\subsection{Exactness and numerical control}
\label{sec:num-exact}

We check the elimination of full histories separately from the tangent-grid approximation.
The full-history completion programs have $42$, $682$, and $3110$ mass variables for $(K,s)=(3,4),(5,4),(5,6)$, compared with $3$, $5$, and $5$ cell contributions.
Under identical scientific restrictions and tangent grids, the maximum endpoint and outward-value differences between the independently assembled programs are $5.55\times10^{-17}$ and $7.66\times10^{-15}$, respectively (Table~\ref{tab:completion}).
These calculations check the reduction rather than compare competing scientific models.

Mesh refinement holds the population inputs fixed.
Reducing $h$ from $1/16$ to $1/128$ decreases the inner-primal/outer-value gap from $1.95\times10^{-3}$ to $2.71\times10^{-5}$ at this law (Table~\ref{tab:mesh}).
Numerical diagnostics for the sampling experiments are recorded separately.
Allocating the entire budget independently to every cell increases the population width from $0.144$ to $0.321$, illustrating the effect of relaxing the shared restriction.

\subsection{Scientific information and confidence enlargement}
\label{sec:num-information}

We first compare the information supplied by different scientific classes.
Table~\ref{tab:scientific} uses the reference arm-zero law, whose no-rescue mean is $0.405$, and $n=1000$.
Every displayed class contains the generating law.
The signed budget reduces the sharp width from the range-only value $0.516$ to $0.144$, and oracle calibration reduces it further to $0.0853$.
The $\delta$-box gives a narrower interval than the budget because its restrictions are stronger at these parameter values.
The time-only ratio class imposes a different comparison across crossing strata, as described in Appendix~\ref{app:scientific-comparators}.

\begin{table}[tb]
    \centering
    \caption{Scientific classes at the reference arm-zero law, $n=1000$, and $500$ paired replications.
    Each row uses its own sharp interval and the inference construction specified in Appendix~\ref{app:scientific-comparators}.
    Exact calibration supplies oracle information.}
    \label{tab:scientific}
    \small
    \begin{tabular}{@{}llrrr@{}}
        \toprule
        Scientific class & Inference & Sharp width & Mean width & Set coverage \\
        \midrule
        Range only & Endpoint CP & $0.516$ & $0.568$ & $0.962$ \\
        Signed effects & Endpoint CP & $0.388$ & $0.445$ & $0.960$ \\
        $\delta$-box ($\delta=0.2$) & Cell CP & $0.103$ & $0.200$ & $0.998$ \\
        Time-only ($\lambda=1.25$) & Conditional CP & $0.127$ & $0.260$ & $1.00$ \\
        Time-only ($\lambda=1.5$) & Conditional CP & $0.205$ & $0.343$ & $1.00$ \\
        Time-only ($\lambda=2$) & Conditional CP & $0.295$ & $0.436$ & $1.00$ \\
        Signed budget ($\Gamma=0.2$) & Cell CP & $0.144$ & $0.237$ & $0.998$ \\
        Calibrated budget & Cell CP & $0.0853$ & $0.184$ & $0.998$ \\
        \bottomrule
    \end{tabular}
\end{table}

We next compare confidence constructions for the fixed signed-budget class.
In Table~\ref{tab:inference}, Cell CP has mean width $0.2372$ (MCSE $9.62\times10^{-5}$), compared with $0.2129$ (MCSE $9.41\times10^{-5}$) for Aggregate CP and $0.2018$ (MCSE $1.20\times10^{-4}$) for the endpoint bootstrap.
The aggregate relaxation is sharp at this population law because its maximizing allocation satisfies every outcome cap.
The bootstrap has empirical set coverage $0.960$, with Monte Carlo interval $[0.939,0.975]$.
Among the finite-sample-valid coordinatewise constructions evaluated here, Cell CP has the smallest mean width.
Table~\ref{tab:atomic-kl} additionally shows the effect of protecting different collections of observed features.

\begin{table}[tb]
    \centering
    \caption{Inference for the fixed signed-budget class, $n=1000$, $M=5$, $\Gamma=0.2$, and $500$ paired replications.
    Coordinatewise methods share the same features and optimization.
    Aggregate CP is an outer relaxation; Wald and bootstrap have regular-law justifications.
    MC intervals quantify replication uncertainty.}
    \label{tab:inference}
    \small
    \begin{tabular}{@{}lrrrr@{}}
        \toprule
        Procedure & Mean width & MCSE & Set coverage & $95\%$ MC interval \\
        \midrule
        Cell CP & $0.237$ & $9.62\times 10^{-5}$ & $0.998$ & $[0.989, 1.00]$ \\
        Cell KL & $0.257$ & $9.91\times 10^{-5}$ & $1.00$ & $[0.993, 1.00]$ \\
        Empirical Bernstein & $0.298$ & $9.98\times 10^{-5}$ & $1.00$ & $[0.993, 1.00]$ \\
        Hoeffding & $0.263$ & $9.04\times 10^{-5}$ & $1.00$ & $[0.993, 1.00]$ \\
        Wald & $0.236$ & $9.68\times 10^{-5}$ & $0.998$ & $[0.989, 1.00]$ \\
        Aggregate CP & $0.213$ & $9.41\times 10^{-5}$ & $0.982$ & $[0.966, 0.992]$ \\
        Endpoint bootstrap & $0.202$ & $1.20\times 10^{-4}$ & $0.960$ & $[0.939, 0.975]$ \\
        \bottomrule
    \end{tabular}
\end{table}

Joint regions are evaluated on a separate set of paired datasets.
At the same reference law, their mean widths are $0.2746$ for joint LR and $0.3301$ for joint split, compared with $0.2373$ for Cell CP on those datasets (Table~\ref{tab:joint-regions}).
Thus avoiding coordinatewise error allocation does not reduce target-interval width in this setting.
The paired results under active caps are considered below.

Sample size and sensitivity radius affect different sources of uncertainty.
For the fixed oracle-calibrated class, increasing $n$ from $50$ to $10000$ decreases mean plug-in error from $0.0673$ (MCSE $0.00309$) to $0.00395$ (MCSE $1.23\times10^{-4}$) and Cell CP width from $0.5133$ (MCSE $0.00088$) to $0.1165$ (MCSE $0.00004$), while the sharp width remains $0.0853$ (Table~\ref{tab:accuracy}).
Table~\ref{tab:contrast} reports contrasts formed from separately protected arm intervals.
At $n=10000$ and $\Gamma=0.05$, imposed-set coverage is $0.994$ (95\% MC interval $[0.983,0.999]$) but generating-target coverage is zero (95\% MC interval $[0.000,0.007]$; Table~\ref{tab:radius-grid}).
The population sensitivity curves in Figure~\ref{fig:sensitivity} distinguish these scientific restrictions from sampling uncertainty.

\subsection{Rare cells and active outcome caps}
\label{sec:num-boundary}

We separate uncertainty about an absent empirical cell from the effect of a binding outcome restriction.
The rare-cell diagnostic estimates one Bernoulli probability $p$, with observed outcomes known to be zero and range-only sharp interval $[0,p]$.
At $n=1000$ and $np=0.5$, the zero-count frequency is $0.605$ (95\% MC interval $[0.598,0.612]$).
Wald and bootstrap upper limits vanish at zero counts, giving empirical set coverage $0.395$ for each procedure (95\% MC interval $[0.388,0.402]$).
The four bounded-moment constructions retain positive upper limits and cover in all $20000$ replications at this setting (95\% MC interval $[0.9998,1.0000]$ for each procedure; Figure~\ref{fig:boundary}, left).
This diagnostic isolates uncertainty at zero counts; its tail conventions, stated in Appendix~\ref{app:external-rare}, do not support an equal-level efficiency comparison.

The active-cap design assesses retention of the cellwise scientific model.
Changing the odd-visit crossing means to $q^R_{0,k}=0.96$ and $\tau_k=0.02$ reduces the sharp width to $0.103$, whereas the aggregate population relaxation retains width $0.144$.
At $n=2500$, Aggregate CP is shorter than Cell CP, with mean widths $0.1892$ (MCSE $0.00006$) and $0.2032$ (MCSE $0.00007$).
At $n=10000$, the ordering reverses, with widths $0.1664$ (MCSE $0.00003$) and $0.1569$ (MCSE $0.00005$; Figure~\ref{fig:boundary}, right).
The aggregate relaxation gap persists at the population level even as sampling uncertainty decreases.

The separate joint-region comparison also benefits from retaining active caps.
Joint LR is shorter than Cell CP by $0.00332$ at $n=1000$ and $0.01057$ at $n=10000$, while joint split remains wider (Table~\ref{tab:joint-regions}).
The paired MCSEs of these width differences are $1.63\times10^{-4}$ and $1.46\times10^{-5}$, respectively.
These are comparisons at the specified laws, rather than a general width ordering among input regions.
Binding outcome caps alone do not imply a nonregular sampling law.

\begin{figure}[tb]
    \centering
    \includegraphics[width=0.49\linewidth]{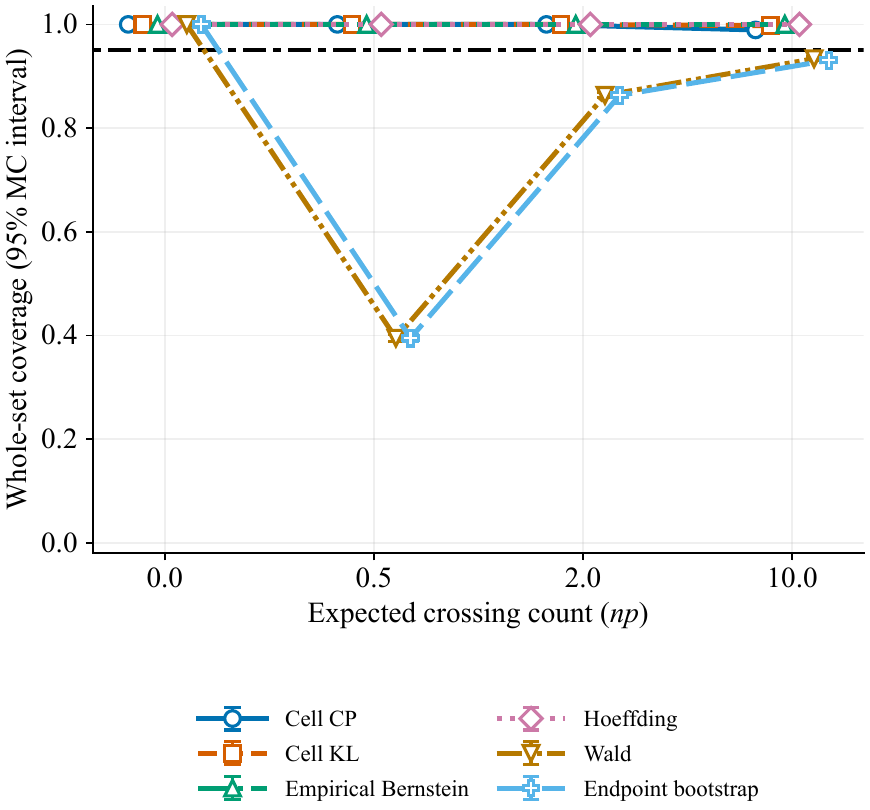}\hfill
    \includegraphics[width=0.49\linewidth]{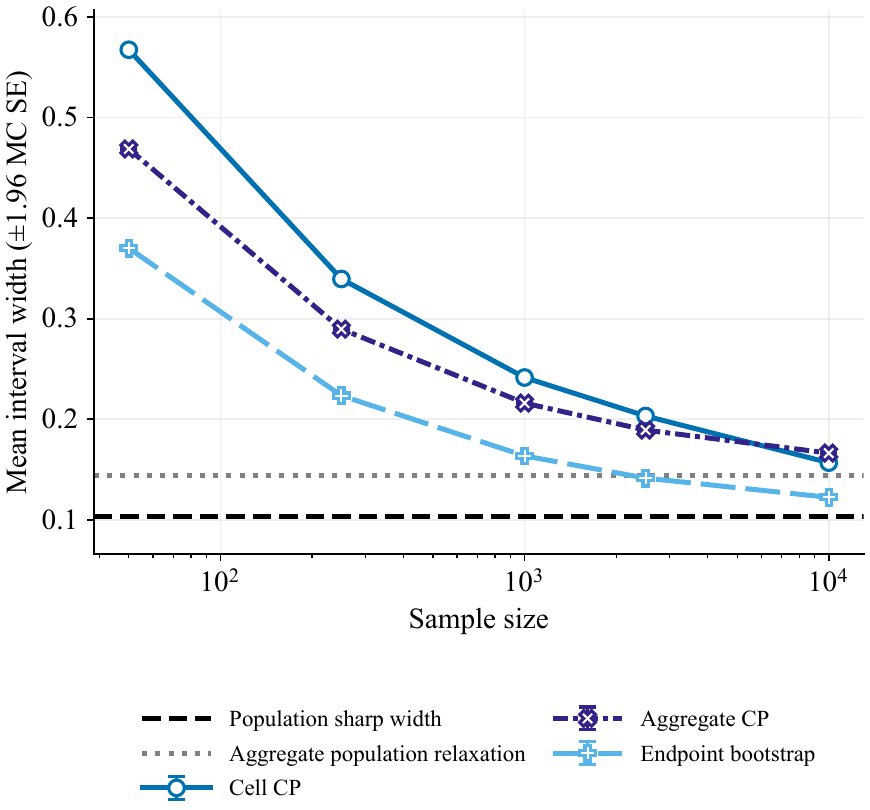}
    \caption{Rare empirical cells and active outcome caps.
    Left: coverage of $[0,p]$ at $n=1000$, with $20000$ replications per setting and $95\%$ binomial Monte Carlo intervals.
    The bounded-moment upper limits use the upper endpoints of two-sided $95\%$ intervals, while the bootstrap uses a one-sided $95\%$ construction.
    Right: mean widths under active outcome caps, with $500$ replications per setting and $\pm1.96$ MCSE.
    Horizontal lines on the right mark the sharp width and aggregate population relaxation.}
    \label{fig:boundary}
\end{figure}

\subsection{Partition refinement and additional designs}
\label{sec:num-observational}

We examine partition refinement and then assess changes in the input-estimation problem and outcome distribution.
The partition experiment uses continuous histories and nested, prespecified cells with $M\in\{1,5,10,20,40,80\}$.
With common sign limits, zero center, and fixed $\Gamma=0.2$, the fine-cell class is contained in the coarse-cell class.
Under heterogeneous caps, the sharp width decreases from $0.1405$ at $M=5$ to $0.0949$ at $M=10$.
Under the original law it is $0.1414$ for both partitions at the displayed precision.
All calculations impose the outcome caps.

Confidence width need not decrease under scientific refinement.
Under heterogeneous caps, refining from $M=5$ to $M=10$ increases mean Cell CP width by $0.00583$ at $n=1000$ but decreases it by $0.01121$ at $n=10000$ (paired MCSE $0.00005$ for the latter difference; Figure~\ref{fig:partition}).
Under the original law, the corresponding changes are increases of $0.00563$ and $0.00180$.
Table~\ref{tab:partition} gives the corresponding enlargement, plug-in errors, and empty-cell fractions.
The comparison supports reporting scientifically prespecified partitions together with their sampling cost, rather than choosing a partition by its observed interval width.
Additional continuous-history configurations are reported in Table~\ref{tab:continuous}.

\begin{figure}[tb]
    \centering
    \includegraphics[width=0.49\linewidth]{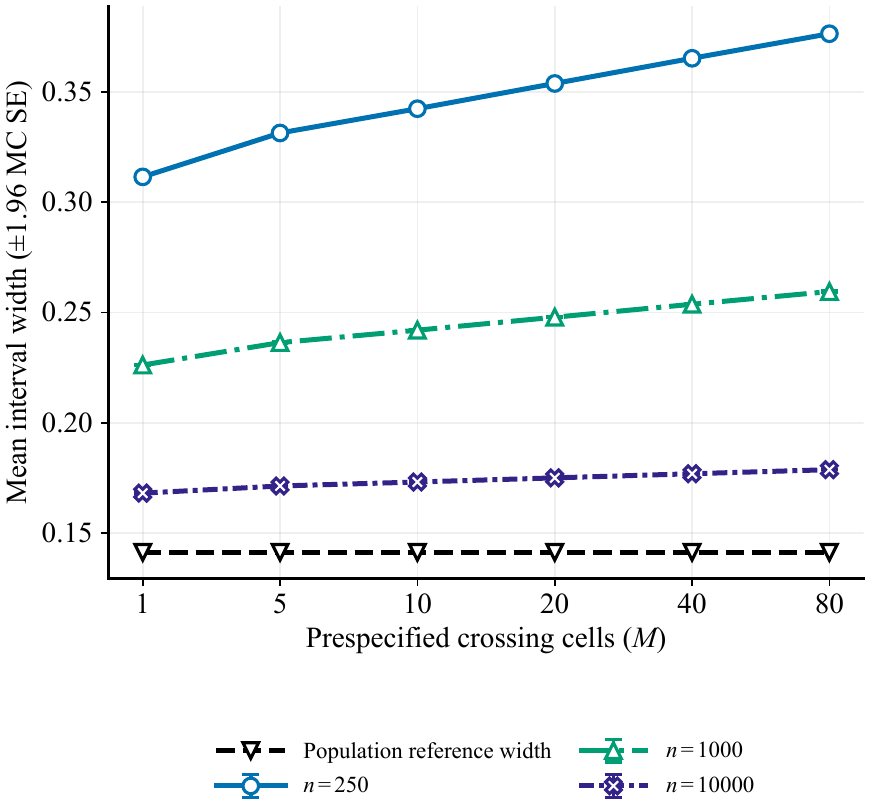}\hfill
    \includegraphics[width=0.49\linewidth]{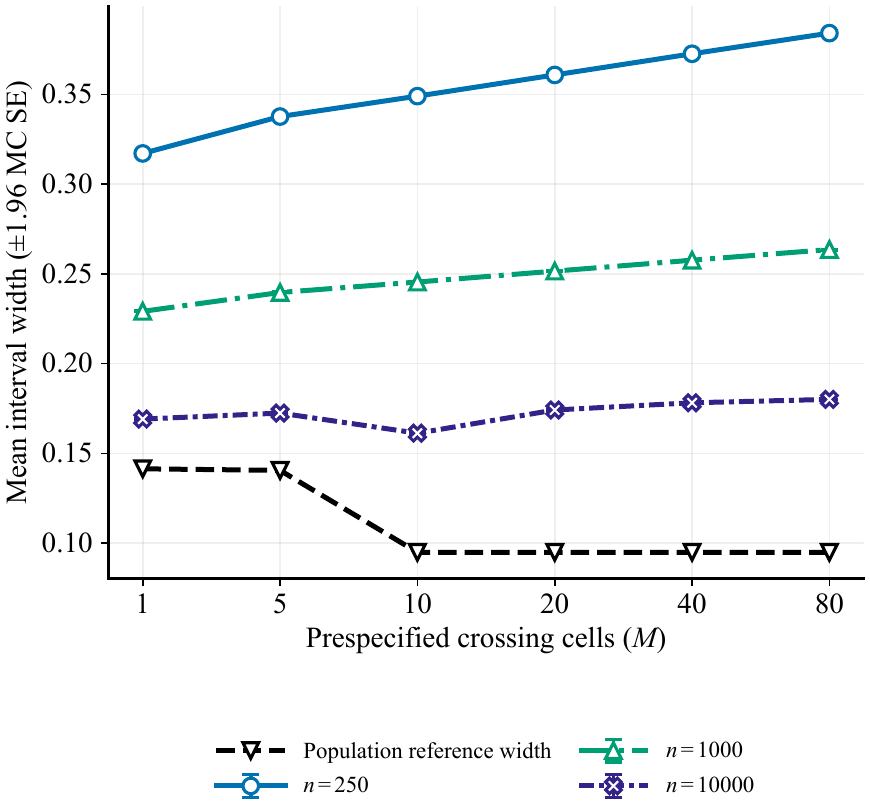}
    \caption{Partition refinement under the original continuous-history law (left) and heterogeneous outcome caps (right).
    Both panels retain all outcome restrictions.
    Solid curves give mean Cell CP widths at fixed $n$, with $\pm1.96$ MCSE; dashed curves give each partition's population sharp width.}
    \label{fig:partition}
\end{figure}

The observational experiment changes baseline assignment and moment estimation while retaining the standardized target.
Under nonlinear assignment at $N=10000$, a correctly specified logistic fit gives mean width $0.1909$ (MCSE $7.66\times10^{-5}$) with sandwich variance and $0.2084$ (MCSE $9.41\times10^{-5}$) with naive variance.
Omitting assignment interactions gives mean observed-mean error $-0.0423$ (MCSE $0.00046$) and set coverage $0.008$ (95\% MC interval $[0.001,0.029]$) despite the sandwich correction (Table~\ref{tab:ps-nonlinear}).
Tables~\ref{tab:ps-linear}--\ref{tab:ps-nonlinear} and Figure~\ref{fig:propensity-coverage} report the full comparison, including small-sample and unaugmented forest-weight diagnostics.

The bounded continuous-outcome experiment preserves the reference conditional means while replacing Bernoulli outcomes by beta outcomes.
It therefore holds the population sharp interval fixed and tests confidence construction beyond binary outcome features (Appendix~\ref{app:continuous-outcome}).
The hybrid procedures combine CP limits for probability features with EB or Hoeffding limits for outcome features.
At $n=1000$, mean widths are $0.2263$ (MCSE $9.70\times10^{-5}$) for Hybrid EB--CP, $0.2617$ (MCSE $9.57\times10^{-5}$) for Hybrid Hoeffding--CP, $0.1839$ (MCSE $9.57\times10^{-5}$) for Cell Wald, and $0.1654$ (MCSE $1.05\times10^{-4}$) for the endpoint bootstrap.
Both hybrid procedures cover in all $500$ replications (95\% MC interval $[0.993,1.000]$), compared with coverage $0.998$ for Cell Wald (95\% MC interval $[0.989,1.000]$) and $0.950$ for bootstrap (95\% MC interval $[0.927,0.967]$).
Table~\ref{tab:continuous-outcome} gives the full sample-size comparison and numerical diagnostics.

\section{Discussion}
\label{sec:discussion}

Under a rescue protocol that remains in force once triggered, the no-rescue mean is determined by the observed law and the average effects of withholding rescue at first-crossing histories.
Every specification of these effects within the outcome range is realized by a full-data law that preserves the observed distribution.
For restrictions on prespecified stratum averages, the sharp interval is exactly the image of a finite convex program.
Confidence projection inherits the validity of a simultaneous input region without requiring endpoint differentiability or positive crossing-cell probabilities.
The local feasibility conditions additionally yield endpoint-error bounds.
The experiments show how retaining cellwise outcome caps can improve interval width in larger samples while preserving the declared scientific class.
In practice, an analysis should state the partition, effect limits, reference effects, calibration, and budget scale.
Displaying the sharp interval separately from its confidence enlargement over a grid of radii distinguishes scientific uncertainty from sampling uncertainty.

The first-crossing reduction uses the protocol's deterministic, absorbing rescue rule.
If rescue is administered before the trigger or withheld after it, the protocol crossing time need not be the first point of divergence between the actual course and sustained non-rescue.
Discarding noncompliant records can introduce selection bias.
A decomposition at the first actual rescue is possible under consistency assumptions, but changes the strata and the rescued-course reference.
A hybrid analysis incorporating non-rescued crossers must address selection into non-rescue and support for subsequent decisions.
Such an extension requires its own attainability argument.

Three extensions merit further study.
First, the finite-sample joint likelihood region is developed for binary independent-arm data.
Under the stated bounded-score conditions, coordinatewise regions also accommodate known-propensity weighting and bounded continuous outcomes with finite-sample guarantees.
For estimated-propensity and augmented moments, validity instead follows under the stated asymptotic conditions.
Tests tailored to the target while retaining joint error control may reduce the conservativeness of coordinatewise error allocation.
Second, endpoint stability is established for a fixed partition under local feasibility conditions.
Without budget slack, endpoint changes can have square-root order even when all cells have positive mass.
Data-adaptive partition selection is a separate extension that requires accounting for selection or simultaneous inference over the candidate partitions.
Third, additional restrictions can be incorporated directly when they are jointly convex in the optimization variables.
For positive-mass cells, ordering cell-average effects is linear when the cell probabilities are fixed, but is generally nonconvex when those probabilities vary over a confidence region.
Confidence projection for such restrictions requires a formulation that retains them, for example a globally solved nonconvex program.

The framework retains the original no-rescue target and makes assumptions about unobserved continuations explicit.
Its structural reduction, finite optimization, and confidence projection separate scientific restrictions, sampling uncertainty, and numerical error within one analysis.

\section*{Acknowledgements}
Shu Tamano was supported by JSPS KAKENHI Grant Number 25K24203.

\section*{Code Availability}
\label{sec:code-availability}
The Python scripts reproducing the numerical experiments are available at \url{https://github.com/shutech2001/first-crossing-no-rescue-bounds}.

\clearpage
\appendix
\section{Omitted proofs}
\label{app:proofs}

This appendix proves the results of Sections~\ref{sec:geometry}--\ref{sec:statistics}.
Appendices~\ref{pf:thm-trace} and~\ref{pf:prop-sharp} use the original outcome range;
the remaining arguments use the normalized range $[0,1]$.
Conditional statements hold almost everywhere under the relevant risk-set or crossing distributions, zero-mass components need no conditional-mean specification, and regular conditional kernels exist on the standard Borel spaces \citep[Section~4.1.3]{durrett2019probability}.

\subsection{Proof of Theorem~\ref{thm:trace}}
\label{pf:thm-trace}

We first construct joint potential trajectories that realize prescribed crossing means, and then identify the compatible continuation means.
Let $P_W$ be the observed marginal law of $W$, and let $Q_a(w,\dd v)$ be a conditional law of $V=(L_1,D_1,\ldots,L_K,D_K,Y)$ given $(W,A)=(w,a)$.
Baseline positivity determines this kernel up to $P_W$-null sets, since $\int_Bg_a\dd P_W=0$ and $g_a>0$ almost everywhere force $P_W(B)=0$.
Thus $P_W(\dd w)Q_a(w,\dd v)$, with $a$ inserted, is the standardized protocol law $P_a$.

\begin{lemma}[Joint full-data completion]
    \label{lem:full-data-completion}
    For each arm and visit, prescribe a measurable crossing mean $\tilde{q}_{a,k}\in[y_-,y_+]$ almost everywhere on that arm's crossing set.
    Under the observed setup and baseline positivity, these means are jointly realizable by a law in $\cP(P)$.
    Under every law in $\cP(P)$, the protocol and no-rescue trajectories have the same first crossing and crossing history within each arm.
\end{lemma}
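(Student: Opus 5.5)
The plan is to build the full-data law arm by arm on a common probability space, so that the observed record is unchanged and the potential no-rescue variables are added by explicit kernels. The starting point is $P_W(\dd w)$. Given $W=w$, draw $A$ from the observed conditional law. Independently of $A$, draw for each $a\in\{0,1\}$ a protocol trajectory $V^a=(L^a_1,D^a_1,\ldots,L^a_K,D^a_K,Y^a)$ from $Q_a(w,\cdot)$, with the two arms conditionally independent given $W$. Then set $O=O^A$. This gives baseline consistency, and the observed marginal is $P_W(\dd w)P(A\in\cdot\mid w)Q_A(w,\dd v)$. That equals $P$ because $Q_a$ is a version of the conditional law given $(W,A)=(w,a)$; positivity ensures this version is pinned down $P_W$-almost everywhere, so the standardized law is $P_a$.

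Next, the no-rescue trajectory for arm $a$ is defined deterministically before crossing. Compute $T^a$ from $V^a$ via \eqref{eq:rule}. Put $L^{a,\NR}_k=L^a_k$ for $k\le T^a\wedge K$, and $Y^{a,\NR}=Y^a$ on $\{T^a>K\}$. On $\{T^a=k\le K\}$, the no-rescue continuation is drawn afresh: set $L^{a,\NR}_{k+1},\ldots,L^{a,\NR}_K$ equal to any fixed element of the (nonempty) state spaces. Then draw $Y^{a,\NR}$ from a two-point law on $\{y_-,y_+\}$, conditionally independent of everything else given $(W,X^a_k)$, with mean $\tilde q_{a,k}(X^a_k)$. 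Concretely, $Y^{a,\NR}=y_-+(y_+-y_-)\1(U_a\le(\tilde q_{a,k}(X^a_k)-y_-)/(y_+-y_-))$ with $U_a$ uniform and independent of everything else. Measurability of $\tilde q_{a,k}$ makes this a measurable construction. The range condition makes the probability lie in $[0,1]$ and keeps $Y^{a,\NR}\in[y_-,y_+]$. Conditioning on $X^a_k=x,T^a=k$ then gives $\EE[Y^{a,\NR}\mid X_k^a,T^a=k]=\tilde q_{a,k}$ almost everywhere. Values of $\tilde q_{a,k}$ off the crossing set, or on null sets, are irrelevant.

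Assumption~\ref{ass:baseline} holds because the whole potential collection is a measurable function of $(W,V^0,V^1,U_0,U_1)$, which was generated independently of $A$ given $W$. Assumption~\ref{ass:temporal} holds by construction, since the states coincide through and including the crossing measurement and the outcomes coincide for non-crossers. The observed marginal is untouched, so the law lies in $\cP(P)$. Both arms' prescriptions are realized simultaneously, and the common prefixes are preserved.

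For the last claim, take any law in $\cP(P)$ and argue by induction on $k$. Suppose $T^a\ge k$. Temporal consistency gives $\bar L^{a,\NR}_k=\bar L^a_k$, and $D^a_j=0$ for $j<k$ coincides with the never-rescued decisions. Hence the decision history of the no-rescue course at visit $k$ equals $X^a_k$, and $z_k$ takes the same value at both. The first $k$ at which $z_k=0$ is therefore the same for both courses, with the same history. The main point needing care is defining the no-rescue decision history with $\bar D$ identically zero, which agrees with $\bar D^a_{k-1}$ exactly on $\{T^a\ge k\}$. Once that is fixed, the rest is bookkeeping with regular conditional kernels.
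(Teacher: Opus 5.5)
Your proposal is correct and follows essentially the same construction as the paper. Both use:
\begin{itemize}
\item conditionally independent arm trajectories drawn from $Q_a(W,\cdot)$;
\item no-rescue states copied through $T^a$ and fixed afterwards;
\item a uniform-threshold two-point draw of $Y^{a,\NR}$ with the prescribed crossing mean;
\item assignment drawn from $g_a(W)$ independently of the potential variables.
\end{itemize}
The prefix-agreement claim likewise comes from temporal consistency on $\{T^a\ge k\}$ together with the all-zero no-rescue decision history. Drawing $A$ before rather than after the potential trajectories makes no difference to the argument.
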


\begin{proof}[Proof of Lemma~\ref{lem:full-data-completion}]
    Agreement through first crossing follows by finite induction.
    The visit-one states agree by temporal consistency, so their trigger values agree.
    If no crossing has occurred before visit $k$, temporal consistency equates the states through $k$, and hence the current trigger.
    The histories therefore agree through first crossing, or throughout follow-up if no crossing occurs.
    On the latter event, outcome consistency also gives agreement of outcomes.

    To construct a law, draw $W\sim P_W$ and, given $W$, draw $V^0$ and $V^1$ independently with laws $Q_0(W,\cdot)$ and $Q_1(W,\cdot)$.
    Insert $(W,a)$ to obtain $O^a$, and draw independent uniforms $U_0,U_1$, independently of the protocol trajectories.
    Choose a fixed point $l_k^\dagger$ in each nonempty state space and set
    \begin{equation*}
        L_k^{a,\NR}
        =
        \begin{cases}
            L_k^a,&k\le T^a
            ,\\
            l_k^\dagger,&k>T^a
            .
        \end{cases}
    \end{equation*}
    Set all no-rescue decisions to zero and define
    \begin{equation}
        \label{eq:completion}
        Y^{a,\NR}
        =
        \begin{cases}
            Y^a,
            &T^a>K
            ,\\
            y_-+(y_+-y_-)\1\Biggl\{
                U_a\le\dfrac{\tilde{q}_{a,k}(X_k^a)-y_-}{y_+-y_-}
            \Biggr\},
            &T^a=k
            ,
        \end{cases}
    \end{equation}
    after modifying the prescribed means on crossing-null sets to lie in the outcome range everywhere.
    The construction is measurable and satisfies temporal consistency, and independence of $U_a$ gives the prescribed crossing means.

    Finally draw $A$ from $g_a(W)$, conditionally independently of all potential variables given $W$, and observe $O=O^A$.
    For bounded measurable $f$,
    \begin{equation*}
        \EE[f(O)]
        =
        \sum_a\int g_a(w)
        \Biggl\{
            \int f(w,a,v)Q_a(w,\dd v)
        \Biggr\}P_W(\dd w)
        =
        \int f(o)P(\dd o)
        ,
    \end{equation*}
    so the entire observed law is preserved, and baseline consistency and conditional independence hold by construction.
    Both arm completions are defined before assignment, which proves joint attainability.
    The conditional independence used to couple the arms is a device of the construction and imposes no restriction on the declared class.
\end{proof}

The same construction proves the range-only interval in \eqref{eq:range}.
Choosing all crossing means equal to $y_-+v(y_+-y_-)$, $v\in[0,1]$, realizes every point of the interval, and the reverse inclusion follows by splitting the target into crossing and non-crossing contributions and applying the outcome range.

\begin{proof}[Proof of Theorem~\ref{thm:trace}]
    \emph{Part (i).}
    For bounded measurable $v$ and $k<K$, iterated expectation gives
    \begin{equation*}
        \EE_a[\1(T>k)v(X_k)m_{k+1}(X_{k+1})]
        =
        \EE_a[\1(T>k)v(X_k)Y^{a,\NR}]
        =
        \EE_a[\1(T>k)v(X_k)m_k(X_k)]
        ,
    \end{equation*}
    because $X_{k+1}$ contains $X_k$ and the visit-$k$ decision, and the known trigger determines $\{T>k\}$ within the risk set.
    This proves the nonterminal equations in \eqref{eq:continuation}.
    The terminal equation follows from outcome consistency, and the same calculation with $Y$ shows that $m^R$ satisfies the same equations.
    At crossing, $m_k-m_k^R=\tau_k$ and the potential-outcome range gives \eqref{eq:compat}.
    The difference $m-m^R$ is zero at terminal non-crossing histories and follows the homogeneous averaging recursion elsewhere, so backward induction gives $m-m^R=\Ext\tau$, uniquely on each risk set.
    Splitting the target by $T$ yields \eqref{eq:causal-decomposition}.

    \emph{Part (ii).}
    Apply Lemma~\ref{lem:full-data-completion} with crossing means $q_k+\tau_k$.
    The resulting law has the prescribed bridge, and part (i) identifies its continuation as $m^R+\Ext\tau$.
    Prescribing both arm bridges in the same construction gives joint attainment.
\end{proof}

\subsection{Proof of Proposition~\ref{prop:sharp}}
\label{pf:prop-sharp}

\begin{proof}[Proof of Proposition~\ref{prop:sharp}]
    Every admissible law has a bridge in $\cD$ and target \eqref{eq:causal-decomposition}, and Theorem~\ref{thm:trace} realizes every bridge in $\cD$, so the affine image \eqref{eq:sharp-image} is exactly the identified set.
    The image of a convex class under a real-valued affine functional is an interval, bounded here by the outcome range, and an endpoint belongs to the image precisely when the respective extremum is attained.
\end{proof}

\subsection{Proof of Proposition~\ref{prop:algorithm}}
\label{pf:prop-algorithm}

We prove finite attainability and control the tangent approximation on the capped perspective domain.

\begin{lemma}[Closed capped perspective]
    \label{lem:closed-perspective}
    On $p\ge0$, $|t|\le p$, the function $f_c$ is continuous and convex, with $0\le f_c(p,t)\le4p$ for $|c|\le1$.
    For $s\ge0$, the conditions $f_c(p,t)\le s$, $(t-cp)^2\le ps$, and $\|(2(t-cp),p-s)\|_2\le p+s$ are equivalent.
\end{lemma}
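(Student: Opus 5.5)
We establish the three claims of Lemma~\ref{lem:closed-perspective} in turn: the bound, then continuity and convexity, then the equivalence of the three constraints. Throughout we work on $D=\{(p,t):p\ge0,\ |t|\le p\}$. For $p>0$ we have $|t-cp|\le|t|+|c|p\le2p$, hence $0\le f_c(p,t)=(t-cp)^2/p\le4p$. At $p=0$ the cap forces $t=0$, and $f_c=0$ by definition, so the bound $0\le f_c\le4p$ holds on all of $D$.

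\emph{Continuity.} At points with $p>0$, $f_c$ is a ratio of continuous functions with nonvanishing denominator. At the origin, the bound $0\le f_c\le4p$ forces $f_c(p,t)\to0=f_c(0,0)$ as $(p,t)\to(0,0)$ within $D$.

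\emph{Convexity.} Since $(t-cp)^2/p=g(t-cp,p)$ with $g(x,p)=x^2/p$ the perspective of $x\mapsto x^2$, and $g$ is convex on $p>0$, its composition with the linear map $(p,t)\mapsto(t-cp,p)$ is convex on $D\cap\{p>0\}$. To include the origin, we check the convexity inequality directly for segments with an endpoint at the origin. By homogeneity, $f_c(\lambda p,\lambda t)=\lambda f_c(p,t)$ for $\lambda\ge0$, so
\[
f_c\bigl(\lambda(p,t)+(1-\lambda)(0,0)\bigr)=\lambda f_c(p,t)+(1-\lambda)f_c(0,0),
\]
and the inequality holds with equality. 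For two points with $p>0$, every point of the segment between them also has $p>0$, so the open-region convexity applies. Hence $f_c$ is convex on the convex set $D$.

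\emph{Equivalence.} Fix $s\ge0$ and consider two cases.
\begin{itemize}
\item If $p>0$, then $f_c\le s$ is equivalent to $(t-cp)^2\le ps$ upon multiplying by $p$.
\item If $p=0$, then $t=0$ on $D$, so $f_c=0\le s$ and $(t-cp)^2=0\le0=ps$; both conditions hold.
\end{itemize}
For the cone form, both $p+s\ge0$ and the norm are nonnegative, so $\|(2(t-cp),p-s)\|_2\le p+s$ is equivalent to the squared inequality $4(t-cp)^2+(p-s)^2\le(p+s)^2$. Expanding gives $(p+s)^2-(p-s)^2=4ps$, so this reduces to $4(t-cp)^2\le4ps$, which is the rotated-cone condition.

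The only delicate point is the boundary $p=0$, where the perspective is not given by the formula. It is handled entirely by the cap $|t|\le p$, which forces $t=0$ there. Without this cap, $f_c$ would be unbounded near the origin; for instance $(p,t)=(\epsilon,\epsilon^{1/3})$ gives $f_c\to\infty$. This is why the statement is restricted to $|t|\le p$.
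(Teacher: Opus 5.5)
Your proof is correct and follows essentially the same argument as the paper. The cap gives $|t-cp|\le 2p$, which yields the bound and continuity at the origin; convexity comes from the perspective structure; the first equivalence follows by multiplying by $p$, with the cap forcing $t=0$ at $p=0$; and the second follows by squaring the norm inequality, whose sides are nonnegative. The one minor difference is at the origin, where you verify convexity directly via positive homogeneity rather than extending it by continuity as the paper does, and this is equally valid.
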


\begin{proof}[Proof of Lemma~\ref{lem:closed-perspective}]
    For $p>0$, $f_c$ is the perspective of the convex quadratic $(x-c)^2$ \citep[Section~3.2.6]{boyd2004convex}.
    The cap implies $|t-cp|\le2p$, which gives the bound and continuity at $(0,0)$, and convexity extends to that point by continuity.
    Multiplication by $p$ proves the first equivalence at positive mass, while the cap forces $t=0$ at zero mass.
    Squaring the norm inequality, whose right side is nonnegative, proves the second equivalence.
\end{proof}

\begin{proof}[Proof of Proposition~\ref{prop:algorithm}]
    \emph{Part (i).}
    Every admissible law satisfies $0\le b_j+t_j\le p_j$ and \eqref{eq:cell-class}, so its contribution vector lies in $\cT(\eta,G)$.
    Conversely, for any vector in that set, assign the no-rescue mean $(b_j+t_j)/p_j$ to every crossing history in cell $j$ when $p_j>0$, and arbitrary values in $[0,1]$ on zero-mass cells.
    Lemma~\ref{lem:full-data-completion} realizes these means without altering the observed law, and their contribution vector is exactly $t$.
    The realizing bridge equals the assigned mean minus $q_k(x)$ and can vary within a cell, so the construction neither requires constant bridge effects nor strengthens the average-effect restrictions.
    The contribution box is compact, the capped perspective is continuous, and the constraints are convex, so the nonempty feasible set has attained extrema and an interval image;
    Lemma~\ref{lem:closed-perspective} gives the conic representation.

    \emph{Part (ii).}
    For $p>0$, write $x=t/p\in[-1,1]$.
    For a tangent location $v$,
    \begin{equation}
        \label{eq:tangent-identity}
        f_c(p,t)-\{2(v-c)t+(c^2-v^2)p\}
        =
        p(x-v)^2
        ,
    \end{equation}
    so zero and every affine term minorize $f_c$, and a grid point within $h/2$ of $x$ gives error at most $ph^2/4$.
    At zero mass both functions vanish.
    Summation proves \eqref{eq:mesh-error}, and retaining all other constraints proves \eqref{eq:brackets}.
    At $G=0$, the nonnegative terms sum to zero exactly when $t_j=c_jp_j$ for every cell, which justifies the exact convention.

    \emph{Part (iii).}
    Coherence bounds $\mu,p,b$, the outcome caps bound $t$, and $d=Ht$ bounds the calibration coordinates, so all closed joint feasible sets lie in a common compact box.
    They contain the exact nonempty set, and nested grids increase the tangent envelope, so outer minima increase and outer maxima decrease.
    A subsequential limit of outer optimizers satisfies the closed linear constraints and the closed input region, and its budget is at most $G$ because the approximate budgets are at most $G+e_h$ and the perspective is continuous.
    The limit is therefore exactly feasible, and continuity of the objective identifies the limiting extrema with the exact extrema.
\end{proof}

The reported endpoints are outward bounds on the optimal values of the linear programs in Section~\ref{sec:algorithm}, obtained from the following weak-duality certificate;
Algorithm~\ref{alg:boundary} and the proof of Corollary~\ref{cor:coverage} use it.
For a nonempty bounded linear program $\inf\{q^\top x:Bx\le b,\ a\le x\le z\}$ and any $\lambda\ge0$, define
\begin{equation}
    \label{eq:lp-cert}
    v_-(\lambda)
    =
    -\lambda^\top b
    +\sum_i\min\{(q+B^\top\lambda)_ia_i,(q+B^\top\lambda)_iz_i\}
    .
\end{equation}

\begin{lemma}[Outward objective certificate]
    \label{lem:outward-certificate}
    The value $v_-(\lambda)$ is a lower bound on the minimizing optimum.
    Negating the corresponding certificate for the negative objective gives an upper bound on a maximizing optimum.
\end{lemma}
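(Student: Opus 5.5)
This is a weak-duality statement. The plan is to compare $v_-(\lambda)$ directly with the objective value at an arbitrary feasible point, so that no strong duality or attainment is needed. Nonemptiness and boundedness only ensure that the infimum is a finite number.

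Fix $\lambda\ge0$ and any feasible $x$, so that $Bx\le b$ and $a\le x\le z$. Because $\lambda\ge0$ and $b-Bx\ge0$, we have $\lambda^\top(Bx-b)\le0$. This gives the Lagrangian minorant
\begin{equation*}
    q^\top x
    \ge
    q^\top x+\lambda^\top(Bx-b)
    =
    -\lambda^\top b+(q+B^\top\lambda)^\top x
    .
\end{equation*}
The next step bounds the linear term coordinatewise over the box. Each $x_i$ lies in $[a_i,z_i]$, and a linear function of one variable on an interval is minimized at an endpoint. Hence
\begin{equation*}
    (q+B^\top\lambda)_ix_i
    \ge
    \min\{(q+B^\top\lambda)_ia_i,(q+B^\top\lambda)_iz_i\}
    .
\end{equation*}
Summing over $i$ shows $q^\top x\ge v_-(\lambda)$ for every feasible $x$. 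Taking the infimum over feasible $x$ then gives the first claim.

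For the maximizing statement, write $\sup q^\top x=-\inf(-q)^\top x$ over the same feasible set. Applying the bound with $-q$ gives $\inf(-q)^\top x\ge v_-^{(-q)}(\lambda)$, so $-v_-^{(-q)}(\lambda)$ is an upper bound on the supremum.

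The only delicate point is the requirement that the box bounds $a,z$ be finite. This makes the coordinatewise minimum well defined even when the corresponding coefficient is zero. In the paper's programs this holds because coherence and the outcome caps give finite bounds on all variables, as noted in Proposition~\ref{prop:algorithm}(iii). I would state this explicitly. I would also remark that the bound holds for any numerically returned $\lambda$, provided it is first projected onto $\lambda\ge0$ and $v_-(\lambda)$ is evaluated exactly. This is what makes the bound outward regardless of solver accuracy.
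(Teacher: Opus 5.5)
Your proposal is correct and follows the paper's own argument: weak duality via the Lagrangian minorant $q^\top x+\lambda^\top(Bx-b)\le q^\top x$ at feasible $x$, minimized coordinatewise over the box, with the maximizing case obtained by negating the objective. The paper also notes that equality rows, such as the calibration constraints $Ht=d$, can be handled by paired inequalities or by unrestricted multipliers, and you could add that remark.
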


\begin{proof}[Proof of Lemma~\ref{lem:outward-certificate}]
    For every feasible $x$, $q^\top x+\lambda^\top(Bx-b)\le q^\top x$, and minimizing the left side over the coordinate box gives \eqref{eq:lp-cert} by weak Lagrangian duality \citep[Section~5.1]{boyd2004convex}.
    Equalities may be represented by paired inequalities or unrestricted multipliers, and no dual stationarity condition is required.
\end{proof}

\subsection{Proof of Corollary~\ref{cor:coverage}}
\label{pf:cor-coverage}

We first record the coordinatewise input regions of Section~\ref{sec:statistics} with their boundary conventions and verify the closedness and measurability that Corollary~\ref{cor:coverage} requires;
the joint region is treated in Lemma~\ref{lem:joint-region} below.
For a Bernoulli count $k$ from $n\ge1$ trials and coordinate error $\delta$, let $B^{-1}_{u,v}$ denote the beta quantile function.
The Clopper--Pearson interval is
\begin{equation}
    \label{eq:cp}
    \left[
        \begin{cases}
            0,&k=0,\\
            B^{-1}_{k,n-k+1}(\delta/2),&k>0,
        \end{cases}
        \quad
        \begin{cases}
            1,&k=n,\\
            B^{-1}_{k+1,n-k}(1-\delta/2),&k<n,
        \end{cases}
    \right]
\end{equation}
with $[0,1]$ when there are no trials and the singleton value for a known constant coordinate.
Inverting the two binomial tails bounds each noncoverage probability by $\delta/2$ \citep{clopper1934use};
at $k=0$ the upper limit is $1-(\delta/2)^{1/n}$, and at $k=n$ the lower limit is $(\delta/2)^{1/n}$.
The empirical Bernstein half-width \eqref{eq:eb} is \citet[Theorem~4]{maurer2009empirical} scaled to an interval of length $B_q$ and applied to both signs with $\gamma=\alpha/(2D)$;
the pairwise variance $\{n(n-1)\}^{-1}\sum_{i<j}(Z_i-Z_j)^2$ in that theorem equals the unbiased sample variance $\hat{v}$.
Hoeffding's inequality $\PP\{|\bar{Z}-\EE[Z]|>x\}\le2\exp(-2nx^2/B^2)$ gives the half-width $B_q\sqrt{\log(2D/\alpha)/(2n_q)}$ at coordinate error $\alpha/D$ \citep{hoeffding1963probability}.
A union bound over the $D$ coordinates gives simultaneous coverage at least $1-\alpha$, and coordinate types may be mixed.
The resulting region is an intersection of inverse images of closed intervals with the known linear identities and is therefore closed.
Its feasibility events and optimization values are measurable.
For fixed scientific inputs and grid, all coherent calibrated feasible variables lie in a common compact box, so feasibility and objective sublevel sets are projections of closed constraint sets along compact variables and are closed as functions of the interval endpoints, which are themselves measurable.

\begin{proof}[Proof of Corollary~\ref{cor:coverage}]
    On $E_n=\{\eta\in\cC_n\}$, every $t\in\cT(\eta,G)$ remains feasible for the confidence program, so its infimum is at most $L(\eta,G)$ and its supremum at least $U(\eta,G)$.
    By \eqref{eq:brackets} and Lemma~\ref{lem:outward-certificate}, tangent relaxation and outward certificates only enlarge these limits.
    The physical-range fallback contains the population interval, and intersecting with $[0,1]$ removes no population-feasible value.
    Hence $E_n\subseteq\{\cI(\eta,G)\subseteq\cI_n^{\mathrm{out}}\}$, which proves \eqref{eq:coverage}.
    The same event implies every containment on a prespecified finite or countable sensitivity grid that uses the region and has nonempty population classes, and taking probabilities preserves the finite-sample, asymptotic, or uniform qualification of the input guarantee.
\end{proof}

The joint region \eqref{eq:joint-split} requires a separate argument at empty empirical and population categories.

\begin{lemma}[Joint multinomial likelihood region]
    \label{lem:joint-region}
    For binary independent-arm observations and a fixed split into two nonempty parts, \eqref{eq:joint-split} is a closed convex region on the probability simplex, with
    \begin{equation*}
        \PP_\pi\{\pi\in\cR_n^{\mathrm{split}}\}
        \ge
        1-\alpha
    \end{equation*}
    for every population probability vector, including vectors with zero coordinates.
    Its linear image in $(\mu,p,b)$ supplies an input region for Corollary~\ref{cor:coverage}.
    Joint optimization with the scientific constraints is convex and has an exponential-cone representation in addition to the perspective cones.
\end{lemma}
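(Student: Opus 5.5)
We treat closedness and convexity, coverage, and the exponential-cone and projection claims in turn, and expect coverage to be the main obstacle.

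\emph{Closedness and convexity.} For each part $v$, write the criterion as
\begin{equation*}
\Lambda_v(\pi)=\prod_{j,y}\bigl(\tilde{\pi}_{-v,jy}/\pi_{jy}\bigr)^{N_{v,jy}},
\end{equation*}
where the training probabilities $\tilde{\pi}_{-v}$ are fixed positive numbers given the other half. The map $\pi\mapsto\sum_{j,y}N_{v,jy}\log(\tilde{\pi}_{-v,jy}/\pi_{jy})$ is a nonnegative combination of the convex, lower semicontinuous functions $-\log\pi_{jy}$. These take the value $+\infty$ at $\pi_{jy}=0$ when the count is positive, and are identically zero when the count is zero. Composing with the increasing convex function $\exp$ preserves convexity and lower semicontinuity on the extended reals. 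Averaging the two parts therefore gives a convex, lower semicontinuous criterion. Its sublevel set at level $1/\alpha$, intersected with the closed simplex, is closed and convex. Boundary points with a zero coordinate at a positive-count atom are excluded automatically, because the criterion is infinite there.

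\emph{Coverage.} Coverage is the main point and follows the universal-inference argument of \citet{wasserman2020universal}. Condition on the split and on part $3-v$. Then $\tilde{\pi}_{-v}$ is fixed, and part $v$ consists of $n_v$ i.i.d.\ draws from $\pi$. Let $S=\{(j,y):\pi_{jy}>0\}$. Atoms outside $S$ have zero counts almost surely, so their factors equal one. Hence
\begin{equation*}
\EE_\pi[\Lambda_v(\pi)]=\prod_{i=1}^{n_v}\sum_{(j,y)\in S}\pi_{jy}\,\tilde{\pi}_{-v,jy}/\pi_{jy}
=\Bigl(\sum_{S}\tilde{\pi}_{-v,jy}\Bigr)^{n_v}\le1 .
\end{equation*}
This is exactly where zero population coordinates are handled: the inequality $\sum_S\tilde{\pi}\le1$ needs only that $\tilde{\pi}_{-v}$ is a probability vector, which the smoothing guarantees. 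Taking expectations over the conditioning, the average $\tfrac12(\Lambda_1+\Lambda_2)$ has mean at most one. Markov's inequality then gives
\begin{equation*}
\PP_\pi\Bigl\{\tfrac12(\Lambda_1+\Lambda_2)>1/\alpha\Bigr\}\le\alpha ,
\end{equation*}
so coverage holds at every $\pi$.

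\emph{Exponential-cone representation and projection.} The map $\pi\mapsto(\mu,p,b)$ is linear, with $p_j=\pi_{j0}+\pi_{j1}$, $b_j=\pi_{j1}$, and $\mu=\sum_j\pi_{j1}$. Coverage of $\pi$ therefore implies coverage of $\eta$. The image of the compact convex region is closed and convex. Measurability follows from the fact that the criterion is a measurable function of the counts.

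For the conic form, introduce variables $s_v$ with $s_v\ge\sum N_{v,jy}\log(\tilde{\pi}_{-v,jy}/\pi_{jy})$ and $r_v\ge e^{s_v}$, and impose $r_1+r_2\le2/\alpha$. The first inequality splits into per-atom epigraphs $\lambda_{v,jy}\ge-\log\pi_{jy}$, each written as $(\pi_{jy},1,-\lambda_{v,jy})\in K_{\exp}$. The second inequality is also an exponential-cone constraint. Zero-count atoms are simply omitted. Combined with the perspective cones of Lemma~\ref{lem:closed-perspective} and the linear caps, the joint program is convex. Corollary~\ref{cor:coverage} then applies verbatim.
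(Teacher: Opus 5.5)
Your proposal is correct and follows the paper's proof essentially step for step. Coverage comes from the conditional multinomial expectation $(\sum_{\cS}\tilde{\pi}_{-v,i})^{n_v}\le1$ together with Markov's inequality, and closedness and convexity come from convex lower semicontinuous log-ratios (you exponentiate each one, while the paper uses log-sum-exp, which is equivalent); the conic form uses exponential-cone epigraphs, and the projection is the linear map to $(\mu,p,b)$. The one point to tighten is measurability: the paper justifies it directly by noting that, for fixed $n$ and a value-independent split, the optimization values depend on the data only through finitely many possible count vectors.
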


\begin{proof}[Proof of Lemma~\ref{lem:joint-region}]
    Write $S_v(\pi)$ for the exponentiated likelihood ratio contributed by split $v$.
    Conditional on the other split, its training distribution $\tilde{\pi}_{-v}$ is fixed and normalized.
    For true support $\cS=\{i:\pi_i>0\}$, the multinomial theorem gives
    \begin{equation*}
        \EE_\pi[S_v(\pi)\mid N_{3-v}]
        =
        \Biggl(
            \sum_{i\in\cS}\tilde{\pi}_{-v,i}
        \Biggr)^{n_v}
        \le
        1
        ,
    \end{equation*}
    and a positive count at a population-zero atom has probability zero.
    Consequently $\EE_\pi[(S_1+S_2)/2]\le1$ although the two ratios are dependent, and Markov's inequality proves the confidence statement, including the boundary conventions.

    Each log ratio is a constant minus a nonnegative weighted sum of $\log\pi_i$, hence convex and lower semicontinuous with the specified infinite values.
    The log-sum-exponential function is convex and nondecreasing in each argument, so the logarithm of $(S_1+S_2)/2$ has the same properties and its sublevel set is closed and convex on the compact simplex.
    Logarithmic and exponential epigraphs have exponential-cone representations, and the budget has the perspective-cone representation of Lemma~\ref{lem:closed-perspective}.
    The linear image is compact, possibly empty, and on the coverage event contains the true observed moments, so the projection argument of Corollary~\ref{cor:coverage} applies with the physical-range fallback for an empty feasible intersection.
    Optimization values are measurable because, for each sample size, there are finitely many count vectors and the split is fixed independently of the values.
\end{proof}

Profiling gives an equivalent test-inversion form of either joint program.
For a candidate mean $x$, minimize the region's criterion over $(\pi,t)$ subject to the scientific constraints, coherence, and $\mu+\sum_jt_j=x$; the accepted values are exactly the projection of that region under its stated calibration.
This connects the computation to the set-inference literature \citep{chernozhukov2007estimation,romano2010inference}, which also develops other criteria and calibrations.

\subsection{Proof of Proposition~\ref{prop:stability}}
\label{pf:prop-stability}

We separate deterministic perturbation of the feasible set from the sampling rates of the input regions.

\begin{assumption}[Calibration-anchor stability]
    \label{ass:anchor-general}
    Fix $M,c,l,u,H$ and coherent $\eta$, with $-1\le l_j\le0\le u_j\le1$.
    Use $r$ independent calibration rows after a prespecified algebraic reduction, retaining known identities for omitted coordinates.
    There exist $t^\circ\in\cT(\eta,G)$, $\sigma>0$, a set of cells $I\subseteq\{1,\ldots,M\}$, and $R\in\RR^{M\times r}$ such that
    \begin{equation}
        \label{eq:anchor-general}
        \begin{split}
            &HR
            =
            I_r
            ,
            \quad
            R_{j\cdot}=0
            \quad
            (j\notin I)
            ,\\
            &L_j(p,b)+\sigma
            \le
            t_j^\circ
            \le
            U_j(p,b)-\sigma
            \quad
            (j\in I)
            ,
            \quad
            F_c(p,t^\circ)
            \le
            G-\sigma
            .
        \end{split}
    \end{equation}
    Here $I_r$ is the identity matrix and $R_{j\cdot}$ is row $j$ of $R$.
    Without calibration, take $I=\varnothing$ and omit $R$ and the calibration equations.
\end{assumption}
$R$ assigns calibration corrections to the cells in $I$, which therefore need interior margins;
Assumption~\ref{ass:anchor} is the case $r=0$, $I=\varnothing$.

Fix the scientific inputs in Assumption~\ref{ass:anchor-general}.
All neighborhoods below are relative to coherent primitives satisfying the retained calibration identities and to nonnegative budgets.
Write $\cK(p,b)=\prod_j[L_j(p,b),U_j(p,b)]$ for the contribution box and let $Q't$ denote coordinatewise projection onto $\cK(p',b')$.

\begin{lemma}[Capped perturbation bounds]
    \label{lem:capped-perturbation}
    The perspective satisfies \eqref{eq:perspective-lip}.
    For coherent moment pairs,
    \begin{equation}
        \label{eq:box-lip}
        |L_j(p,b)-L_j(p',b')|
        \vee
        |U_j(p,b)-U_j(p',b')|
        \le
        |p_j-p_j'|+|b_j-b_j'|
        ,
    \end{equation}
    and, for $t\in\cK(p,b)$,
    \begin{equation}
        \label{eq:proj-box}
        \|Q't-t\|_1
        \le
        \|p'-p\|_1+\|b'-b\|_1
        .
    \end{equation}
\end{lemma}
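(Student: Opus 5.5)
The proof splits into the three displayed bounds, each verified coordinatewise and then summed. The same elementary tools serve throughout: the caps $|t_j|\le p_j$ and $|t_j'|\le p_j'$, the restriction $|c_j|\le1$, and the fact that $\max$ and $\min$ of two functions are $1$-Lipschitz in the sup of their arguments.

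For \eqref{eq:perspective-lip}, the bound is proved for a single cell and then summed over $j$. The first step handles the case where both masses are positive. Write $x=t/p$ and $x'=t'/p'$, both in $[-1,1]$, so that
\[
f_c(p,t)=p(x-c)^2=(t-cp)(x-c).
\]
The difference then decomposes as
\[
(t-cp)(x-c)-(t'-cp')(x'-c)=\{(t-t')-c(p-p')\}(x-c)+(t'-cp')\{(x-c)-(x'-c)\}.
\]
The first term is at most $2|t-t'|+2|p-p'|$ because $|x-c|\le2$. The second term is where an inverse mass could appear, since $x-x'$ carries $1/p$. To avoid this, I would use the symmetric decomposition instead, namely
\[
f_c(p,t)-f_c(p',t')=(x-c)\{(t-cp)-(t'-cp')\}+(t'-cp')\{(x-c)-(x'-c)\}
\]
averaged with the decomposition with the roles of the two points swapped. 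The cross terms then combine into $\tfrac12\{(t-cp)+(t'-cp')\}(x-x')$, and $x-x'$ multiplied by masses can be rewritten as $\{(t-t')p'+t'(p'-p)\}/p$. This is the main obstacle, and an alternative may be cleaner. The map $(p,t)\mapsto(t-cp)^2/p$ is convex and positively homogeneous of degree one, so on the capped cone its gradient is bounded. The partial derivatives are
\[
\partial_t f=2(x-c)\in[-4,4],\qquad \partial_p f=c^2-x^2\in[-1,1].
\]
Both bounds depend only on $x$, not on the mass. Integrating along the segment from $(p,t)$ to $(p',t')$, which stays inside the convex capped cone, gives
\[
|f_c(p,t)-f_c(p',t')|\le4|t-t'|+|p-p'|,
\]
which is within the stated constants. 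The segment may pass through the apex $(0,0)$; continuity of $f_c$ there (Lemma~\ref{lem:closed-perspective}) together with the fact that the apex has measure zero on the segment handles this. Zero-mass endpoints follow by the same continuity argument.

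For \eqref{eq:box-lip}, I apply the $1$-Lipschitz property of $\max$ and $\min$ to the pieces of \eqref{eq:cell-box}. The differences of the pieces satisfy
\[
|l_jp_j-l_jp_j'|\le|p_j-p_j'|,\qquad |b_j-b_j'|,\qquad |u_j(p_j-p_j')|\le|p_j-p_j'|,
\]
\[
|(p_j-b_j)-(p_j'-b_j')|\le|p_j-p_j'|+|b_j-b_j'|.
\]
Taking the maximum over the pieces gives the stated bound for both $L_j$ and $U_j$.

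For \eqref{eq:proj-box}, fix $t\in\cK(p,b)$. Coordinatewise projection moves $t_j$ by $\max\{L_j'-t_j,0\}+\max\{t_j-U_j',0\}$. Since $L_j\le t_j\le U_j$, this is at most $\max\{|L_j-L_j'|,|U_j-U_j'|\}$. One point needs checking: the projected box must be nonempty, with $L_j'\le U_j'$. This follows from coherence of $(p',b')$ together with $l_j\le u_j$; in the stability setting where $l_j\le0\le u_j$, zero lies in the box. Applying \eqref{eq:box-lip} and summing over $j$ completes the argument.
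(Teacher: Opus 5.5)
Your argument is correct and is essentially the paper's: the mass-free derivative bounds $|2(x-c)|\le4$ and $|c^2-x^2|\le1$ are integrated along the segment, with continuity of $f_c$ handling a zero-mass endpoint (the only place the apex can occur, since interior points of a segment between positive-mass points have positive mass); one-Lipschitz $\max/\min$ gives \eqref{eq:box-lip}; and a coordinate projection moves each coordinate by at most the endpoint change, with $0\in\cK(p',b')$ under $l_j\le0\le u_j$ giving nonemptiness. You can drop the abandoned algebraic decomposition, and you should drop the claim that coherence and $l_j\le u_j$ alone give $L_j'\le U_j'$: it fails, for instance, when $l_j>0$ and $b_j'=p_j'>0$, so you need the sign conditions you then invoke.
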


\begin{proof}[Proof of Lemma~\ref{lem:capped-perturbation}]
    For $p>0$ and $x=t/p$,
    \begin{equation*}
        \partial_t f_c=2(x-c)
        ,
        \quad
        \partial_p f_c=c^2-x^2
        ,
    \end{equation*}
    which on $|x|,|c|\le1$ are bounded in absolute value by four and two.
    Integrating along the segment between two positive-mass points gives the Lipschitz bound; if an endpoint has zero mass, truncate the segment and use continuity from Lemma~\ref{lem:closed-perspective}.
    Summing over cells proves \eqref{eq:perspective-lip} without differentiating at zero.

    Maxima and minima of two numbers are one-Lipschitz in the maximum norm, so \eqref{eq:cell-box} with $|l_j|,|u_j|\le1$ gives \eqref{eq:box-lip}.
    Coherence and $l_j\le0\le u_j$ put zero in every contribution interval, so both boxes are nonempty.
    A coordinate projection moves a feasible coordinate by at most the corresponding endpoint change, which gives \eqref{eq:proj-box} after summation.
\end{proof}

We next transfer the anchor to a nearby coherent input vector.
Let $t^\circ,\sigma,I,R$ be the anchor quantities and set
\begin{equation*}
    \varepsilon=\|\eta'-\eta\|_1+|G'-G|
    ,
    \quad
    \kappa=1+\|H\|_{1\to1}
    ,
    \quad
    \beta=\|R\|_{1\to1}\kappa
    ;
\end{equation*}
without calibration, omit all correction terms and take $\beta=0$.

\begin{lemma}[Transfer of the anchor]
    \label{lem:anchor-transfer}
    For sufficiently small coherent perturbations,
    \begin{equation*}
        v^\circ=Q't^\circ
        ,
        \quad
        t^{\circ\prime}
        =
        v^\circ+R(d'-Hv^\circ)
    \end{equation*}
    satisfies the new calibration, lies in the new box with margins at least $\sigma/2$ on $I$, and has budget at most $G'-\sigma/2$.
    Moreover, $\|t^{\circ\prime}-t^\circ\|_1\le(1+\beta)\varepsilon$.
\end{lemma}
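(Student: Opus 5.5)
We verify the four claims in order: the distance bound, the calibration equation, the box margins on $I$, and the budget. Each uses Lemma~\ref{lem:capped-perturbation} together with the anchor properties \eqref{eq:anchor-general}.

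\emph{Distance.} Since $t^\circ\in\cK(p,b)$, the projection bound \eqref{eq:proj-box} gives $\|v^\circ-t^\circ\|_1\le\|p'-p\|_1+\|b'-b\|_1\le\varepsilon$. The anchor satisfies $Ht^\circ=d$, so
\[
d'-Hv^\circ=(d'-d)+H(t^\circ-v^\circ).
\]
Hence $\|d'-Hv^\circ\|_1\le\varepsilon+\|H\|_{1\to1}\varepsilon=\kappa\varepsilon$. Applying $R$ bounds the correction by $\|R\|_{1\to1}\kappa\varepsilon=\beta\varepsilon$, and the triangle inequality gives $\|t^{\circ\prime}-t^\circ\|_1\le(1+\beta)\varepsilon$.

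\emph{Calibration.} Using $HR=I_r$, we get $Ht^{\circ\prime}=Hv^\circ+(d'-Hv^\circ)=d'$. If the retained known identities for omitted coordinates hold at $\eta'$ (which is assumed in the neighborhood), the reduced rows suffice.

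\emph{Box and margins.} For $j\notin I$, row $j$ of $R$ vanishes, so $t^{\circ\prime}_j=v^\circ_j$, which lies in $[L_j(p',b'),U_j(p',b')]$ by construction of the projection. For $j\in I$, we first argue that $t_j^\circ$ lies at distance at least $\sigma-\varepsilon$ from the new endpoints, by \eqref{eq:box-lip}. Hence the projection leaves this coordinate unchanged once $\varepsilon<\sigma$. The correction then moves it by at most $\beta\varepsilon$. So the margin is at least $\sigma-(1+\beta)\varepsilon\ge\sigma/2$ whenever $\varepsilon\le\sigma/\{2(1+\beta)\}$. In particular $|t^{\circ\prime}_j|\le p'_j$, because the new box enforces the caps. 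This is needed to apply the perspective Lipschitz bound.

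\emph{Budget.} This is the step requiring care: both points must lie in the capped domain, and this was just verified. By \eqref{eq:perspective-lip},
\[
F_c(p',t^{\circ\prime})\le F_c(p,t^\circ)+4(1+\beta)\varepsilon+2\varepsilon\le G-\sigma+(6+4\beta)\varepsilon .
\]
Also $G\le G'+\varepsilon$. The budget is therefore at most $G'-\sigma+(7+4\beta)\varepsilon$, which is at most $G'-\sigma/2$ once $\varepsilon\le\sigma/\{2(7+4\beta)\}$.

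Taking $\varepsilon$ below the smaller of the two thresholds gives all the claims simultaneously. This threshold defines "sufficiently small" in the lemma. Without calibration, the correction terms disappear and the same argument applies with $\beta=0$.
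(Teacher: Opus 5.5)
Your proof is correct and follows the paper's argument. It uses the projection bound and the calibration-residual estimate $\kappa\varepsilon$, then a correction of norm $\beta\varepsilon$ supported on $I$, margin bookkeeping, and finally \eqref{eq:perspective-lip} applied only after box feasibility is established, giving the same bound $G'-\sigma+(7+4\beta)\varepsilon$. Your observation that the projection leaves the coordinates in $I$ untouched is a slight sharpening of the same bookkeeping: it gives a margin loss of $(1+\beta)\varepsilon$ rather than the paper's $(2+\beta)\varepsilon$, and your explicit thresholds make "sufficiently small" concrete.
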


\begin{proof}[Proof of Lemma~\ref{lem:anchor-transfer}]
    Lemma~\ref{lem:capped-perturbation} gives displacement at most $\varepsilon$ and $\|d'-Hv^\circ\|_1\le\kappa\varepsilon$.
    The correction has norm at most $\beta\varepsilon$, is supported on $I$, and restores calibration because $HR=I_r$.
    Projection, moving endpoints, and correction use at most $(2+\beta)\varepsilon$ of the original margins, so for sufficiently small perturbations the new vector lies in the box with margins at least $\sigma/2$ on $I$.
    Applying \eqref{eq:perspective-lip} after box feasibility has been established,
    \begin{equation*}
        F_c(p',t^{\circ\prime})
        \le
        F_c(p,t^\circ)+\{4(1+\beta)+2\}\varepsilon
        \le
        G'-\sigma+(7+4\beta)\varepsilon
        ,
    \end{equation*}
    and further shrinking the neighborhood retains slack $\sigma/2$.
    The displacement bound follows by the triangle inequality.
\end{proof}

\begin{proof}[Proof of Proposition~\ref{prop:stability}]
    \emph{Part (i): forward perturbation.}
    Lemma~\ref{lem:anchor-transfer} gives local nonemptiness.
    For any $t\in\cT(\eta,G)$, let $v=Q't$.
    The preceding bounds give
    \begin{equation*}
        \|v-t\|_1\le\varepsilon
        ,
        \quad
        \|d'-Hv\|_1\le\kappa\varepsilon
        ,
        \quad
        F_c(p',v)\le G'+7\varepsilon
        .
    \end{equation*}
    Define
    \begin{equation*}
        \lambda=\frac{2(7+4\beta)\varepsilon}{\sigma}
        ,
        \quad
        w=(1-\lambda)v+\lambda t^{\circ\prime}
        ,
        \quad
        t'=w+R(d'-Hw)
        ,
    \end{equation*}
    taking the neighborhood small enough that $\lambda\le1$.
    The mixture has box margins at least $\lambda\sigma/2$ on $I$.
    Its calibration residual is $(1-\lambda)(d'-Hv)$, so the final correction has norm at most $\beta\varepsilon$.
    It preserves the box because $\lambda\sigma/2=(7+4\beta)\varepsilon\ge\beta\varepsilon$, and restores calibration.
    Convexity and then \eqref{eq:perspective-lip} give
    \begin{equation*}
        F_c(p',t')
        \le
        G'+7\varepsilon-\lambda\sigma/2+4\beta\varepsilon
        \le
        G'
        ,
    \end{equation*}
    so $t'$ is feasible.
    Both $v$ and $t^{\circ\prime}$ have $\ell^1$ norm at most one, so
    \begin{equation}
        \label{eq:feasible-transfer}
        \|t'-t\|_1
        \le
        (1+\beta)\varepsilon+2\lambda
        \le
        C_0\varepsilon
        ,
        \quad
        C_0=1+\beta+\frac{4(7+4\beta)}{\sigma}
        .
    \end{equation}

    \emph{Part (i): reverse perturbation and endpoints.}
    We apply the same construction in the reverse direction.
    Project any new feasible vector onto the original box, mix with the original anchor using the same $\lambda$, and restore the original calibration.
    The original anchor has at least the margins and slack $\sigma/2$ used above, so the same bounds give an original feasible vector within $C_0\varepsilon$.
    Applying both transfers to the two endpoint optimizers and adding $|\mu'-\mu|$ proves \eqref{eq:lipschitz} with $C_\ast=1+C_0$; no cell probability outside $I$ is divided by.
    All nearby primitives have anchors with common margin $\sigma/2$ and the same right inverse, so the argument also gives a uniform pairwise Lipschitz bound on a smaller neighborhood.

    \emph{Part (ii): sampling and numerical relaxation.}
    The shrinking region puts the plug-in and confidence programs in this common neighborhood.
    Let $D_n$ denote the left side of \eqref{eq:shrinking-region} and set $e_n=h_n^2/4$.
    Then $D_n=\Op(r_n)$.
    With probability tending to one, every coherent input in the region and the budgets $G\pm e_n$ belong to the common stability neighborhood.
    The lower budget is eventually positive because the anchor implies $G\ge\sigma>0$, and local anchors give nonemptiness.
    For the plug-in outer endpoints, the tangent inclusions imply
    \begin{equation*}
        \begin{split}
            L(\tilde{\eta}_n,G+e_n)
            &\le\hat{L}_{n,h_n}\le L(\tilde{\eta}_n,G)
            ,\\
            U(\tilde{\eta}_n,G)
            &\le\hat{U}_{n,h_n}\le U(\tilde{\eta}_n,G+e_n)
            ,
        \end{split}
    \end{equation*}
    and uniform local stability bounds their deviations from the population endpoints by a constant times $D_n+e_n$.

    The same uniform bound controls the confidence projection.
    Every confidence-outer feasible pair obeys $F_c(p',t)\le G+e_n$, so its objective lies between $L(\eta',G+e_n)$ and $U(\eta',G+e_n)$, which differ from the population endpoints by at most $C(D_n+e_n)$ for a common finite $C$.
    In the opposite directions, the confidence region contains the coherent plug-in and hence its exact feasible interval.
    Consequently,
    \begin{equation*}
        \begin{split}
            &L(\eta,G)-C(D_n+e_n)
            \le
            L_{n,h_n}^{\mathrm{out}}
            \le
            L(\tilde{\eta}_n,G)
            \le
            L(\eta,G)+CD_n
            ,\\
            &U(\eta,G)-CD_n
            \le
            U(\tilde{\eta}_n,G)
            \le
            U_{n,h_n}^{\mathrm{out}}
            \le
            U(\eta,G)+C(D_n+e_n)
            .
        \end{split}
    \end{equation*}
    This uses containment of the plug-in rather than containment of the truth on every realization.
    It proves \eqref{eq:rate} and shows that empty-program probabilities tend to zero.
    By \eqref{eq:brackets} at $\eta'$ and uniform pairwise stability, the inner--outer endpoint gap is $O(e_n)$; implemented endpoints add outward-certificate gaps, and the rate persists when these are $\Op(r_n+h_n^2)$ and numerical failures have vanishing probability.
\end{proof}

We next verify the input-region rate for bounded moment sampling.
Project the raw observed-coordinate means onto the fixed coherence polytope in Euclidean distance, retain source means for external coordinates, and reconstruct redundant calibration coordinates by their known identities.
Enlarge every primitive and aggregate interval just enough to include its value at this coherent plug-in.

\begin{lemma}[Rates for primitive regions]
    \label{lem:primitive-rates}
    Under Assumption~\ref{ass:bounded-moments}, fixed dimension, fixed $\alpha$, and diverging relevant sample sizes, this Bernstein construction satisfies \eqref{eq:shrinking-region} with
    \begin{equation}
        \label{eq:rn}
        r_n
        =
        n^{-1/2}
        +\sum_{j=1}^M\sqrt{p_j/n}
        +M/n
        +n_{\mathrm{ext}}^{-1/2}
        +n_{\mathrm{ext}}^{-1}
        ,
    \end{equation}
    where $n$ is the trial sample size and $n_{\mathrm{ext}}$ the external sample size.
    Omit the external terms for absent or known calibration and add source-specific terms for multiple sources.
    For binary independent-arm data, CP trial limits give the same fixed-law conclusion.
\end{lemma}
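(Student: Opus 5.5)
The argument splits the quantity in \eqref{eq:shrinking-region} into the plug-in error $\|\tilde{\eta}_n-\eta\|_1$ and the radius of the enlarged Bernstein region intersected with $\cC_{\mathrm{coh}}$. Both will be bounded coordinatewise by sums of the empirical Bernstein half-widths \eqref{eq:eb} and the raw deviations $|\bar V_q-\zeta_q|$. With fixed dimension and fixed $\alpha$, the logarithmic factors are constants, so only the orders of $\sqrt{\hat v_q/n_q}$ and $B_q/n_q$ matter.

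\emph{Step 1: raw deviations.} For each trial feature, Chebyshev gives $|\bar V_q-\zeta_q|=\Op(\sqrt{v_q/n})$, where $v_q$ is the population variance. For $\omega_a\1(J=j)$ and $\omega_aY\1(J=j)$, boundedness of the weights (known propensities bounded away from zero) gives $v_q\lesssim p_j$. These features therefore contribute $\Op(\sqrt{p_j/n})$; note that this is zero when $p_j=0$. The features for $\mu$ and the two aggregates contribute $\Op(n^{-1/2})$, and the external scores contribute $\Op(n_{\mathrm{ext}}^{-1/2})$.

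\emph{Step 2: half-widths.} Since $\EE[\hat v_q]=v_q$ and $\hat v_q\ge0$, Markov's inequality gives $\hat v_q=\Op(v_q)$ uniformly over the finitely many features. Hence $\sqrt{\hat v_q/n_q}=\Op(\sqrt{p_j/n})$ for the cell features, while the bias term $7B_q\log(4D/\alpha)/\{3(n_q-1)\}$ is $O(1/n)$. Summed over the $2M$ cell features, these terms produce the $M/n$ term in \eqref{eq:rn}, and the external features produce $n_{\mathrm{ext}}^{-1}$. This handles zero-probability cells without any lower bound on $p_j$: there $\hat v_q=0$ almost surely and only the $1/n$ term survives.

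\emph{Step 3: coherent projection and enlargement.} The Euclidean projection onto the fixed closed convex coherence polytope is nonexpansive, and the true $\eta$ lies in that polytope. The plug-in error is therefore at most a dimension constant times the raw error. Reconstructing redundant calibration coordinates by known linear identities is Lipschitz. Enlarging each interval to include the plug-in therefore increases its radius by at most the plug-in error. Every $\eta'$ in the region then satisfies $|\eta'_q-\zeta_q|\le|\eta'_q-\bar V_q|+|\bar V_q-\zeta_q|\le 2w_q^{\mathrm{EB}}+\|\tilde\eta_n-\eta\|_1+|\bar V_q-\zeta_q|$, and summing over coordinates gives $\Op(r_n)$ with $r_n$ as in \eqref{eq:rn}. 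Because the coordinates of $\eta$ are either features or fixed linear combinations of them, the $\ell^1$ distance is controlled by the feature distances. With multiple sources, the same bookkeeping is applied source by source.

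\emph{Step 4: Clopper--Pearson.} For binary independent-arm data the CP limits are treated separately. At a fixed law with $0<p_j<1$, standard beta-quantile asymptotics give CP half-widths of order $\sqrt{p_j/n}$. At $p_j=0$ the count is zero almost surely, so the interval is $[0,1-(\delta/2)^{1/n}]$, whose width is $O(1/n)$. The case $p_j=1$ is symmetric.

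The main obstacle is Step 2 near zero mass: the variance-adaptive width has to be shown to scale with $\sqrt{p_j}$ rather than with the Hoeffding order $B_q/\sqrt n$. That is why I would rely on $\hat v_q=\Op(v_q)$ via Markov's inequality, rather than on consistency of $\hat v_q$, which fails to give a rate when $v_q\to0$ relative to its fluctuations.
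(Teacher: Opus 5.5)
Your proposal is correct and follows the paper's proof essentially step for step. The shared steps are the second-moment bound $\EE[(\omega_a f)^2]\lesssim p_j$ with Chebyshev for the raw errors, and Markov's inequality giving $\hat v_q=\Op(p_j)$ without needing $np_j\to\infty$. They also include exact vanishing at $p_j=0$, nonexpansiveness of the Euclidean projection onto the coherence polytope followed by interval enlargement, and a separate fixed-law treatment of Clopper--Pearson limits with $O(n^{-1})$ boundary widths. The only difference is for interior probabilities: where you cite standard beta-quantile asymptotics, the paper argues directly that the beta shapes are proportional to $n$ with means within $O(n^{-1})$ of the empirical frequency and variances $O(n^{-1})$, so Chebyshev bounds the quantile deviations by $O(n^{-1/2})$ at the fixed tail level.
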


\begin{proof}[Proof of Lemma~\ref{lem:primitive-rates}]
    Bounded second moments control both sampling error and Bernstein widths.
    Let $\epsilon_A>0$ be a lower bound on the known baseline assignment probabilities.
    For $f=\1(J=j)$ or $Y\1(J=j)$,
    \begin{equation*}
        \EE[(\omega_af)^2]
        =\EE_a[f^2/g_a(W)]
        \le\epsilon_A^{-1}p_j
        ,
    \end{equation*}
    with constant one for independent arm samples.
    At positive $p_j$, the sample-mean error is $\Op(\sqrt{p_j/n})$ and the expected unbiased sample variance is at most a constant times $p_j$, so Markov's inequality gives $\hat{v}_j=\Op(p_j)$, including along positive-mass sequences without $np_j\to\infty$.
    At $p_j=0$, the feature and its empirical variance vanish almost surely.
    Thus crossing-feature Bernstein widths are $\Op(\sqrt{p_j/n}+n^{-1})$; overall outcomes and aggregates contribute $\Op(n^{-1/2}+n^{-1})$, and bounded external coordinates contribute their source-specific analogues.
    Summing gives \eqref{eq:rn} for raw errors and widths.

    Coherent projection and interval enlargement retain this rate.
    Write $\hat{\eta}_n^{\obs}$ for the raw observed moments and $\tilde{\eta}_n^{\obs}$ for their projection.
    The true observed vector lies in the nonempty closed convex coherence polytope, and Euclidean projection is nonexpansive, so
    \begin{equation*}
        \|\tilde{\eta}_n^{\obs}-\eta^{\obs}\|_2
        \le
        \|\hat{\eta}_n^{\obs}-\eta^{\obs}\|_2
        .
    \end{equation*}
    Fixed-dimensional norm equivalence gives the same $\ell^1$ rate, the displacement from the raw means is $\Op(r_n)$, and fixed linear aggregates and reconstructed coordinates retain it.
    Every point of an enlarged interval is within the raw estimation error, the original interval width, and the plug-in displacement of the true coordinate, so summation proves \eqref{eq:shrinking-region};
    enlargement also preserves the original confidence event.

    For CP trial limits at a fixed probability in $(0,1)$, the empirical frequency has root-$n$ error.
    With probability tending to one, the beta shapes in \eqref{eq:cp} are proportional to $n$, their means differ from the empirical frequency by $O(n^{-1})$, and their variances are $O(n^{-1})$, so Chebyshev's inequality bounds the quantile deviations by $O(n^{-1/2})$ at fixed tail probabilities.
    At probability zero or one, the explicit boundary formulas give $O(n^{-1})$ widths.
    The same projection and enlargement argument then proves the CP result without a normal approximation at a vanishing probability.
\end{proof}

For fixed dimension and comparable source sizes, \eqref{eq:rn} is $O(n^{-1/2})$ because $\sum_j\sqrt{p_j}\le\sqrt{M}$.
For the conditional-RMS convention \eqref{eq:conditional-budget}, set $G(p)=\rho^2\sum_jp_j$ in the perturbation argument; then $|G(p')-G(p)|\le\rho^2\|p'-p\|_1$, and the same conclusions hold under the corresponding strict budget slack.

\begin{remark}[Failure without feasibility slack]
    \label{rem:feasibility-boundary}
    Positive cell masses alone do not ensure Lipschitz endpoint stability.
    Set $p_1=p_2=1/2$, $b_1=b_2=1/4$, $\mu=1/2$, $c=0$, $l_j=-1$, and $u_j=1$.
    Calibrate $t_1=d$ for $0<d<1/4$.
    The contribution boxes are $[-1/4,1/4]$ and the budget is $2d^2+2t_2^2\le G$.
    At $G_0=2d^2$, only $t_2=0$ is feasible, whereas at $G=G_0+\epsilon$ for $0<\epsilon<1/8$, the sharp interval is
    \begin{equation*}
        [\mu+d-\sqrt{\epsilon/2}
        ,
        \quad
        \mu+d+\sqrt{\epsilon/2}]
        .
    \end{equation*}
    Its endpoint displacement has square-root rather than Lipschitz order, although both cell masses are positive.
    The confidence inclusion in Corollary~\ref{cor:coverage} remains valid at $G_0$.
\end{remark}

\subsection{Augmented observed-moment inference}
\label{app:augmented-moments}

We give the observed-moment expansion for the augmented estimator mentioned in Section~\ref{sec:statistics}.

For arm $a$ and a bounded observed feature $f_q$, define $m_{a,q}(w)=\EE[f_q(O)\mid W=w,A=a]$ and $\zeta_{a,q}=\EE_a[f_q]$.
Partition the $N$ records into a fixed number of folds independently of their values, fit $\hat{g}_a^{(-v(i))}$ and $\hat{m}_{a,q}^{(-v(i))}$ on the folds other than $v(i)$, and set
\begin{equation}
    \label{eq:aipw-score}
    \hat{Z}_{a,q,i}
    =
    \hat{m}_{a,q}^{(-v(i))}(W_i)
    +\frac{\1(A_i=a)}{\hat{g}_a^{(-v(i))}(W_i)}
    \{f_q(O_i)-\hat{m}_{a,q}^{(-v(i))}(W_i)\}
    ,
    \quad
    \hat{\zeta}_{a,q}^{\mathrm{aug}}=N^{-1}\sum_i\hat{Z}_{a,q,i}
    .
\end{equation}
The regression concerns a fully observed feature, not the unobserved no-rescue outcome.

\begin{lemma}[Augmented observed-moment expansion]
    \label{lem:aipw}
    Suppose $g_a$ and all fitted $\hat{g}_a^{(-v)}$ are bounded below by a common positive constant, the features and fitted regressions are uniformly bounded, and, in every fold,
    \begin{equation*}
        \begin{split}
            \|\hat{g}_a^{(-v)}-g_a\|_{L^2(P_W)}
            +\|\hat{m}_{a,q}^{(-v)}-m_{a,q}\|_{L^2(P_W)}
            &=
            \op(1)
            ,\\
            \|\hat{g}_a^{(-v)}-g_a\|_{L^2(P_W)}
            \|\hat{m}_{a,q}^{(-v)}-m_{a,q}\|_{L^2(P_W)}
            &=
            \op(N^{-1/2})
            .
        \end{split}
    \end{equation*}
    Then, for fixed feature dimension,
    \begin{equation}
        \label{eq:aipw-expansion}
        \begin{split}
            \hat{\zeta}_{a,q}^{\mathrm{aug}}-\zeta_{a,q}
            &=
            N^{-1}\sum_i\varphi_{a,q}(O_i)+\op(N^{-1/2})
            ,
            \\
            \varphi_{a,q}(O)
            &=
            m_{a,q}(W)-\zeta_{a,q}
            +\frac{\1(A=a)}{g_a(W)}\{f_q(O)-m_{a,q}(W)\}
            ,
        \end{split}
    \end{equation}
    and the sample covariance of the fitted scores consistently estimates the covariance of these influence functions.
\end{lemma}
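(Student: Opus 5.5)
The plan is the standard cross-fitted AIPW argument applied to one bounded observed feature, with the fixed feature dimension handled coordinatewise. Fix arm $a$, feature $q$, and a fold $v$ with index set $\cI_v$ of size $N_v\asymp N$. Write $Z_i$ for the oracle score obtained by replacing $(\hat g_a^{(-v)},\hat m_{a,q}^{(-v)})$ with $(g_a,m_{a,q})$ in \eqref{eq:aipw-score}. Then $Z_i-\zeta_{a,q}=\varphi_{a,q}(O_i)$. The first check is that $\EE[\varphi_{a,q}(O)]=0$: by Assumption~\ref{ass:baseline} and \eqref{eq:arm}, $\EE[\omega_a f_q(O)\mid W]=m_{a,q}(W)$ and $\EE[m_{a,q}(W)]=\zeta_{a,q}$. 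With this, it suffices to show $N_v^{-1}\sum_{i\in\cI_v}(\hat Z_{a,q,i}-Z_i)=\op(N^{-1/2})$ for each fold, because the number of folds is fixed.

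Within a fold I will decompose $\hat Z_i-Z_i=(R_{1i}-\EE_v R_1)+\EE_v R_1$. Here $\EE_v$ denotes expectation over a fresh observation conditional on the out-of-fold data, under which the nuisance fits are fixed functions. The first part is an empirical process term. Conditional on the training folds, the summands are i.i.d. with mean zero, so the conditional Chebyshev inequality bounds the centered average by $N_v^{-1/2}\|\hat Z-Z\|_{L^2(P)}$.

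Expanding
\begin{equation*}
\hat Z-Z=(\hat m-m)\Bigl(1-\tfrac{\1(A=a)}{\hat g}\Bigr)+\1(A=a)(f_q-m)\Bigl(\tfrac{1}{\hat g}-\tfrac{1}{g}\Bigr)
\end{equation*}
shows that this $L^2$ norm is $\op(1)$. This uses the uniform lower bound on $g_a$ and $\hat g_a$, the boundedness of the features and fits, and the $L^2$ consistency condition. Hence this term is $\op(N^{-1/2})$.

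The second part is the bias term. Conditioning on $W$ and using $\EE[\1(A=a)\mid W]=g_a$ together with $\EE[\1(A=a)(f_q-m)\mid W]=0$ gives
\begin{equation*}
\EE_v[\hat Z-Z]=\EE\Bigl[(\hat m-m)(W)\,\frac{\hat g_a(W)-g_a(W)}{\hat g_a(W)}\Bigr].
\end{equation*}
By Cauchy--Schwarz and $\hat g_a$ bounded below, this is at most a constant times the product rate, which is $\op(N^{-1/2})$ by assumption. I expect this mixed-bias computation to be the main step. The delicate point is to condition correctly on the training folds so that the fitted nuisances act as deterministic functions. The conditional-to-unconditional passage for $\op$ statements then follows by the usual argument: convergence in conditional probability implies convergence in probability, by dominated convergence.

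For the covariance claim, the sample covariance of the fitted scores $\hat Z_{a,q,i}$ splits into the sample covariance of the oracle $\varphi$, which converges by the law of large numbers since the scores are bounded, plus cross terms. Each cross term is bounded by Cauchy--Schwarz using $N^{-1}\sum_i(\hat Z_i-Z_i)^2=\op(1)$, which holds because its conditional expectation equals the $L^2$ norm squared shown above to be $\op(1)$. Fixed feature dimension then lets the coordinatewise statements combine into the joint expansion and covariance consistency.
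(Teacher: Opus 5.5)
Your proposal is correct and follows the paper's proof essentially step for step. You use the same fold-by-fold structure: the mixed-bias identity $\EE[\hat Z - Z\mid\text{training}]=\EE[(\hat g_a-g_a)(\hat m_{a,q}-m_{a,q})/\hat g_a\mid\text{training}]$ bounded by Cauchy--Schwarz, a conditional-variance bound on the centered term via the $\op(1)$ conditional $L^2$ distance between fitted and oracle scores, summation over a fixed number of folds, and the same conditional-$L^2$ plus law-of-large-numbers argument for the covariance; your explicit expansion of $\hat Z-Z$ and the conditional-to-unconditional step just fill in details the paper leaves implicit.
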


\begin{proof}[Proof of Lemma~\ref{lem:aipw}]
    Conditional on a fold's training data, with hats denoting that fold's fits and $Z(\hat{g},\hat{m})$ its score,
    \begin{equation*}
        \EE[Z(\hat{g},\hat{m})\mid\text{training}]-\zeta_{a,q}
        =
        \EE\Biggl[
            \frac{\hat{g}_a-g_a}{\hat{g}_a}
            (\hat{m}_{a,q}-m_{a,q})
            \Biggm|\text{training}
        \Biggr]
        ,
    \end{equation*}
    which Cauchy--Schwarz and the propensity lower bound make $\op(N^{-1/2})$.
    Boundedness and nuisance consistency give the difference between the fitted and true scores conditional $L^2(P)$ norm $\op(1)$.
    Conditional variance bounds therefore make its centered contribution to the overall average $\op(N^{-1/2})$.
    Summing over a fixed number of folds proves \eqref{eq:aipw-expansion} without requiring independence between fitted folds.
    The same conditional $L^2$ argument, boundedness, and the law of large numbers give consistency of the fitted-score second moments.
\end{proof}

The expansion supplies a regular-law simultaneous input region as in \eqref{eq:ps-sandwich-box} when the limiting variances are positive.
Under the stated boundedness and propensity-positivity conditions, consistency of either nuisance suffices for consistency of the moment estimator.
Root-$N$ inference uses the stronger product-rate and nuisance-consistency conditions of Lemma~\ref{lem:aipw}.
Augmented scores are neither Bernoulli nor independent after fitting, so Clopper--Pearson and unconditional empirical-Bernstein limits do not apply directly.

\section{Detailed numerical experiments}
\label{app:detailed-numerical}

This appendix specifies the generating laws, comparator calculations, and numerical settings used in Section~\ref{sec:numerical}.
The additional results distinguish scientific restrictions, input-region construction, and implementation diagnostics.
Scientific comparisons use each class's own sharp interval;
inferential comparisons hold the scientific class fixed.

\subsection{Designs and comparators}
\label{app:numerical-setup}

We give the reference and additional designs, then specify how each confidence procedure is calculated and evaluated.

\subsubsection{Finite-state reference law}
\label{app:finite-protocol}

We specify the finite-state law and its sample-size, calibration, and active-cap variants.
Write $\expit(x)=(1+e^{-x})^{-1}$.
Set $S_0=(s-1)/2$ and $L_k=S_k$.
At an at-risk visit, state $s$ is reached from previous state $v$ with probability
\begin{equation*}
    h_{a,k}(v)
    =
    \expit\{\alpha_0+0.6(v/(s-1)-1/2)+0.15(k-1)-0.10a\}
    ,
\end{equation*}
and reaching it triggers rescue.
Conditional on not triggering, the probability of $l\in\{0,\ldots,s-1\}$ is proportional to
\begin{equation*}
    \exp\Biggl[
        -\frac{\{l-0.6v-0.25(s-1)\}^2}{2\{0.35(s-1)+0.3\}^2}
    \Biggr]
    .
\end{equation*}
Observed post-crossing states equal $s$.
For non-crossers, both outcomes are the same Bernoulli variable with mean $0.25+0.35S_K/(s-1)-0.05a$.
At crossing visit $k$, the observed and no-rescue means are
\begin{equation*}
    q^R_{a,k}=0.20+0.08k/K-0.04a
    ,
    \quad
    q^{\NR}_{a,k}=q^R_{a,k}+\tau_k
    ,
    \quad
    \tau_k=0.08+0.12(K-k)/K
    ,
\end{equation*}
and a common independent uniform generates the two outcomes.
Individual rescue benefits are therefore nonnegative under this generating law, but this is not an assumption of the cell-average model.
Exact state recursion and root finding set $\alpha_0$ so that the equally arm-averaged rescue probability is $0.5$ or $0.75$.
Exact recursion also gives the joint distribution of $(J,Y,Y^{a,\NR})$, and sampling from it is equivalent to sampling trajectories for all implemented moment procedures.

The primary design uses both arms, rescue probabilities $0.5$ and $0.75$, and $n=1000$ per arm.
The contraction design uses arm zero, rescue probability $0.5$, exact calibration, and $n\in\{50,100,250,500,1000,2500,10000\}$.
The active-cap design changes only the odd-visit crossing means to $q^R_{0,k}=0.96$ and $\tau_k=0.02$, with $n\in\{50,250,1000,2500,10000\}$.
The oracle-calibrated class uses
\begin{equation*}
    H_{1j}=\1(j\in\{1,3\})
    ,
    \quad
    H_{2j}=\1(j=5)
    ,
    \quad
    d=H(p_j\tau_j)_j
    ,
\end{equation*}
which are population contributions rather than effects conditional on crossing.
The uncalibrated model is a separate scientific class.
Population sensitivity curves use $\Gamma=0,0.01,\ldots,0.35$ and $\lambda=1,1.05,\ldots,2.5$, and the radius experiment pairs $\Gamma\in\{0,0.05,0.10,0.20,0.35\}$ within datasets at $n\in\{50,1000,10000\}$.
Membership of the generating law in a class is checked from the generating law itself, not from the analysis data.

\subsubsection{Scientific comparators}
\label{app:scientific-comparators}

We define the alternative scientific classes and calculate the sharp interval and confidence construction for each.

\paragraph{Time-only restrictions.}
The time-only comparator restricts selection across crossing strata after coarsening the history.
Fix an arm with $P_a(T>K)>0$ and define
\begin{equation*}
    e_k=P_a(T>k\mid T\ge k)
    ,
    \quad
    v_k=P_a(Y^{a,\NR}=1\mid T\ge k)
    ,
    \quad
    q_k^{\NR}=P_a(Y^{a,\NR}=1\mid T=k)
    ,
\end{equation*}
where the last quantity is needed only for positive crossing mass, and set $v_{K+1}=P_a(Y=1\mid T>K)$, which temporal consistency identifies.
The binary time-only specialization of the outcome-density-ratio restrictions in \citet{tan2025sensitivity} imposes
\begin{equation}
    \label{eq:tan-ratios}
    \lambda^{-1}
    \le
    \frac{P_a(Y^{a,\NR}=y\mid T=k)}{P_a(Y^{a,\NR}=y\mid T>k)}
    \le
    \lambda
    ,
    \quad y\in\{0,1\}
    ,
\end{equation}
for $\lambda\ge1$.
Conditioning retains visit and prior non-rescue but omits the current trigger and the measured state history, so these are selection restrictions after coarsening rather than exchangeability at the full triggering history.
At zero denominators use the equivalent inequalities below.

Monotone envelopes give a sharp backward recursion.
Define
\begin{equation*}
    a_\lambda(v)=\max\{v/\lambda,1-\lambda(1-v)\}
    ,
    \quad
    b_\lambda(v)=\min\{\lambda v,1-(1-v)/\lambda\}
    .
\end{equation*}
Then \eqref{eq:tan-ratios} is equivalent to $a_\lambda(v_{k+1})\le q_k^{\NR}\le b_\lambda(v_{k+1})$, and the mixture identity $v_k=e_kv_{k+1}+(1-e_k)q_k^{\NR}$ gives
\begin{equation}
    \label{eq:tan-recursion}
    \begin{split}
        v_k^-&=e_kv_{k+1}^-+(1-e_k)a_\lambda(v_{k+1}^-)
        ,\\
        v_k^+&=e_kv_{k+1}^++(1-e_k)b_\lambda(v_{k+1}^+)
        ,
    \end{split}
\end{equation}
initialized at $v_{K+1}^-=v_{K+1}^+=v_{K+1}$; at zero crossing mass, $e_k=1$ and the recursion copies the later mean.
Backward endpoint choices assign means to disjoint crossing strata, Lemma~\ref{lem:full-data-completion} realizes them jointly, and the mixture identities preserve the earlier restrictions.
At fixed observed probabilities the mass-coordinate constraints are linear, so convexity fills $[v_1^-,v_1^+]$; an independently assembled linear program checks the recursion.

Conditional binomial limits provide inference for this interval.
Let $N_k$ count $T\ge k$, $C_k$ count $T>k$, and $N_{K+1},S_{K+1}$ be the terminal non-crosser count and its outcome successes.
Conditionally on their denominators, $C_k$ and $S_{K+1}$ are binomial with probabilities $e_k$ and $v_{K+1}$.
Use $K+1$ Clopper--Pearson intervals with error $\alpha/(K+1)$ and $[0,1]$ for a zero denominator; conditional coverage and a union bound give simultaneous coverage.
The lower recursion increases and the upper recursion decreases with $e_k$, and both increase with the terminal mean, so lower limits for every $e_k$ enter both outward recursions together with the respective terminal limit.
This conditional-CP procedure is the comparator labelled Conditional CP.

\paragraph{Range, sign, and box restrictions.}
Write $b_0=\mu-\sum_jb_j$.
The range-only, cell-average sign-only, and signed $\delta$-box intervals are, respectively,
\begin{equation*}
    [b_0,b_0+\textstyle\sum_jp_j]
    ,
    \quad
    [\mu,b_0+\textstyle\sum_jp_j]
    ,
    \quad
    \Bigl[
        \mu,\mu+\sum_j\min\{\delta p_j,p_j-b_j\}
    \Bigr]
    .
\end{equation*}
When $\delta=\Gamma$ the box implies the signed zero-center budget, since $\sum_{p_j>0}t_j^2/p_j\le\delta^2\sum_jp_j\le\Gamma^2$; a narrower box interval therefore reflects a stronger restriction rather than more efficient estimation of the same set.
For binary outcomes the range endpoints are means of $Y\1(J=0)$ and $Y\1(J=0)+\1(J>0)$, and the sign-only lower endpoint is $\mu$; a lower and an upper one-sided Clopper--Pearson limit, each with error $\alpha/2$, give the whole-set interval labelled Endpoint CP.
The $\delta$-box uses Cell CP with its effect limits and without a budget.

\subsubsection{Inference comparators}
\label{app:inference-comparators}

We specify the inferential comparisons for the fixed signed-budget class.

\paragraph{Coordinatewise regions.}
The coordinatewise constructions protect the same observed features under different marginal confidence limits.
For binary independent-arm data, the feature vector is
\begin{equation*}
    \Bigl(
        Y,
        (\1(J=j))_{j=1}^M,
        (Y\1(J=j))_{j=1}^M,
        \1(J>0),
        Y\1(J=0)
    \Bigr)^\top
    ,
\end{equation*}
with $D=2M+3$ coordinates.
Cell CP, Cell KL, empirical Bernstein, Hoeffding, and Wald allocate error $\alpha/D$ to each coordinate.
Appendix~\ref{pf:cor-coverage} gives the finite-sample justifications for CP, empirical Bernstein, and Hoeffding limits.
The KL guarantee and regular-law calibrations are specified below.
For Bernoulli probabilities define $\KL(v\|p)=v\log(v/p)+(1-v)\log\{(1-v)/(1-p)\}$ with the usual boundary conventions.
The coordinatewise Bernoulli--KL region imposes
\begin{equation}
    \label{eq:kl-region}
    \KL(\hat{\zeta}_q\|\zeta_q)
    \le\log(2D/\alpha)/n
    ,
    \quad q=1,\ldots,D
    .
\end{equation}
Binomial Chernoff tails and a union bound give simultaneous coverage at level $1-\alpha$.
At empirical zero, the upper limit is $1-\exp\{-\log(2D/\alpha)/n\}$.
These regions use simultaneous limits for observed probabilities, as in \citet{duarte2024automated}, constructed here for the reduced model.
Atomic variants protect the $2(M+1)$ atoms of $(J,Y)$ directly, either alone or together with the aggregates $Y$, $\1(J>0)$, and $Y\1(J=0)$, using coordinatewise KL limits.
They are coordinatewise regions, distinct from the joint multinomial likelihood region \eqref{eq:joint-lr}.

At a fixed law with all declared atoms positive, the multinomial likelihood-ratio statistic in \eqref{eq:joint-lr} has the chi-square limit used there \citep[Chapter~16]{vandervaart1998asymptotic}.
Let $\Phi$ denote the standard normal distribution function.
For a sample-mean feature with positive limiting variance, the central limit theorem and a consistent variance give Wald limits of half-width $\Phi^{-1}(1-\alpha/(2D))\sqrt{\hat{v}/n}$.
A union bound gives simultaneous asymptotic coverage.
Neither calibration is uniform along rare-cell sequences, because a positive population variance can coexist with zero empirical variance.

\paragraph{Joint likelihood regions.}
The joint-region comparison evaluates Cell CP, the split-likelihood region \eqref{eq:joint-split}, the likelihood-ratio region \eqref{eq:joint-lr} with $d_\pi=12$ and no empirical deletion of atoms, and Aggregate CP.
There are $500$ paired datasets in each of three settings: the reference law at $n=1000$ and the active-cap law at $n=1000$ and $n=10000$.
Each dataset consists of two independent multinomial count vectors of sizes $\lfloor n/2\rfloor$ and $n-\lfloor n/2\rfloor$.
Their sum is used by Cell CP, joint LR, and Aggregate CP, and the split region uses the two halves with the training probabilities of Section~\ref{sec:statistics}.
All four procedures use $\alpha=0.05$, $\Gamma=0.2$, and the same population reference.
These datasets are distinct from those of Tables~\ref{tab:scientific}--\ref{tab:inference}.
Therefore, paired differences are computed within Table~\ref{tab:joint-regions}.
Comparisons across the two experiments are descriptive rather than paired.

\paragraph{Aggregate relaxation.}
The aggregate comparator bounds the total effect contribution without retaining the cellwise outcome caps.
For the uncalibrated signed zero-center budget, Cauchy--Schwarz gives
\begin{equation}
    \label{eq:aggregate-bound}
    0
    \le
    \sum_j t_j
    \le
    \Gamma\sqrt{\sum_jp_j}
    ,
    \quad
    \cI(\eta,\Gamma^2)
    \subseteq
    \Biggl[
        \mu,
        \min\Biggl\{
            1,\mu+\Gamma\sqrt{\sum_jp_j}
        \Biggr\}
    \Biggr]
    .
\end{equation}
Aggregate CP uses lower and upper one-sided CP limits for $\mu$ and an upper limit for $\sum_jp_j$, each with error $\alpha/3$.
For positive crossing mass, a sufficient condition for the displayed population relaxation to be sharp is that
\begin{equation*}
    t_j
    =
    \frac{\Gamma p_j}{\sqrt{\sum_jp_j}}
\end{equation*}
satisfies every cellwise outcome cap.
With no crossings, the population interval is $\{\mu\}$.
When the relaxation is wider than the sharp interval, its excess width persists even with known population moments.

\paragraph{Endpoint bootstrap.}
The endpoint bootstrap approximates the joint sampling error of the two optimized bounds.
Resample patient records and recompute the empirical moments and exact capped endpoints by water filling.
For binary outcomes, multinomial resampling of the $(J,Y)$ counts is equivalent to resampling records.
For plug-in endpoints $(\hat{L},\hat{U})$ and bootstrap endpoints $(\hat{L}^\ast,\hat{U}^\ast)$, let $c_n$ be the nonnegative empirical $(1-\alpha)$ quantile of
\begin{equation*}
    \max\{\hat{L}^\ast-\hat{L},\hat{U}-\hat{U}^\ast\}
    .
\end{equation*}
The reported interval is $[\hat{L}-c_n,\hat{U}+c_n]\cap[0,1]$.
Use $999$ resamples and the upper order-statistic quantile convention.
The regular-law justification requires differentiable endpoints and a locally stable active set \citep[Chapter~23]{vandervaart1998asymptotic}.
The rare-cell diagnostic examines a setting outside these conditions.
Two-arm comparisons use error $0.025$ per arm before combining endpoints.

\subsubsection{Computational details}
\label{app:numerical-audits}

We describe population references, independent reduction checks, and numerical controls.

\paragraph{Population references.}
For the signed zero-center budget the lower endpoint is $\mu$, and the maximizing contributions have the water-filling form
\begin{equation}
    \label{eq:waterfill}
    t_j=\min\{p_j-b_j,\kappa p_j\}
    ,
    \quad
    \kappa\ge0
    ,
\end{equation}
where the full vector of caps is used if it is within budget and otherwise $\kappa$ solves $\sum_{p_j>0}t_j^2/p_j=G$, with zero contributions on zero-mass cells.
For $\kappa>0$, each positive-mass coordinate maximizes $t_j-t_j^2/(2\kappa p_j)$ on its box.
Summing these inequalities with the active budget proves optimality.
At $G=0$, $t=0$.
Successively fixing saturated cells gives a finite calculation, which the endpoint bootstrap also uses.
For disjoint calibration groups with masses $P_l$, fixed contributions $d_l$, and uncalibrated mass $P_U$, the endpoints are
\begin{equation*}
    \mu+\sum_ld_l
    ,
    \quad
    \mu+\sum_ld_l+
    \sqrt{\Biggl(
        \Gamma^2-\sum_{l:P_l>0}d_l^2/P_l
    \Biggr)P_U}
    ,
\end{equation*}
provided the residual budget is nonnegative and the group-constant and uncalibrated maximizing effects satisfy every cap.
A zero-mass group requires $d_l=0$.

\paragraph{Independent representations.}
The full-history check enumerates, for $(K,s)\in\{(3,4),(5,4),(5,6)\}$, the crossing histories split by observed outcome.
For observed atom mass $a_h$, let $u_h\in[0,a_h]$ be its joint mass with $Y^{a,\NR}=1$.
Substitute $t_j=\sum_{h\in j}u_h-b_j$ into the identical sign, shared-budget, and tangent constraints, with objective $b_0+\sum_hu_h$.
The resulting program has the same extrema as the reduced program because every reduced vector lifts by allocating its cell contribution across observed atoms.
The two programs are assembled separately, and the reported dimensions exclude epigraph variables.
The independent-budget relaxation, with width $\sum_j\min\{p_j-b_j,\Gamma\sqrt{p_j}\}$, and a mass-coordinate linear program for the time-only recursion serve as further checks.

\paragraph{Tangent-grid accuracy.}
The tangent grid starts at $h=1/128$ and is refined dyadically at most twice in the finite-state and continuous-history comparisons when an available inner-primal/outer-certificate gap exceeds $10^{-4}$.
The observational, partition, and radius experiments retain $h=1/128$.
The deterministic mesh check uses $h\in\{1/8,1/16,1/32,1/64,1/128\}$, and zero budget uses $t=0$ exactly.
Linear programs are solved with primal and dual feasibility tolerances $10^{-9}$.
These tolerances are not substituted for outward certificates.

\paragraph{Joint-likelihood optimization.}
Write the log criterion of a joint region as
\begin{equation*}
    \ell(\pi)
    =
    \log\Biggl\{
        B^{-1}\sum_{v=1}^B
        \exp\Biggl(
            C_v-\sum_iN_{vi}\log\pi_i
        \Biggr)
    \Biggr\}
    .
\end{equation*}
For the split region, use $B=2$, $C_v=\sum_iN_{vi}\log\tilde{\pi}_{-v,i}$, and threshold $\log(1/\alpha)$.
For the LR region, use $B=1$, $N_{1i}=N_i$, $C_1=\sum_iN_i\log\hat{\pi}_i$, and threshold half the chi-square quantile in \eqref{eq:joint-lr}.
Zero-count terms are zero.
For a strictly positive vector $v$ and probability weights $w_b$, the log-sum-exp variational inequality and the tangent inequality for $-\log$ give the affine minorant
\begin{equation}
    \label{eq:likelihood-cut}
    \ell(\pi)
    \ge
    \sum_bw_bC_b
    +\sum_i\bar{N}_i(1-\log v_i-\pi_i/v_i)
    -\sum_bw_b\log w_b-\log B
    ,
    \quad
    \bar{N}_i=\sum_bw_bN_{bi}
    .
\end{equation}
Requiring the minorant to be at most the threshold is an outer relaxation for any $w$.
Any finite collection of such cuts preserves outer inclusion.
The implementation uses dyadic weights, evaluates logarithms in interval arithmetic with $45$ decimal digits, and encloses coefficient-rounding error on the probability box.
The joint comparison fixes $h=1/128$, permits at most $120$ cuts per endpoint, targets an optimizer-based gap of $10^{-4}$, and uses LP feasibility tolerances $10^{-8}$.
A smooth constrained solve supplies candidate cut locations without serving as a certificate.
The certified outer value is retained even if the cut budget is exhausted, with a stopping flag recorded.

\paragraph{Numerical reporting.}
The reported optimization values use outward objective bounds rather than feasible primal values alone.
The LP certificate in Lemma~\ref{lem:outward-certificate} is evaluated in rational arithmetic for the supplied floating-point coefficients, with correctly signed nonnegative multipliers and outward final rounding.
When intended coefficients differ from the supplied coefficients by $(\Delta B_{j\cdot},\Delta b_j)$, a uniform bound on the absolute change in row $j$'s residual is
\begin{equation*}
    \epsilon_j
    =
    \sum_i|\Delta B_{ji}|\max\{|a_i|,|z_i|\}
    +|\Delta b_j|
    .
\end{equation*}
Rows are relaxed outward by these allowances before certification.
Equivalently, for fixed nonnegative multipliers $\lambda_j$ and the same variable box, subtracting $\sum_j\lambda_j\epsilon_j$ from the supplied-program lower certificate protects against these row perturbations.
These coefficient allowances are separate from the perspective approximation error $e_h$.
Inner residual checks and optimizer-based endpoint gaps, when computed, are diagnostic quantities distinct from outward certificates.
CP, KL, and chi-square quantile calculations use a $10^{-12}$ outward guard rather than verified special-function enclosures.
The optimization certificates are therefore conditional on the supplied input limits.
An empty confidence program or unavailable outward certificate gives the physical-range fallback, with the cause recorded.
Reaching a refinement or cut limit with an available outward certificate retains that outward value and records a stopping flag.
All such runs remain in the summaries, and the scientific budget is never increased to obtain feasibility.

\subsubsection{Continuous histories}
\label{app:continuous-design}

We generate continuous histories to check the reduced analysis with correlated innovations and longer follow-up.

The continuous-history configurations are
\begin{equation*}
    (K,d_W,\rho)
    \in
    \{(5,5,0),(5,5,0.65),(5,50,0.35),(8,5,0.35)\}
    ,
\end{equation*}
with arm zero and $n=1000$.
Draw $W\sim N_{d_W}(0,\Sigma)$ with $\Sigma_{ij}=0.35^{|i-j|}$, initialize $S_0^{\NR}=0.4W_1-0.2W_2$ and $B_0^{\NR}=0.3W_1+0.2W_3$, and let $(\epsilon_{S,k},\epsilon_{B,k})$ be independent visit-specific bivariate standard-normal innovations with correlation $\rho$.
The no-rescue states obey
\begin{equation}
    \label{eq:dgp}
    \begin{split}
        S_k^{\NR}
        &=0.55S_{k-1}^{\NR}-0.25a+0.15W_1
        +0.10\sin S_{k-1}^{\NR}+0.60\epsilon_{S,k}
        ,\\
        B_k^{\NR}
        &=0.45\tanh B_{k-1}^{\NR}+0.25\tanh S_{k-1}^{\NR}-0.15a
        +0.10\tanh W_2+0.60\epsilon_{B,k}
        ,
    \end{split}
\end{equation}
and first crossing occurs at $B_k^{\NR}\ge c_0+0.05k/K$.
Observed states copy this path through first crossing; later observed states do not enter the moment procedures.
With $\bar S^{\NR}=K^{-1}\sum_kS_k^{\NR}$, set
\begin{equation*}
    \lambda_a
    =-0.20+0.50S_K^{\NR}+0.20\bar S^{\NR}-0.20a+0.15W_3
    +0.10\sin W_1+0.10W_4W_5
    ,
    \quad
    q^{\NR}=\expit(\lambda_a)
    ,
\end{equation*}
and, with $e_T=B_T^{\NR}-(c_0+0.05T/K)$ on crossing paths,
\begin{equation*}
    q^R
    =
    \begin{cases}
        q^{\NR},&T>K,\\
        \expit\{\lambda_a-0.7(1+0.50(K-T)/K+0.25e_T)\},&T\le K.
    \end{cases}
\end{equation*}
A common independent uniform generates both outcomes.
Only the first five baseline coordinates enter these equations; $d_W=50$ adds correlated, conditionally irrelevant covariates.
The trigger $c_0$ is chosen from $100000$ independent calibration paths, with common random numbers, to give approximately $0.5$ rescue probability, and is then held fixed.
The comparisons use $M=4K$ cells combining visit, $\1(W_1\ge0)$, and $\1(e_T>0.5)$, and report Cell CP under the signed $\Gamma=0.2$ budget.
Population references average conditional Bernoulli probabilities over $2\times10^6$ independent paths per law in $20$ batches of $100000$; batch standard errors are reported separately from replication error, and moving the reference endpoints by $\pm2.58$ batch standard errors serves as a sensitivity diagnostic for reference uncertainty.

\subsubsection{Crossing partitions}
\label{app:partition-protocol}

We vary prespecified crossing partitions while holding the generating observations and sensitivity scale fixed.
The partition design fixes $(K,d_W,\rho)=(5,5,0.35)$ and $n\in\{250,1000,10000\}$.
For $M=1$ all crossings are pooled; for $M=5B$, $B\in\{1,2,4,8,16\}$, visit is crossed with $W_1$ categories having fixed cutpoints $\Phi^{-1}(b/B)$, $b=1,\ldots,B-1$, where $\Phi$ is the standard normal distribution function.
The heterogeneous-cap law sets $q^R=0.98$ and $q^{\NR}=0.99$ on $\{T\le K,W_1\ge0\}$ and leaves the other outcome probabilities unchanged.
The original law retains the outcome probabilities in Appendix~\ref{app:continuous-design};
both analyses impose the cellwise outcome caps.
Both laws share state paths and outcome uniforms, all partitions share observations, and the zero center, sign limits, and $\Gamma=0.2$ remain fixed, with membership of the generating law checked for each partition.
These classes are nested under refinement: for fine cells $j$ within coarse cell $g$, Cauchy--Schwarz gives
\begin{equation}
    \label{eq:partition-jensen}
    \frac{(\sum_{j\in g}t_j)^2}{\sum_{j\in g}p_j}
    \le
    \sum_{j\in g:p_j>0}t_j^2/p_j
    ,
\end{equation}
with both sides zero for a zero-mass coarse cell, and fine-cell sign and outcome restrictions imply the coarse ones.
Every partition is assessed against its own sharp reference, so its gap from $M=80$ measures scientific coarsening rather than regression bias.

\subsubsection{Observational design}
\label{app:obs-dgp}

We vary baseline treatment assignment and propensity estimation while retaining the same standardized causal reference.
Let $W_1,\ldots,W_5$ be independent $\Bern(1/2)$ variables and put $x_j=W_j-1/2$.
Write $\logit(x)=\log\{x/(1-x)\}$ for $0<x<1$.
The assignment laws are
\begin{equation}
    \label{eq:obs-propensity}
    \begin{split}
        \logit e_{\mathrm{lin}}(W)
        &=-0.1+0.8x_1-0.6x_2+0.4x_3
        ,\\
        \logit e_{\mathrm{nonlin}}(W)
        &=\logit e_{\mathrm{lin}}(W)+3x_1x_2+2x_4x_5
        ,
    \end{split}
\end{equation}
with $A\mid W\sim\Bern\{e(W)\}$ drawn independently of the potential-process innovations.
Both probabilities are bounded away from zero and one on the $32$-point support, and pooled sizes are $N\in\{50,250,1000,10000\}$.
Baseline risk enters the rescue law of the $K=5$, $s=4$ reference model by adding $0.65x_1+0.35x_2$ to the trigger logit.
A common intercept gives an equally arm-averaged, baseline-standardized rescue probability of $0.5$.
With $\beta_W=1.4x_1-0.8x_2+2.4x_1x_2+1.6x_4x_5$, the outcome means are
\begin{equation*}
    \begin{split}
        q^R_{a,k}(W)&=\expit(-0.8+0.08k+\beta_W-0.2a)
        ,\\
        \tau_k(W)&=\{0.08+0.12(K-k)/K\}(0.7+0.3W_1)
        ,\\
        r_a(v,W)&=\expit\{-0.2+0.6v/(s-1)+\beta_W-0.2a\}
        .
    \end{split}
\end{equation*}
At crossing $q^{\NR}_{a,k}=q^R_{a,k}+\tau_k$.
Otherwise both outcomes use $r_a(S_K,W)$, and a common uniform generates them.
The generating law satisfies the signed $\Gamma=0.2$ budget, and arm-zero analysis uses visit cells without calibration.
Summation over the $32$ baseline patterns gives exact standardized references, identical under the two assignment laws.

The correctly specified logistic fit uses $\tilde{W}=(1,x_1,\ldots,x_5,x_1x_2,x_4x_5)^\top$ in both scenarios, and the main-effects model is correct only under linear assignment.
An unpenalized fit is accepted when it is finite, its coefficient magnitudes are below $25$, its maximum normalized score component is below $\min(10^{-6},N^{-1})$, and its information condition number is below $10^{10}$.
Failed fits give the physical-range fallback.
Fitted probabilities are clipped to $[0.02,0.98]$, which contains the true probabilities in its interior, and clipping frequency is recorded.
The forest uses two-fold stratified cross-fitting and $200$ trees, with all five covariates at each split, no maximum-depth restriction, and minimum leaf size $\max\{5,\lceil\sqrt{N_{\mathrm{train}}}/2\rceil\}$, where $N_{\mathrm{train}}$ is the training-fold sample size.
Predictions come from the other fold and use the same clipping.
For fitted assignment probabilities $\hat e(W_i)$, define $\hat{\omega}_{0,i}=(1-A_i)/\{1-\hat e(W_i)\}$.
Propensity root-mean-square error, clipping frequency, and the effective sample size $(\sum_i\hat{\omega}_{0,i})^2/\sum_i\hat{\omega}_{0,i}^2$ are recorded as diagnostics.

\subsubsection{Estimated-propensity inference}
\label{app:estimated-ps}

We specify the observed-moment estimators and uncertainty calculations used in the observational experiment.

\paragraph{Inverse weighting and logistic sandwich.}
The experiments estimate each of the $D=13$ observed feature means $\zeta_q=\EE_0[f_q]$ by
\begin{equation*}
    \hat{\zeta}_q
    =N^{-1}\sum_{i=1}^N\frac{1-A_i}{1-\hat{e}(W_i)}f_q(O_i)
    ,
\end{equation*}
which are unnormalized standardized means.
Oracle EB uses the known probabilities and independent bounded scores of range at most $1/\min_W\{1-e(W)\}$.
Oracle Wald uses the regular-law normal approximation.
For a correctly specified $e_\beta(W)=\expit(\tilde{W}^\top\beta)$, assume a finite interior coefficient, nonsingular information, baseline positivity, fixed-dimensional regularity, normalized score residual $\op(N^{-1/2})$, and eventually inactive clipping.
Write $e_i=e_\beta(W_i)$ at the truth and set
\begin{equation}
    \label{eq:ps-if}
    \begin{split}
        s_i&=\tilde{W}_i(A_i-e_i)
        ,
        \quad
        I_\beta=\EE[e_i(1-e_i)\tilde{W}_i\tilde{W}_i^\top]
        ,\\
        a_q&=\EE[\omega_{0,i}e_i f_q(O_i)\tilde{W}_i]
        ,
        \quad
        \phi_{q,i}=\omega_{0,i}f_q(O_i)-\zeta_q+a_q^\top I_\beta^{-1}s_i
        .
    \end{split}
\end{equation}
The derivative of the weighted feature is $\omega_0e_\beta f_q\tilde{W}$, and the coefficient influence function is $I_\beta^{-1}s$.
The smooth estimating-equation expansion gives \eqref{eq:ps-if} \citep[Chapter~5]{vandervaart1998asymptotic}.
All components are estimated on the same observations to retain their covariance.
When clipping is active, the implemented weight derivative is zero outside the unclipped interval.
The logistic sandwich region is
\begin{equation}
    \label{eq:ps-sandwich-box}
    |\zeta_q-\hat{\zeta}_q|
    \le
    \Phi^{-1}(1-\alpha/(2D))\hat{\sigma}_q/\sqrt{N}
    ,
    \quad
    \hat{\sigma}_q^2
    =
    \frac{1}{N-1}\sum_i(\hat{\phi}_{q,i}-\bar{\hat{\phi}}_q)^2
    .
\end{equation}
It has simultaneous asymptotic input coverage at fixed laws with positive limiting variances, which Corollary~\ref{cor:coverage} transfers to whole-set coverage.
Under correct specification $\cov(\omega_0f_q,s)=-a_q$ and $\Var(s)=I_\beta$, so $\Var(\phi_q)=\Var(\omega_0f_q)-a_q^\top I_\beta^{-1}a_q$.
This explains why propensity estimation can reduce variance.
The naive logistic variance omits this term, and the sandwich does not remove misspecification bias.
Forest naive uses fitted-weight sample variances, and Forest EB inserts the clipped range bound $50$.
Neither controls fitted-weight dependence or nuisance bias by itself.

The observational tables evaluate the inverse-weighting procedures specified above.
Appendix~\ref{app:augmented-moments} gives an augmented alternative and its regular-law conditions.
All evaluated procedures project incompatible observed moments onto the coherence polytope and enlarge their input intervals to contain the coherent plug-in.
This preserves an already valid input event but does not repair invalid nuisance inference.

\subsubsection{External and rare-cell designs}
\label{app:external-rare}

We separately examine uncertainty in external calibration and uncertainty about an absent empirical crossing category.

\paragraph{External calibration.}
The external experiment is an idealized transport-valid benchmark for the propagation of calibration uncertainty.
It uses the arm-zero law, $n=1000$, external sizes $250$, $2000$, and $20000$, and no-calibration and exact-calibration benchmarks.
Source records follow the same standardized pre-crossing population, including non-crossers.
At crossing, the entire continuation is randomized equally between no rescue ($E^{\mathrm{ext}}=0$) and rescue ($E^{\mathrm{ext}}=1$).
With $H_{l,0}=0$, the score $V_l^{\mathrm{ext}}=H_{l,J}\{2\1(E^{\mathrm{ext}}=0)Y^{\mathrm{ext}}-2\1(E^{\mathrm{ext}}=1)Y^{\mathrm{ext}}\}$, interpreted as zero for non-crossers, has range $[-2,2]$ and mean $\sum_jH_{lj}t_j=d_l$ by randomization.
Including non-crossers retains the trial-population contribution weights.
Trial Clopper--Pearson and external Bernstein intervals are combined under one coordinatewise error allocation, and coverage is evaluated for the class with the true calibration target.

\paragraph{Rare cells.}
The rare-cell diagnostic isolates uncertainty about an absent empirical category.
Let $n=1000$, $np\in\{0,0.5,2,10\}$, and suppose the observed outcome is known to be identically zero.
Only the Bernoulli crossing probability is estimated, and the range-only sharp interval is $[0,p]$.
Use the upper endpoints of two-sided $95\%$ CP, KL, empirical Bernstein, Hoeffding, and Wald intervals, with the lower target endpoint fixed at zero.
The first four constructions have upper-tail error at most $0.025$.
Wald uses the corresponding normal critical value.
The basic bootstrap instead uses a one-sided $95\%$ upper limit with $999$ resamples.
These choices assess collapse at zero counts rather than relative efficiency under a common one-sided calibration.
Record whole-set coverage, mean upper endpoint, and zero-count frequency over $20000$ replications.

\subsubsection{Bounded continuous outcomes}
\label{app:continuous-outcome}

We replace binary outcomes by bounded continuous outcomes while retaining the reference conditional means.
Use the arm-zero finite-state reference law with $K=5$, $s=4$, equally arm-averaged rescue probability $0.5$, and $M=5$ visit cells.
Let $Q_\phi(u;q)$ be the $u$-quantile of the beta distribution with shape parameters $\phi q$ and $\phi(1-q)$.
Fix $\phi=10$ before generating the analysis datasets.
For an independent $U\sim\operatorname{Uniform}(0,1)$, define
\begin{equation*}
    (Y,Y^{a,\NR})
    =
    \begin{cases}
        \bigl(
            Q_\phi(U;r_a(S_K)),
            Q_\phi(U;r_a(S_K))
        \bigr),
        &T>K,
        \\
        \bigl(
            Q_\phi(U;q^R_{a,k}),
            Q_\phi(U;q^R_{a,k}+\tau_k)
        \bigr),
        &T=k,
    \end{cases}
\end{equation*}
where $r_a(v)=0.25+0.35v/(s-1)-0.05a$ and the crossing means are those of Appendix~\ref{app:finite-protocol}.
The common outcome on non-crossing paths enforces temporal consistency.
Sampling retains the terminal-state distribution among non-crossers, rather than replacing its mixture of beta distributions by a single beta distribution with the marginal mean.

The analysis uses the signed $\Gamma=0.2$ budget without a beta outcome model.
Because every conditional outcome mean and crossing probability is unchanged, the population inputs $(\mu,p,b)$ and the sharp interval coincide with those of the binary reference law.
Population endpoints therefore use the same exact reference calculation and are independent of $\phi$.
This comparison changes sampling variability while holding identification uncertainty fixed.

The two finite-sample constructions combine CP probability limits with bounded-variable outcome limits.
Hybrid EB--CP uses empirical Bernstein limits for $Y$, $Y\1(J=j)$, and $Y\1(J=0)$, and CP limits for $\1(J=j)$ and $\1(J>0)$.
Hybrid Hoeffding--CP substitutes Hoeffding limits for those outcome features.
Both allocate error $\alpha/(2M+3)$ to each of the same $2M+3$ coordinates, with $\alpha=0.05$.
Cell Wald uses sample-variance Wald limits for all coordinates with the same allocation.
The endpoint bootstrap resamples the observed patient-level $(J,Y)$ pairs and recomputes their empirical moments; binary outcome-count resampling is not used.
The discrete joint likelihood regions are not included because the experiment does not assume a finite outcome support.
Use $n\in\{250,1000,10000\}$ and $500$ paired replications per sample size.
The coordinatewise projections use the fixed grid $h=1/128$, and the bootstrap uses $999$ resamples.

\subsubsection{Reporting conventions}
\label{app:reporting}

We retain every replication and report uncertainty in the simulation summaries.
There are $500$ replications per finite-state, external-calibration, radius, joint-region, and bounded continuous-outcome setting, $250$ per observational, partition, and continuous-history setting, and $20000$ per rare-cell setting.
Procedures share datasets within each comparison, and sensitivity radii and partitions share datasets as specified above.
Every bootstrap comparison uses $999$ resamples.

For population interval $[L,U]$ and reported interval $[\hat{L},\hat{U}]$, set coverage is the mean of $\1(\hat{L}\le L,\hat{U}\ge U)$ across replications.
Target coverage is the mean of $\1(\hat{L}\le\theta_a\le\hat{U})$, with the generating contrast substituted in two-arm analyses.
Enlargement is $(\hat{U}-\hat{L})-(U-L)$ and is retained with its sign.
For Aggregate CP it includes any population relaxation gap, in addition to sampling and numerical effects.
Plug-in error is the Hausdorff distance between the plug-in and population sharp intervals, rather than the distance from the confidence interval.

Mean summaries use Monte Carlo standard errors (MCSEs), calculated as the replication-level sample standard deviation divided by the square root of the number of replications.
Coverage and failure frequencies use exact binomial Monte Carlo (MC) intervals.
Paired comparisons use within-dataset differences and their MCSEs.
An all-success result is accompanied by its binomial interval rather than interpreted as population coverage equal to one.

\subsection{Additional results}
\label{app:numerical-results}

This subsection reports the results referred to in Section~\ref{sec:numerical}, under the conventions of Appendix~\ref{app:reporting}.

\subsubsection{Reduction checks}
\label{app:results-reduction}

We report independent representation checks and fixed-input numerical refinement.
Table~\ref{tab:completion} compares the full-history and reduced formulations under the same tangent constraints.
Table~\ref{tab:mesh} separates the perspective budget allowance, the inner-primal/outer-value gap, and the supplied-LP certificate gap.
The finite-grid pattern is consistent with the local approximation result but does not establish an endpoint rate outside its assumptions.

\begin{table}[tbp]
    \centering
    \caption{Full-history completion program against the reduced program under identical restrictions and tangent grids.
    Dimensions exclude epigraph variables.
    Endpoint difference compares independently assembled primal calculations; outward difference compares their objective bounds.}
    \label{tab:completion}
    \small
    \begin{tabular}{@{}rrrrrr@{}}
        \toprule
        $K$ & $s$ & Completion & Reduced & Endpoint difference & Outward difference \\
        \midrule
        $3$ & $4$ & $42$ & $3$ & $5.55\times 10^{-17}$ & $3.77\times 10^{-15}$ \\
        $5$ & $4$ & $682$ & $5$ & $5.55\times 10^{-17}$ & $7.66\times 10^{-15}$ \\
        $5$ & $6$ & $3110$ & $5$ & $5.55\times 10^{-17}$ & $7.61\times 10^{-15}$ \\
        \bottomrule
    \end{tabular}
\end{table}

\begin{table}[tbp]
    \centering
    \caption{Mesh refinement at fixed population moments.
    Endpoint gap compares an inner-primal value with an outward objective bound; $h^2/4$ bounds the budget error.
    Inner violations are residual diagnostics.}
    \label{tab:mesh}
    \small
    \begin{tabular}{@{}rrrrr@{}}
        \toprule
        $h$ & $h^2/4$ & Endpoint gap & Inner violation & Certificate gap \\
        \midrule
        $1/8$ & $0.00391$ & $0.00781$ & $0.00$ & $5.55\times 10^{-17}$ \\
        $1/16$ & $9.77\times 10^{-4}$ & $0.00195$ & $0.00$ & $1.11\times 10^{-16}$ \\
        $1/32$ & $2.44\times 10^{-4}$ & $4.34\times 10^{-4}$ & $1.73\times 10^{-18}$ & $1.22\times 10^{-14}$ \\
        $1/64$ & $6.10\times 10^{-5}$ & $1.09\times 10^{-4}$ & $0.00$ & $2.44\times 10^{-14}$ \\
        $1/128$ & $1.53\times 10^{-5}$ & $2.71\times 10^{-5}$ & $0.00$ & $6.48\times 10^{-14}$ \\
        \bottomrule
    \end{tabular}
\end{table}

\subsubsection{Input-region comparisons}
\label{app:results-moments}

We compare protected feature collections and joint likelihood regions for the same signed-budget model.
Table~\ref{tab:atomic-kl} changes the coordinatewise input construction, while Table~\ref{tab:joint-regions} reports the separate paired joint-region experiment.
The latter includes paired width differences and numerical stopping diagnostics, so its comparisons do not rely on small differences between independently simulated tables.

\begin{table}[tbp]
    \centering
    \caption{Protected feature set under coordinatewise limits at the arm-zero law, $n=1000$, $M=5$, $\Gamma=0.2$, $500$ paired replications.
    All rows use the same reduced program; the KL rows use coordinatewise Bernoulli--KL limits.}
    \label{tab:atomic-kl}
    \small
    \begin{tabular}{@{}lrr@{}}
        \toprule
        Input-region construction & Mean width & Set coverage \\
        \midrule
        Atomic KL & $0.410$ & $1.00$ \\
        Atomic KL + aggregate & $0.258$ & $1.00$ \\
        Cell KL & $0.257$ & $1.00$ \\
        Cell CP & $0.237$ & $0.998$ \\
        \bottomrule
    \end{tabular}
\end{table}

\begin{table}[tbp]
    \centering
    \caption{Joint-region comparison with $500$ paired replications per setting; the four procedures share datasets within each block, which are distinct from the datasets of Tables~\ref{tab:scientific}--\ref{tab:inference}.
    Difference is width minus Cell CP width on the same dataset, with its paired MCSE.
    Sharp widths are $0.1437$ for the reference law and $0.1033$ under active caps.
    No physical-range fallbacks occurred for any of the four procedures, and the three LP-based procedures recorded no empty confidence programs, unavailable outward certificates, or numerical stopping flags.
    In each setting, zero of $500$ datasets had any procedure fail (95\% MC interval $[0.000,0.007]$).
    The maximum recorded diagnostic endpoint gaps were $9.82\times10^{-5}$ for the reference setting and $9.99\times10^{-5}$ for each active-cap setting.}
    \label{tab:joint-regions}
    \small
    \setlength{\tabcolsep}{4pt}
    \begin{tabular}{@{}lrrrrrr@{}}
        \toprule
        Procedure & \shortstack{Mean\\width} & MCSE & Difference & \shortstack{Paired\\MCSE} & \shortstack{Set\\coverage} & $95\%$ MC interval \\
        \midrule
        \multicolumn{7}{l}{Reference, $n=1000$} \\
        Cell CP & $0.237$ & $9.68\times 10^{-5}$ & $0.00$ & $0.00$ & $0.996$ & $[0.986, 1.00]$ \\
        Joint split & $0.330$ & $0.00125$ & $0.0929$ & $0.00125$ & $1.00$ & $[0.993, 1.00]$ \\
        Joint LR & $0.275$ & $1.08\times 10^{-4}$ & $0.0373$ & $2.47\times 10^{-5}$ & $1.00$ & $[0.993, 1.00]$ \\
        Aggregate CP & $0.213$ & $9.62\times 10^{-5}$ & $-0.0243$ & $1.08\times 10^{-5}$ & $0.972$ & $[0.953, 0.985]$ \\
        \midrule
        \multicolumn{7}{l}{Active caps, $n=1000$} \\
        Cell CP & $0.241$ & $9.54\times 10^{-5}$ & $0.00$ & $0.00$ & $1.00$ & $[0.993, 1.00]$ \\
        Joint split & $0.296$ & $0.00130$ & $0.0547$ & $0.00131$ & $1.00$ & $[0.993, 1.00]$ \\
        Joint LR & $0.238$ & $1.81\times 10^{-4}$ & $-0.00332$ & $1.63\times 10^{-4}$ & $1.00$ & $[0.993, 1.00]$ \\
        Aggregate CP & $0.216$ & $9.71\times 10^{-5}$ & $-0.0254$ & $4.40\times 10^{-6}$ & $0.978$ & $[0.961, 0.989]$ \\
        \midrule
        \multicolumn{7}{l}{Active caps, $n=10000$} \\
        Cell CP & $0.157$ & $4.89\times 10^{-5}$ & $0.00$ & $0.00$ & $0.996$ & $[0.986, 1.00]$ \\
        Joint split & $0.165$ & $4.29\times 10^{-4}$ & $0.00809$ & $4.28\times 10^{-4}$ & $1.00$ & $[0.993, 1.00]$ \\
        Joint LR & $0.146$ & $5.77\times 10^{-5}$ & $-0.0106$ & $1.46\times 10^{-5}$ & $1.00$ & $[0.993, 1.00]$ \\
        Aggregate CP & $0.166$ & $3.08\times 10^{-5}$ & $0.00954$ & $3.97\times 10^{-5}$ & $0.984$ & $[0.969, 0.993]$ \\
        \bottomrule
    \end{tabular}
\end{table}

\subsubsection{Contraction and contrasts}
\label{app:results-contraction}

We report endpoint accuracy at a fixed scientific class and confidence intervals for treatment contrasts.
Table~\ref{tab:accuracy} keeps the oracle-calibrated population interval fixed while varying sample size.
Table~\ref{tab:contrast} combines separately protected arm intervals.
Direct projection over the same product of armwise input regions gives the same contrast endpoints; this experiment does not evaluate a different joint input-region construction.

\begin{table}[tbp]
    \centering
    \caption{Cell CP contraction for the oracle-calibrated class, arm zero, sharp width $0.0853$, $500$ replications per row.
    Plug-in error is the Hausdorff distance with its MCSE.
    There were no confidence-program failures in any setting ($0/500$ per setting; 95\% MC interval $[0.000,0.007]$).
    Plug-in failures numbered $7/500$ at $n=50$ (rate $0.014$; 95\% MC interval $[0.006,0.029]$) and $1/500$ at $n=100$ (rate $0.002$; 95\% MC interval $[0.000,0.011]$), with none thereafter.
    Their physical-range fallbacks remain in the summaries.}
    \label{tab:accuracy}
    \small
    \begin{tabular}{@{}rrrrrr@{}}
        \toprule
        $n$ & Plug-in error & MCSE & Outer width & Enlargement & Set coverage \\
        \midrule
        $50$ & $0.0673$ & $0.00309$ & $0.513$ & $0.428$ & $1.00$ \\
        $100$ & $0.0403$ & $0.00155$ & $0.396$ & $0.310$ & $0.998$ \\
        $250$ & $0.0272$ & $8.50\times 10^{-4}$ & $0.283$ & $0.198$ & $0.998$ \\
        $500$ & $0.0185$ & $5.75\times 10^{-4}$ & $0.225$ & $0.140$ & $1.00$ \\
        $1000$ & $0.0121$ & $3.68\times 10^{-4}$ & $0.184$ & $0.0989$ & $1.00$ \\
        $2500$ & $0.00800$ & $2.50\times 10^{-4}$ & $0.148$ & $0.0625$ & $1.00$ \\
        $10000$ & $0.00395$ & $1.23\times 10^{-4}$ & $0.116$ & $0.0312$ & $1.00$ \\
        \bottomrule
    \end{tabular}
\end{table}

\begin{table}[tbp]
    \centering
    \caption{Contrast intervals from independent $n=1000$ arm samples, $500$ paired replications, error $0.025$ per arm.
    Rescue is the arm-averaged rescue probability.}
    \label{tab:contrast}
    \small
    \begin{tabular}{@{}rlrrrr@{}}
        \toprule
        Rescue & Procedure & Sharp width & Mean width & MCSE & Set coverage \\
        \midrule
        $0.500$ & Aggregate CP & $0.283$ & $0.436$ & $1.31\times 10^{-4}$ & $1.00$ \\
        $0.500$ & Endpoint bootstrap & $0.283$ & $0.414$ & $2.03\times 10^{-4}$ & $0.998$ \\
        $0.500$ & Cell CP & $0.283$ & $0.481$ & $1.34\times 10^{-4}$ & $1.00$ \\
        $0.750$ & Aggregate CP & $0.346$ & $0.490$ & $1.04\times 10^{-4}$ & $1.00$ \\
        $0.750$ & Endpoint bootstrap & $0.346$ & $0.472$ & $1.74\times 10^{-4}$ & $0.998$ \\
        $0.750$ & Cell CP & $0.346$ & $0.531$ & $1.15\times 10^{-4}$ & $1.00$ \\
        \bottomrule
    \end{tabular}
\end{table}

\subsubsection{Estimated propensities}
\label{app:empirical-ps}

We report the effect of baseline propensity estimation on the input region and its confidence projection.
Tables~\ref{tab:ps-linear}--\ref{tab:ps-nonlinear} give widths, set coverage, input coverage, and observed-mean errors; Figure~\ref{fig:propensity-coverage} also includes $N=250$.
Set coverage can occur without simultaneous input coverage, consistent with the one-way inclusion in Corollary~\ref{cor:coverage}.
The forest-weight rows are diagnostics for unaugmented weighting and are not a numerical evaluation of augmented moment inference.

\begin{table}[tbp]
    \centering
    \caption{Linear assignment, $250$ paired datasets per sample size, confidence projection with different input constructions.
    At $N=50$ the correctly specified logistic fit failed in $1/250$ datasets (rate $0.004$; 95\% MC interval $[0.000,0.022]$), giving physical-range fallbacks for both logistic procedures.
    No other failures occurred; every other method--sample-size combination had $0/250$ failures (95\% MC interval $[0.000,0.015]$).
    Input coverage includes physical primitive-region fallbacks.}
    \label{tab:ps-linear}
    \small
    \setlength{\tabcolsep}{4pt}
    \begin{tabular}{@{}rlrrrrr@{}}
        \toprule
        $N$ & Procedure & Width & MCSE & Set cov. & Input cov. & Mean $\mu$ error \\
        \midrule
        $50$ & Oracle EB & $1.00$ & $0.00$ & $1.00$ & $1.00$ & $-0.00143$ \\
        $50$ & Oracle Wald & $0.824$ & $0.00475$ & $0.992$ & $0.140$ & $-0.00143$ \\
        $50$ & Logit naive & $0.868$ & $0.00622$ & $0.984$ & $0.144$ & $-0.00282$ \\
        $50$ & Logit sandwich & $0.715$ & $0.00637$ & $0.960$ & $0.128$ & $-0.00282$ \\
        $50$ & Main-effects sandwich & $0.728$ & $0.00553$ & $0.980$ & $0.136$ & $0.00182$ \\
        $50$ & Forest naive & $0.859$ & $0.00533$ & $0.992$ & $0.152$ & $0.0255$ \\
        $50$ & Forest EB & $1.00$ & $0.00$ & $1.00$ & $1.00$ & $0.0255$ \\
        $1000$ & Oracle EB & $0.479$ & $4.97\times 10^{-4}$ & $1.00$ & $1.00$ & $-0.00167$ \\
        $1000$ & Oracle Wald & $0.317$ & $4.33\times 10^{-4}$ & $1.00$ & $0.888$ & $-0.00167$ \\
        $1000$ & Logit naive & $0.319$ & $4.16\times 10^{-4}$ & $1.00$ & $0.892$ & $-0.00161$ \\
        $1000$ & Logit sandwich & $0.279$ & $2.52\times 10^{-4}$ & $1.00$ & $0.884$ & $-0.00161$ \\
        $1000$ & Main-effects sandwich & $0.281$ & $2.50\times 10^{-4}$ & $1.00$ & $0.880$ & $-0.00139$ \\
        $1000$ & Forest naive & $0.338$ & $0.00120$ & $0.996$ & $0.900$ & $0.0210$ \\
        $1000$ & Forest EB & $1.00$ & $0.00$ & $1.00$ & $1.00$ & $0.0210$ \\
        $10000$ & Oracle EB & $0.226$ & $9.42\times 10^{-5}$ & $1.00$ & $1.00$ & $4.70\times 10^{-4}$ \\
        $10000$ & Oracle Wald & $0.199$ & $8.92\times 10^{-5}$ & $0.996$ & $0.960$ & $4.70\times 10^{-4}$ \\
        $10000$ & Logit naive & $0.199$ & $6.77\times 10^{-5}$ & $0.996$ & $0.964$ & $8.46\times 10^{-6}$ \\
        $10000$ & Logit sandwich & $0.187$ & $6.03\times 10^{-5}$ & $0.992$ & $0.956$ & $8.46\times 10^{-6}$ \\
        $10000$ & Main-effects sandwich & $0.187$ & $6.03\times 10^{-5}$ & $0.992$ & $0.956$ & $1.92\times 10^{-5}$ \\
        $10000$ & Forest naive & $0.200$ & $7.36\times 10^{-5}$ & $0.996$ & $0.984$ & $0.00370$ \\
        $10000$ & Forest EB & $0.389$ & $7.84\times 10^{-5}$ & $1.00$ & $1.00$ & $0.00370$ \\
        \bottomrule
    \end{tabular}
\end{table}

\begin{table}[tbp]
    \centering
    \caption{Nonlinear assignment, $250$ paired datasets per sample size.
    The main-effects fit omits the assignment interactions.
    At $N=50$ the correctly specified logistic fit failed in $8/250$ datasets (rate $0.032$; 95\% MC interval $[0.014,0.062]$), giving physical-range fallbacks for both logistic procedures.
    No other failures occurred; every other method--sample-size combination had $0/250$ failures (95\% MC interval $[0.000,0.015]$).
    Main-effects sandwich at $N=10000$ has target coverage $1.000$ (95\% MC interval $[0.985,1.000]$) despite set coverage $0.008$ (95\% MC interval $[0.001,0.029]$).}
    \label{tab:ps-nonlinear}
    \small
    \setlength{\tabcolsep}{4pt}
    \begin{tabular}{@{}rlrrrrr@{}}
        \toprule
        $N$ & Procedure & Width & MCSE & Set cov. & Input cov. & Mean $\mu$ error \\
        \midrule
        $50$ & Oracle EB & $1.00$ & $0.00$ & $1.00$ & $1.00$ & $-0.0102$ \\
        $50$ & Oracle Wald & $0.859$ & $0.00694$ & $0.968$ & $0.0640$ & $-0.0102$ \\
        $50$ & Logit naive & $0.894$ & $0.00774$ & $0.968$ & $0.0880$ & $-0.0150$ \\
        $50$ & Logit sandwich & $0.748$ & $0.00858$ & $0.940$ & $0.0880$ & $-0.0150$ \\
        $50$ & Main-effects sandwich & $0.739$ & $0.00638$ & $0.976$ & $0.0560$ & $-0.0425$ \\
        $50$ & Forest naive & $0.869$ & $0.00621$ & $0.988$ & $0.0680$ & $-0.0285$ \\
        $50$ & Forest EB & $1.00$ & $0.00$ & $1.00$ & $1.00$ & $-0.0285$ \\
        $1000$ & Oracle EB & $0.587$ & $8.05\times 10^{-4}$ & $1.00$ & $1.00$ & $-0.00245$ \\
        $1000$ & Oracle Wald & $0.347$ & $6.74\times 10^{-4}$ & $1.00$ & $0.884$ & $-0.00245$ \\
        $1000$ & Logit naive & $0.351$ & $7.11\times 10^{-4}$ & $1.00$ & $0.912$ & $-1.86\times 10^{-5}$ \\
        $1000$ & Logit sandwich & $0.295$ & $3.99\times 10^{-4}$ & $1.00$ & $0.892$ & $-1.86\times 10^{-5}$ \\
        $1000$ & Main-effects sandwich & $0.283$ & $2.69\times 10^{-4}$ & $0.936$ & $0.668$ & $-0.0437$ \\
        $1000$ & Forest naive & $0.361$ & $0.00201$ & $1.00$ & $0.904$ & $0.00831$ \\
        $1000$ & Forest EB & $1.00$ & $0.00$ & $1.00$ & $1.00$ & $0.00831$ \\
        $10000$ & Oracle EB & $0.245$ & $1.33\times 10^{-4}$ & $1.00$ & $1.00$ & $9.17\times 10^{-4}$ \\
        $10000$ & Oracle Wald & $0.208$ & $1.23\times 10^{-4}$ & $1.00$ & $0.976$ & $9.17\times 10^{-4}$ \\
        $10000$ & Logit naive & $0.208$ & $9.41\times 10^{-5}$ & $1.00$ & $0.968$ & $5.24\times 10^{-4}$ \\
        $10000$ & Logit sandwich & $0.191$ & $7.66\times 10^{-5}$ & $0.996$ & $0.952$ & $5.24\times 10^{-4}$ \\
        $10000$ & Main-effects sandwich & $0.188$ & $6.87\times 10^{-5}$ & $0.00800$ & $0.00$ & $-0.0423$ \\
        $10000$ & Forest naive & $0.211$ & $1.11\times 10^{-4}$ & $1.00$ & $0.960$ & $0.00604$ \\
        $10000$ & Forest EB & $0.402$ & $1.27\times 10^{-4}$ & $1.00$ & $1.00$ & $0.00604$ \\
        \bottomrule
    \end{tabular}
\end{table}

\begin{figure}[tbp]
    \centering
    \includegraphics[width=0.49\linewidth]{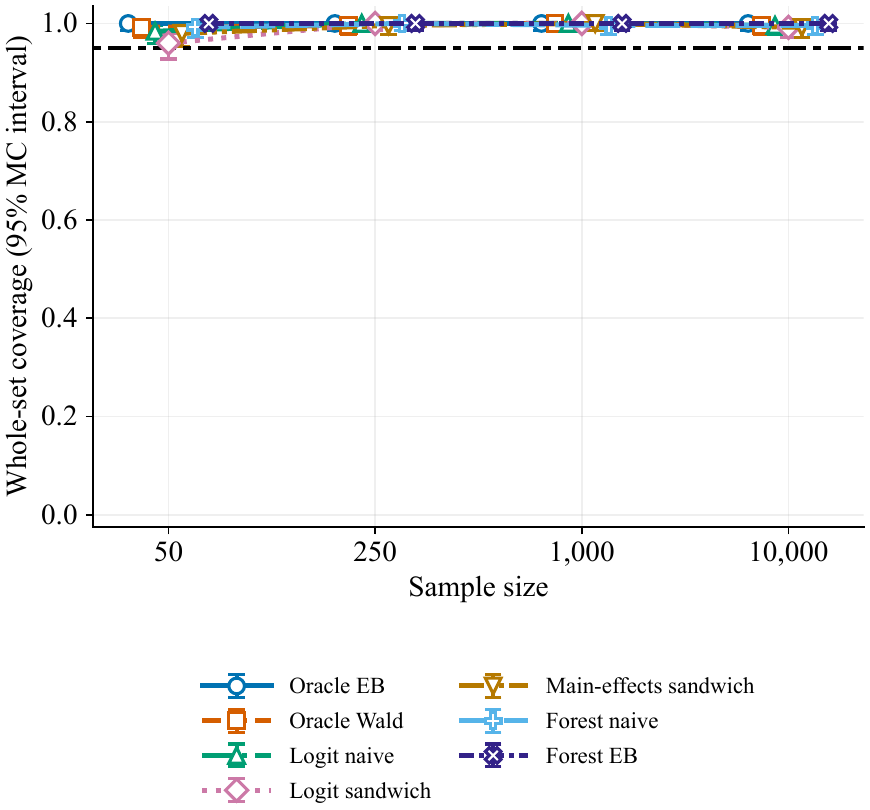}\hfill
    \includegraphics[width=0.49\linewidth]{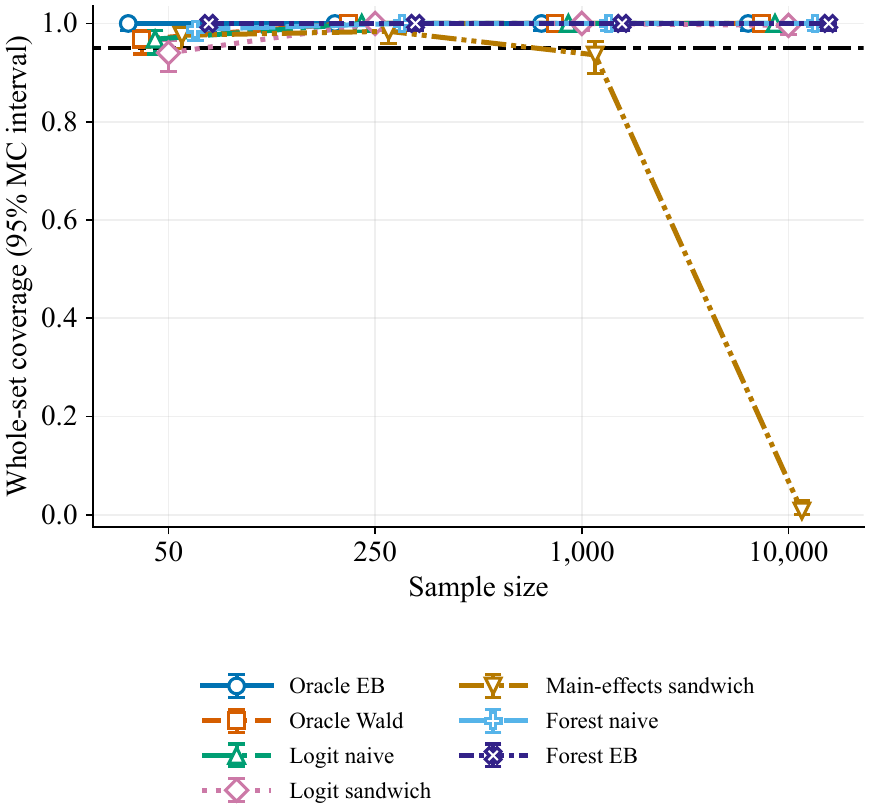}
    \caption{Set coverage under linear (left) and nonlinear (right) assignment for the input constructions of Appendix~\ref{app:estimated-ps}.
    Main-effects sandwich is misspecified on the right.
    Bars are $95\%$ binomial Monte Carlo intervals; the dashed line marks $0.95$.}
    \label{fig:propensity-coverage}
\end{figure}

\subsubsection{Continuous histories and partitions}
\label{app:results-partitions}

We report continuous-history configurations and partition-specific precision.
Table~\ref{tab:continuous} summarizes additional generating laws, while Tables~\ref{tab:partition-oracle} and~\ref{tab:partition} separate each partition's sharp width from its sampling enlargement.
Only the first five baseline coordinates enter the continuous generating equations, and the analysis uses prespecified low-dimensional cells.
The $d_W=50$ configuration evaluates this reduced analysis in the presence of additional recorded covariates.
Reference-batch standard errors (SEs) describe simulation uncertainty in the indicated reference quantities and are separate from trial-replication Monte Carlo errors.
For the four configurations in Table~\ref{tab:continuous}, the lower- and upper-endpoint reference standard errors are $(0.000111,0.000099)$, $(0.000111,0.000096)$, $(0.000103,0.000078)$, and $(0.000088,0.000078)$, respectively.
Cell CP covers the reference sharp interval in all $250$ replications of each configuration (95\% MC interval $[0.985,1.000]$), and the inward and outward endpoint shifts by $2.58$ reference standard errors leave these coverage frequencies unchanged.
For the heterogeneous-cap law in Table~\ref{tab:partition}, the shifts change coverage at $n=10000$, $M=1$ from $0.996$ to $0.992$ outward and $1.000$ inward, with corresponding 95\% MC intervals $[0.971,0.999]$ and $[0.985,1.000]$.
Coverage at the other displayed heterogeneous-cap settings is unchanged by either shift.

\begin{table}[tbp]
    \centering
    \caption{Continuous-history configurations with $250$ replications, $n=1000$, $4K$ visit/risk/exceedance cells, and the signed $\Gamma=0.2$ budget.
    Parentheses give reference-batch standard errors for $\theta_0$; endpoint reference errors and the endpoint-shift diagnostic are reported in the accompanying text.}
    \label{tab:continuous}
    \small
    \begin{tabular}{@{}rrrrrr@{}}
        \toprule
        $K$ & $d_W$ & $\rho$ & Sharp width & Cell CP width & $\theta_0$ (reference SE) \\
        \midrule
        $5$ & $5$ & $0.00$ & $0.142$ & $0.248$ & $0.462$ ($1.13\times 10^{-4}$) \\
        $5$ & $5$ & $0.650$ & $0.141$ & $0.248$ & $0.462$ ($9.34\times 10^{-5}$) \\
        $5$ & $50$ & $0.350$ & $0.141$ & $0.248$ & $0.462$ ($9.23\times 10^{-5}$) \\
        $8$ & $5$ & $0.350$ & $0.142$ & $0.252$ & $0.462$ ($8.57\times 10^{-5}$) \\
        \bottomrule
    \end{tabular}
\end{table}

\begin{table}[tbp]
    \centering
    \caption{Partition-specific population references.
    True norm is the generating-law budget norm $\{\sum_{j:p_j>0}t_j^2/p_j\}^{1/2}$.
    Endpoint MCSEs are reference-batch standard errors.}
    \label{tab:partition-oracle}
    \small
    \begin{tabular}{@{}lrrrrr@{}}
        \toprule
        Outcome law & $M$ & Sharp width & True norm & Lower MCSE & Upper MCSE \\
        \midrule
        Original & $1$ & $0.141$ & $0.146$ & $1.14\times 10^{-4}$ & $1.03\times 10^{-4}$ \\
        Original & $5$ & $0.141$ & $0.146$ & $1.14\times 10^{-4}$ & $1.03\times 10^{-4}$ \\
        Original & $10$ & $0.141$ & $0.147$ & $1.14\times 10^{-4}$ & $1.03\times 10^{-4}$ \\
        Original & $20$ & $0.141$ & $0.147$ & $1.14\times 10^{-4}$ & $1.03\times 10^{-4}$ \\
        Original & $40$ & $0.141$ & $0.147$ & $1.14\times 10^{-4}$ & $1.03\times 10^{-4}$ \\
        Original & $80$ & $0.141$ & $0.147$ & $1.14\times 10^{-4}$ & $1.03\times 10^{-4}$ \\
        Heterogeneous caps & $1$ & $0.141$ & $0.0574$ & $1.90\times 10^{-4}$ & $2.12\times 10^{-4}$ \\
        Heterogeneous caps & $5$ & $0.141$ & $0.0579$ & $1.90\times 10^{-4}$ & $2.06\times 10^{-4}$ \\
        Heterogeneous caps & $10$ & $0.0949$ & $0.0849$ & $1.90\times 10^{-4}$ & $1.76\times 10^{-4}$ \\
        Heterogeneous caps & $20$ & $0.0949$ & $0.0850$ & $1.90\times 10^{-4}$ & $1.76\times 10^{-4}$ \\
        Heterogeneous caps & $40$ & $0.0949$ & $0.0850$ & $1.90\times 10^{-4}$ & $1.76\times 10^{-4}$ \\
        Heterogeneous caps & $80$ & $0.0949$ & $0.0850$ & $1.90\times 10^{-4}$ & $1.76\times 10^{-4}$ \\
        \bottomrule
    \end{tabular}
\end{table}

\begin{table}[tbp]
    \centering
    \caption{Cell CP partition comparison under the heterogeneous-cap law, $250$ paired replications per sample size.
    Coverage and plug-in error use each partition's reference; empty fraction is the mean fraction of cells with zero sample count.}
    \label{tab:partition}
    \small
    \setlength{\tabcolsep}{4pt}
    \begin{tabular}{@{}rrrrrrrr@{}}
        \toprule
        $n$ & $M$ & Sharp width & Outer width & Enlargement & Plug-in error & Empty frac. & Coverage \\
        \midrule
        $250$ & $1$ & $0.141$ & $0.317$ & $0.176$ & $0.0282$ & $0.00$ & $0.992$ \\
        $250$ & $5$ & $0.141$ & $0.338$ & $0.197$ & $0.0285$ & $0.00$ & $0.992$ \\
        $250$ & $10$ & $0.0949$ & $0.349$ & $0.254$ & $0.0276$ & $0.00$ & $0.996$ \\
        $250$ & $20$ & $0.0949$ & $0.361$ & $0.266$ & $0.0277$ & $0.00860$ & $0.996$ \\
        $250$ & $40$ & $0.0949$ & $0.373$ & $0.278$ & $0.0279$ & $0.0807$ & $0.996$ \\
        $250$ & $80$ & $0.0949$ & $0.384$ & $0.289$ & $0.0287$ & $0.257$ & $0.996$ \\
        $1000$ & $1$ & $0.141$ & $0.229$ & $0.0878$ & $0.0135$ & $0.00$ & $0.992$ \\
        $1000$ & $5$ & $0.141$ & $0.240$ & $0.0991$ & $0.0136$ & $0.00$ & $0.996$ \\
        $1000$ & $10$ & $0.0949$ & $0.245$ & $0.151$ & $0.0133$ & $0.00$ & $0.996$ \\
        $1000$ & $20$ & $0.0949$ & $0.252$ & $0.157$ & $0.0133$ & $0.00$ & $1.00$ \\
        $1000$ & $40$ & $0.0949$ & $0.258$ & $0.163$ & $0.0133$ & $2.00\times 10^{-4}$ & $1.00$ \\
        $1000$ & $80$ & $0.0949$ & $0.264$ & $0.169$ & $0.0134$ & $0.00900$ & $1.00$ \\
        $10000$ & $1$ & $0.141$ & $0.169$ & $0.0277$ & $0.00447$ & $0.00$ & $0.996$ \\
        $10000$ & $5$ & $0.141$ & $0.172$ & $0.0319$ & $0.00451$ & $0.00$ & $1.00$ \\
        $10000$ & $10$ & $0.0949$ & $0.161$ & $0.0664$ & $0.00446$ & $0.00$ & $1.00$ \\
        $10000$ & $20$ & $0.0949$ & $0.174$ & $0.0793$ & $0.00446$ & $0.00$ & $1.00$ \\
        $10000$ & $40$ & $0.0949$ & $0.178$ & $0.0833$ & $0.00446$ & $0.00$ & $1.00$ \\
        $10000$ & $80$ & $0.0949$ & $0.180$ & $0.0852$ & $0.00446$ & $0.00$ & $1.00$ \\
        \bottomrule
    \end{tabular}
\end{table}

\subsubsection{Boundary, calibration, and radius diagnostics}
\label{app:results-diagnostics}

We report the separate rare-cell, external-calibration, and radius-sensitivity experiments.
Table~\ref{tab:rare-upper} supplements the zero-count diagnostic with mean upper limits under its stated tail conventions.
Table~\ref{tab:external-precision} varies calibration precision under the idealized source law; its no-calibration row defines a different scientific class.
Table~\ref{tab:radius-grid} and Figure~\ref{fig:sensitivity} distinguish membership of the generating law from inference for the imposed class.

\begin{table}[tbp]
    \centering
    \caption{Mean upper limit for a single rare cell, $n=1000$, $20000$ replications per row, with the lower target endpoint fixed at zero.
    At $np=0.5$ the zero-count frequency is $0.605$ (95\% MC interval $[0.598,0.612]$).
    Wald and bootstrap each have coverage $0.395$ (95\% MC interval $[0.388,0.402]$), while all four bounded-moment procedures cover in $20000/20000$ replications (95\% MC interval $[0.9998,1.0000]$).
    Tail conventions are specified in Appendix~\ref{app:external-rare}.}
    \label{tab:rare-upper}
    \small
    \begin{tabular}{@{}rrrrrrr@{}}
        \toprule
        $np$ & CP & KL & Bernstein & Hoeffding & Wald & Bootstrap \\
        \midrule
        $0.00$ & $0.00368$ & $0.00368$ & $0.0102$ & $0.0429$ & $0.00$ & $0.00$ \\
        $0.500$ & $0.00460$ & $0.00503$ & $0.0120$ & $0.0434$ & $0.00136$ & $9.95\times 10^{-4}$ \\
        $2.00$ & $0.00709$ & $0.00830$ & $0.0160$ & $0.0449$ & $0.00447$ & $0.00370$ \\
        $10.0$ & $0.0182$ & $0.0210$ & $0.0294$ & $0.0529$ & $0.0161$ & $0.0148$ \\
        \bottomrule
    \end{tabular}
\end{table}

\begin{table}[tbp]
    \centering
    \caption{External calibration with $n=1000$ and $500$ trial replications paired across settings.
    Finite sources combine trial Clopper--Pearson and external Bernstein limits under one error allocation; Exact uses oracle targets, and None omits calibration and defines a different scientific class.}
    \label{tab:external-precision}
    \small
    \begin{tabular}{@{}lrr@{}}
        \toprule
        External calibration & Mean width & Set coverage \\
        \midrule
        None & $0.237$ & $1.00$ \\
        $250$ & $0.239$ & $1.00$ \\
        $2000$ & $0.239$ & $1.00$ \\
        $20000$ & $0.223$ & $1.00$ \\
        Exact & $0.184$ & $1.00$ \\
        \bottomrule
    \end{tabular}
\end{table}

\begin{table}[tbp]
    \centering
    \caption{Cell CP radius sensitivity with $500$ replications per sample size and radii paired within datasets.
    Membership refers to the generating full-data law.
    Set coverage concerns the imposed class even when that class excludes the generating no-rescue mean.}
    \label{tab:radius-grid}
    \small
    \setlength{\tabcolsep}{4pt}
    \begin{tabular}{@{}rrlrrrr@{}}
        \toprule
        $n$ & $\Gamma$ & True-law member & Sharp width & Outer width & Set cov. & Target cov. \\
        \midrule
        $50$ & $0.00$ & No & $0.00$ & $0.384$ & $0.996$ & $0.966$ \\
        $50$ & $0.0500$ & No & $0.0359$ & $0.427$ & $0.996$ & $1.00$ \\
        $50$ & $0.100$ & Yes & $0.0718$ & $0.469$ & $0.996$ & $1.00$ \\
        $50$ & $0.200$ & Yes & $0.144$ & $0.554$ & $0.996$ & $1.00$ \\
        $50$ & $0.350$ & Yes & $0.251$ & $0.681$ & $0.996$ & $1.00$ \\
        $1000$ & $0.00$ & No & $0.00$ & $0.0873$ & $0.998$ & $0.0800$ \\
        $1000$ & $0.0500$ & No & $0.0359$ & $0.125$ & $0.998$ & $0.848$ \\
        $1000$ & $0.100$ & Yes & $0.0718$ & $0.162$ & $0.998$ & $1.00$ \\
        $1000$ & $0.200$ & Yes & $0.144$ & $0.237$ & $0.998$ & $1.00$ \\
        $1000$ & $0.350$ & Yes & $0.251$ & $0.350$ & $0.998$ & $1.00$ \\
        $10000$ & $0.00$ & No & $0.00$ & $0.0275$ & $0.994$ & $0.00$ \\
        $10000$ & $0.0500$ & No & $0.0359$ & $0.0639$ & $0.994$ & $0.00$ \\
        $10000$ & $0.100$ & Yes & $0.0718$ & $0.100$ & $0.996$ & $1.00$ \\
        $10000$ & $0.200$ & Yes & $0.144$ & $0.173$ & $0.996$ & $1.00$ \\
        $10000$ & $0.350$ & Yes & $0.251$ & $0.282$ & $0.998$ & $1.00$ \\
        \bottomrule
    \end{tabular}
\end{table}

\begin{figure}[tbp]
    \centering
    \includegraphics[width=0.49\linewidth]{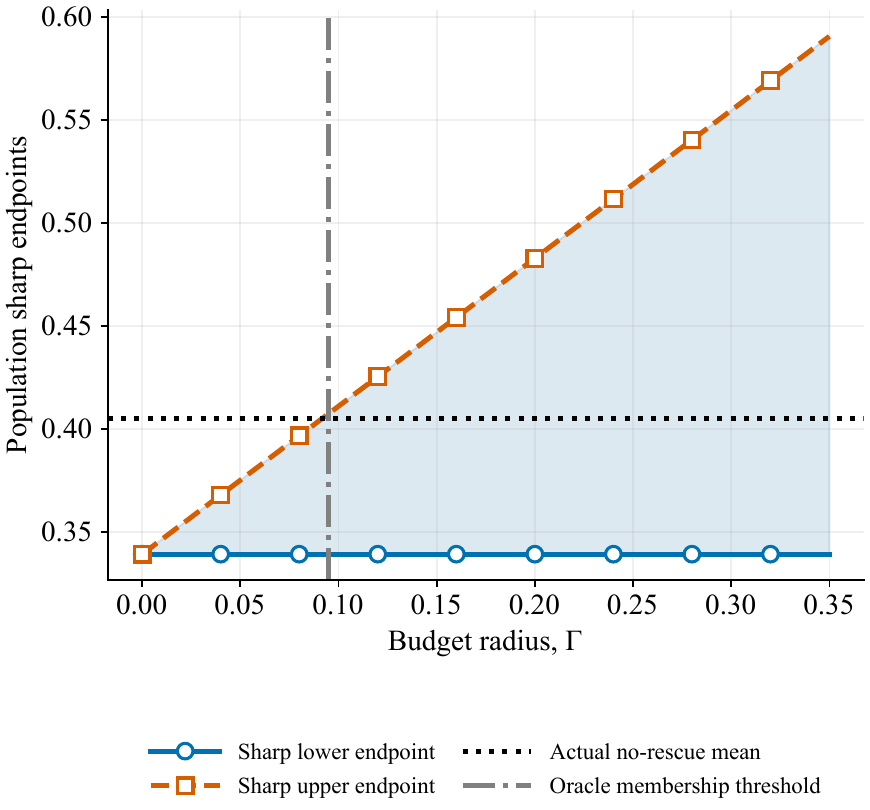}\hfill
    \includegraphics[width=0.49\linewidth]{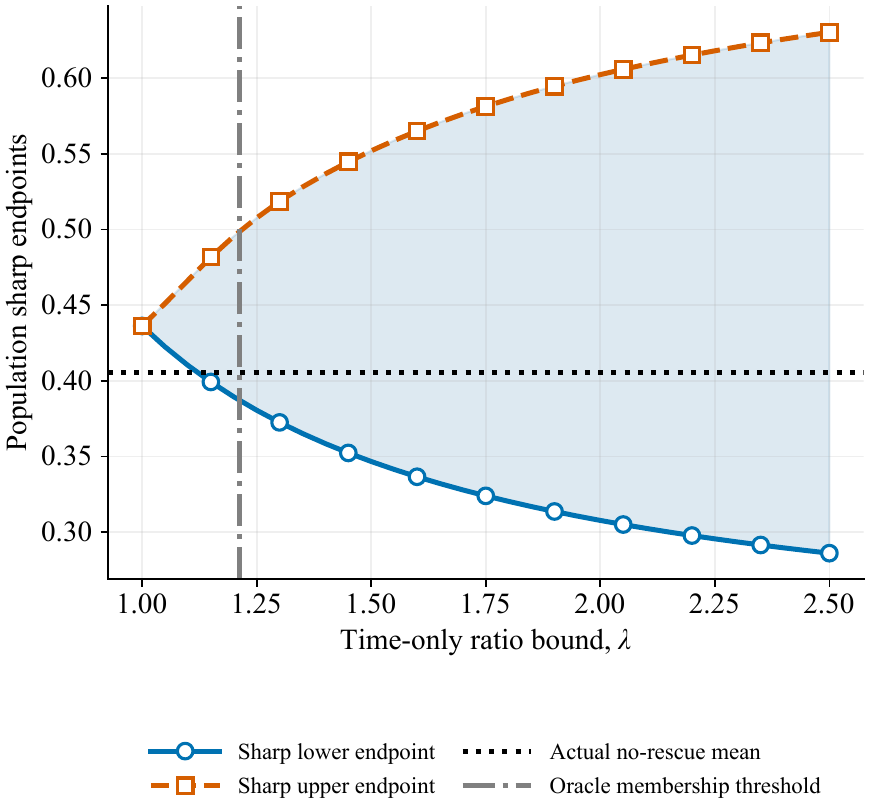}
    \caption{Population sharp endpoints for the signed budget model (left) and the time-only ratio model (right).
    Dotted lines mark the generating no-rescue mean; vertical lines mark the smallest radius and ratio bound compatible with the generating law, $\Gamma=0.095003$ and $\lambda=1.211672$, respectively.}
    \label{fig:sensitivity}
\end{figure}

\subsubsection{Bounded continuous outcomes}
\label{app:results-continuous-outcome}

We compare confidence construction when the observed outcomes are bounded but not binary.
Table~\ref{tab:continuous-outcome} holds the scientific model and population sharp interval fixed while varying sample size and the input-limit construction.
The population sharp width is $0.143673$, so width differences concern sampling and numerical enlargement rather than changes in identification.
Mean widths decrease with sample size for all four procedures.
At every sample size, the endpoint bootstrap gives the shortest mean interval, followed by Cell Wald, Hybrid EB--CP, and Hybrid Hoeffding--CP.
At $n=1000$, the paired width difference between Hybrid Hoeffding--CP and Hybrid EB--CP is $0.03543$ (MCSE $0.00004$), and that between bootstrap and Hybrid EB--CP is $-0.06084$ (MCSE $0.00004$).

Both hybrid procedures cover the sharp interval in all $500$ replications at every sample size.
The table reports exact binomial MC intervals for these frequencies and for the Wald and bootstrap coverages.
These comparisons retain the finite-sample and regular-law qualifications stated in Appendix~\ref{app:inference-comparators}.
The continuous-outcome projections use the fixed grid $h=1/128$ and outward objective certificates, rather than an endpoint-gap stopping criterion.

\begin{table}[tbp]
    \centering
    \caption{Bounded continuous outcomes with beta precision $\phi=10$, arm zero, $M=5$, and signed $\Gamma=0.2$ budget.
    Each sample size uses $500$ paired replications.
    The population sharp width is the binary reference value $0.143673$ in every row.
    Hybrid procedures use CP limits on probability features and the indicated bounded-variable limits on outcome features.
    Every method--sample-size combination has $0/500$ physical-range fallbacks (95\% MC interval $[0.000,0.007]$).
    Each LP-based combination also has $0/500$ empty confidence programs, unavailable outward certificates, and numerical stopping flags, with the same MC interval for each frequency.}
    \label{tab:continuous-outcome}
    \small
    \begin{tabular}{@{}rlrrrr@{}}
        \toprule
        $n$ & Procedure & Mean width & MCSE & Set coverage & $95\%$ MC interval \\
        \midrule
        $250$ & Hybrid EB--CP & $0.374$ & $2.14\times 10^{-4}$ & $1.00$ & $[0.993, 1.00]$ \\
        $250$ & Hybrid Hoeffding--CP & $0.380$ & $1.72\times 10^{-4}$ & $1.00$ & $[0.993, 1.00]$ \\
        $250$ & Cell Wald & $0.224$ & $1.97\times 10^{-4}$ & $1.00$ & $[0.993, 1.00]$ \\
        $250$ & Endpoint bootstrap & $0.187$ & $2.06\times 10^{-4}$ & $0.952$ & $[0.929, 0.969]$ \\
        $1000$ & Hybrid EB--CP & $0.226$ & $9.70\times 10^{-5}$ & $1.00$ & $[0.993, 1.00]$ \\
        $1000$ & Hybrid Hoeffding--CP & $0.262$ & $9.57\times 10^{-5}$ & $1.00$ & $[0.993, 1.00]$ \\
        $1000$ & Cell Wald & $0.184$ & $9.57\times 10^{-5}$ & $0.998$ & $[0.989, 1.00]$ \\
        $1000$ & Endpoint bootstrap & $0.165$ & $1.05\times 10^{-4}$ & $0.950$ & $[0.927, 0.967]$ \\
        $10000$ & Hybrid EB--CP & $0.163$ & $3.01\times 10^{-5}$ & $1.00$ & $[0.993, 1.00]$ \\
        $10000$ & Hybrid Hoeffding--CP & $0.181$ & $3.07\times 10^{-5}$ & $1.00$ & $[0.993, 1.00]$ \\
        $10000$ & Cell Wald & $0.156$ & $3.02\times 10^{-5}$ & $1.00$ & $[0.993, 1.00]$ \\
        $10000$ & Endpoint bootstrap & $0.151$ & $3.28\times 10^{-5}$ & $0.962$ & $[0.941, 0.977]$ \\
        \bottomrule
    \end{tabular}
\end{table}

\clearpage
\bibliographystyle{apalike}
\bibliography{bibliography}

\end{document}